\numberwithin{figure}{section}
\numberwithin{table}{section}
\numberwithin{equation}{section}
\title{Overcoming Intensity Limits for Long-Distance Quantum Key Distribution}
\author{Ibrahim Almosallam\\
Ministry of Communication and Information Technology, Riyadh 12382, Saudi Arabia}
\date{}
\newtheorem{theorem}{Theorem}[section]
\newtheorem{lemma}[theorem]{Lemma}
\newtheorem{corollary}[theorem]{Corollary}
\begin{document}

\maketitle

\begin{abstract}
Quantum Key Distribution (QKD) enables the sharing of cryptographic keys secured by quantum mechanics. The BB84 protocol~\cite{bennett1984quantum, shor2000simple} assumed single-photon sources, but practical systems rely on weak coherent pulses vulnerable to photon-number-splitting (PNS) attacks~\cite{huttner1995quantum}. The Gottesman-Lo-Lütkenhaus-Preskill (GLLP) framework addressed these imperfections, deriving secure key rate bounds under limited PNS scenarios. The Decoy-state protocol~\cite{lo2005decoy} further improved performance by refining single-photon yield estimates, but still considered multi-photon states as insecure, thereby limiting intensities and constraining key rate and distance. More recently, finite-key security bounds for decoy-state QKD have been extended to address general attacks~\cite{lim2013concise}, ensuring security against adversaries capable of exploiting arbitrary strategies. In this work, we focus on a specific class of attacks, the generalized PNS attack, and demonstrate that higher pulse intensities can be securely used by employing Bayesian inference to estimate key parameters directly from observed data. By raising the pulse intensity to 10 photons, we achieve a 50-fold increase in key rate and a 62.2\% increase in operational range (about 200 km) compared to the decoy-state protocol. Furthermore, we accurately model after-pulsing using a Hidden Markov Model and reveal inaccuracies in decoy-state calculations that may produce erroneous key-rate estimates. While this methodology does not address all possible attacks, it provides a new approach to security proofs in QKD by shifting from worst-case assumption analysis to observation-dependent inference, advancing the reach and efficiency of discrete-variable QKD protocols.
\end{abstract}

\section{Introduction}
\label{sec:introduction}
QKD allows two parties, Alice and Bob, to share a secret key secured by quantum mechanics. The BB84 protocol ~\cite{bennett1984quantum} introduced QKD using single-photon polarization states and its unconditional security was proved in ~\cite{shor2000simple} via quantum error correction and entanglement purification, assuming ideal single-photon sources. True single-photon sources are hard to implement. Instead, most QKD systems use weak coherent pulses from attenuated lasers, which produce multi-photon states. These enable a photon-number-splitting (PNS) attack~\cite{huttner1995quantum}, where an eavesdropper, commonly referred to as Eve, intercepts some photons and forwards the rest, gaining key information undetected and compromising security.

The GLLP framework~\cite{gottesman2004security} addressed this by extending security proofs to imperfect devices and multi-photon pulses. It ensures secure QKD by treating multi-photon pulses as “tagged” and deriving a secure lower bound on the secret key rate that can be extracted from the sifted key:
\begin{align}
	K \geq \max\left((1 - \Delta) \left[1 - H^{}_{2}\left(\delta/\left(1 - \Delta\right)\right)\right] - H^{}_{2}(\delta),0\right),	\label{eq:GLLP}
\end{align}
where $\delta$ is the overall error rate, $\Delta$ the fraction of tagged photons, and $H^{}_{2}$ the binary Shannon entropy function.

Building on GLLP’s framework, the decoy-state protocol~\cite{hwang2003quantum,lo2004quantum,lo2005decoy,ma2005practical,wang2005beating} improved key rates and operational ranges by enhancing single-photon yield estimation through random pulse-intensity variations, and was first successfully demonstrated in~\cite{zhao2006experimental}. However, while it improves performance, the decoy-state approach relies on conservative assumptions for multi-photon pulses—assuming they are fully compromised—which forces conservative intensity choices, limits maximum transmission distances, and imposes high detector efficiency requirements.

While decoy-state protocols extend BB84, other approaches like SARG04 and Coherent One-Way (COW) also address PNS attacks. SARG04~\cite{scarani2004sarg04} uses the original four BB84 states but alters the sifting procedure, improving resilience to multi-photon attacks at the expense of reduced key rates. The COW protocol~\cite{barrett2005coherent} transmits weak coherent pulses in time-bin sequences and checks their coherence to detect eavesdropping. Although these methods enhance certain security aspects, they generally yield lower key rates and remain more susceptible to side-channel attacks compared to decoy-state protocols.

Despite its improved accuracy in estimating single-photon contributions, the decoy-state protocol still treats multi-photon pulses as insecure. By relying on additional intensity settings, it refines the statistical bounds on single-photon yields, but the need to suppress multi-photon probabilities enforces conservative intensity choices. This limits the maximum transmission distance and demands high detector efficiencies, posing significant experimental challenges for near-term implementations.

Advancements such as those presented in \cite{lim2013concise} have extended the security analysis by considering more general attack models, thereby further advancing the decoy-state framework. In summary, \cite{lim2013concise} broadens the scope of adversarial strategies beyond the conventional decoy approach.

Although previous works have accounted for various device imperfections, the correlations induced by after-pulsing are often overlooked. After-pulsing, where a detector emits spurious clicks following a real detection~\cite{restelli2013afterpulse}, induces correlations that violate the i.i.d. assumption. A common mitigation strategy is to increase the detector’s dead time, reducing after-pulse probabilities to negligible levels at the expense of lowering the achievable key rate. Even then, neglecting residual after-pulsing can still distort error estimates and weaken QKD security~\cite{yuan2007high}.

Twin-Field QKD (TF-QKD)~\cite{lucamarini2018overcoming} exploits quantum interference at a central node to surpass conventional rate-loss limits. By sending weak coherent pulses from both Alice and Bob to a measurement station, TF-QKD reduces reliance on trusted relays and enables secure key distribution over longer distances with minimal loss.

This work reframes QKD security from worst-case assumptions to a probabilistic inference model. Treating eavesdropper detection as an inference task, we use Bayesian methods to estimate key parameters like $\Delta$ from observed data. This tighter parameter estimation refines the GLLP formula, enabling higher pulse intensities and improving both key rate and operational range.

Instead of limiting source intensity to reduce Eve’s potential interference, we recast her presence as an inference problem. By modeling the QKD protocol as a probability distribution over system parameters and detection outcomes and applying Bayesian inference~\cite{gelman2013bayesian}, we directly estimate Eve’s interception rate $\Delta$ from observed data. Notably, our analysis focuses on the generalized PNS attack model—which captures a broader spectrum of adversarial strategies than the conventional PNS attack—rather than the more comprehensive attack model discussed in \cite{lim2013concise}. This approach eliminates overly conservative assumptions, extends PNS-attack analysis to arbitrary interception scenarios and channel conditions, and enables secure operation at higher pulse intensities when $\Delta$ is negligible.

To account for temporal correlations introduced by detector effects like after-pulsing and device variabilities, we incorporate a Hidden Markov Model (HMM)~\cite{rabiner1989tutorial}. This refinement yields more accurate error estimates and facilitates reliable key extraction.

Building on the GLLP framework without imposing intensity constraints, our approach supports greater operational distances, reduces reliance on high-efficiency detectors, and enhances key rates.

The remainder of this paper is organized as follows: \Cref{sec:probabilistic-modeling} presents the probabilistic model; \Cref{sec:bayesian-inference} outlines the Bayesian inference and HMM formulations; \Cref{sec:experimental-results} provides simulation results and comparisons to the decoy-state protocol; \Cref{sec:time_complexity} addresses time complexity; and \Cref{sec:conclusion} concludes with final remarks and potential future directions.

\section{Probabilistic Modeling}
\label{sec:probabilistic-modeling}

To better analyze and secure Quantum Key Distribution (QKD) systems, a deeper understanding of the underlying processes is essential. Realistic devices are subject to various sources of noise and inherent randomness, making it necessary to adopt a probabilistic modeling approach. In this section, we construct a comprehensive probabilistic framework that captures these complexities, accounting for factors such as detector efficiency, dark counts, beam-splitter misalignment, laser intensity, and fiber attenuation. By modeling these random processes, we derive the probabilities of different click events, laying the foundation for a more rigorous security analysis of QKD implementations.

We begin by modeling each component of the QKD setup separately to understand how the distribution of photons evolves as they travel from Alice to Bob. This modular approach enables us to systematically build the complete probability distribution, ensuring that we incorporate all relevant factors that influence the detection process.

Next, we extend our analysis to incorporate eavesdropping scenarios, specifically generalizing the photon-number-splitting (PNS) attack. By introducing Eve's parameters into our model, we calculate how the probabilities of click events change under various strategies she might employ. This generalization enables us to quantify the security of the QKD system under more realistic and flexible attack models.

Following this, we address the temporal dependencies introduced by after-pulsing. We develop a Hidden Markov Model (HMM) to capture these dependencies and refine our probability estimates accordingly. The HMM framework allows us to model correlations between detection events, which are crucial for accurately assessing the impact of after-pulsing on error rates and key generation.

Finally, we derive the error and gain probabilities from both the independent and identically distributed (i.i.d.) model and the HMM-based model. These probabilities are essential for key rate calculations and form the basis for evaluating the security and performance of the QKD protocol under realistic conditions. By the end of this section, we will have constructed a detailed and adaptable probabilistic model that serves as the foundation for our Bayesian inference and subsequent analysis.

\subsection{QKD Components}
Accurately modeling the QKD system requires a detailed understanding of the components involved, including detectors, beam splitters, and the interaction with quantum channels. In this section, we construct a probabilistic framework to model the behavior of a QKD system, incorporating various noise sources and potential attack vectors.

\subsubsection{Detectors}
The detection process in QKD systems can be characterized by three key probabilities:

\begin{table}[h!]
\centering
\begin{tabular}{|l|c|l|}
\hline
\textbf{Characteristic}         & \textbf{Notation} & \textbf{Description} \\ \hline
\textbf{After-pulse}       & $p^{}_{a}$        & Probability of a click given a click in the previous detection window. \\ 
\textbf{Efficiency}            & $p^{}_{c}$        & Probability of a click given a single photon. \\ 
\textbf{Dark Count}        & $p^{}_{d}$        & Probability of a click in the absence of a photon. \\ 
\textbf{Misalignment}      & $p^{}_{e}$        & Probability of detection error due to misalignment. \\ \hline
\end{tabular}
\caption{Key detector characteristics}
\label{tab:characteristics}
\end{table}

Throughout the paper, we will also use the notation $q=1-p$ for several parameters, for example $q^{}_{c} = 1-p^{}_{c}$, $q^{}_{d} = 1-p^{}_{d}$, and so on.

For detectors $D^{}_{0}$ and $D^{}_{1}$, we denote their respective click probabilities as $p^{}_{c^{}_{0}}$ and $p^{}_{c^{}_{1}}$, and similarly for dark count and after-pulse probabilities. When $n$ photons reach a detector and each has an independent detection probability $p^{}_{c}$, the detected signal photon count, $s$, is binomially distributed: $s \sim \text{Binomial}(n, p^{}_{c})$. Similarly, the occurrence of a dark count follows a Bernoulli distribution with probability $p^{}_{d}$: $d\sim \text{Bernoulli}(p^{}_{d})$. 

These distributions assume independence and fixed probabilities. To overcome this, we will use a Bayesian inference framework, modeling parameters as random variables with priors and updating posteriors based on observed data to incorporate uncertainties. Additionally, a hidden Markov model (HMM) will address dependencies between successive detection events, relaxing the independence assumption.

The probability of a click event at the detector, $D$, can be considered as the combined probability of detecting at least one photon or registering a dark count, see \Cref{fig:detector}. Given these considerations, we present the following lemma, which formalizes the combined click probability at a detector:
\begin{figure}[h]
    \centering
    \includegraphics[scale=0.25]{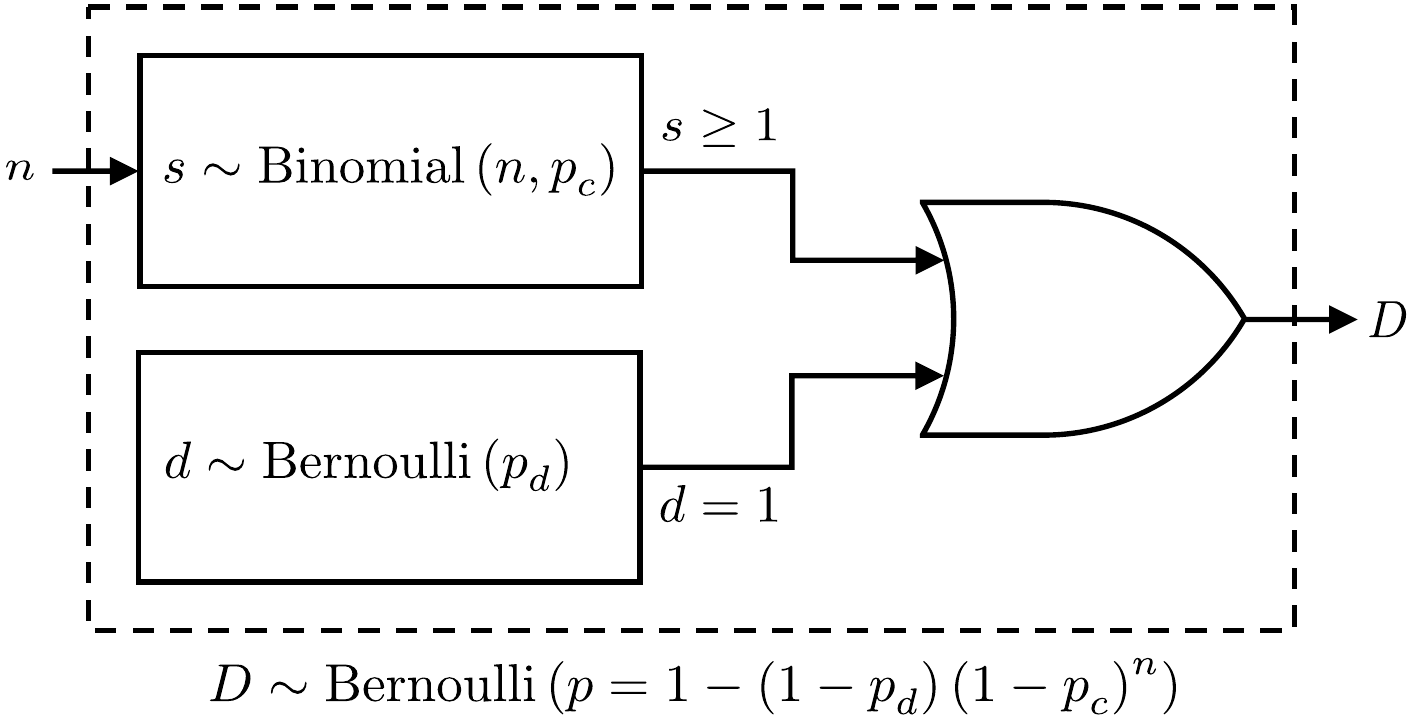}
    \caption{Schematic representation of the detection process in a single-photon detector considering $n$ incoming photons with a detector efficiency of $p^{}_{c}$ and dark count probability $p^{}_{d}$. The detection event $D$ will follow a Bernoulli distribution as per \Cref{lem:detector}}
    \label{fig:detector}
\end{figure}
\begin{lemma}
\label{lem:detector}
Given $n$ photons incident on a detector, each detected independently with probability $p^{}_{c}$, and a dark count probability $p^{}_{d}$, the detection event $D$, detecting at least one photon or registering a dark count, follows a Bernoulli distribution:

\begin{align}
	P(D\mid n, p^{}_{c}, p^{}_{d}) = \text{Bernoulli}(D\mid p = 1 - q^{}_{d} q^{n}_{c}),
\end{align}
where $q^{}_{c}=1-p^{}_{c}$ and $q^{}_{d}=1-p^{}_{d}$
\end{lemma}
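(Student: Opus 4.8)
The plan is to work with the complementary \emph{no-click} event rather than the click event directly. A click can be triggered either by a genuine photon detection or by a spurious dark count, so enumerating $\{D=1\}$ head-on would require inclusion--exclusion over two overlapping sources. I would first observe that $D$ is by construction a binary indicator — the detector either registers a click or it does not — so $D$ is automatically Bernoulli-distributed, and the whole task collapses to computing the single success parameter $p = P(D = 1)$.

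Next I would compute $P(D = 0)$. A no-click outcome requires two things to fail at once: none of the $n$ incident photons is detected, and no dark count fires. Since the signal count is $s \sim \text{Binomial}(n, p^{}_{c})$, the event that no photon is detected is exactly $\{s = 0\}$, whose probability is the product of $n$ independent per-photon failures, $P(s = 0) = (1 - p^{}_{c})^{n} = q^{n}_{c}$. The event that no dark count fires has probability $1 - p^{}_{d} = q^{}_{d}$ from the Bernoulli dark-count model.

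I would then invoke the independence between the photon-detection process and the dark-count process — the one modeling assumption that actually carries the argument, and hence the step I would state most carefully — to factor the joint no-click probability as $P(D = 0) = q^{}_{d}\, q^{n}_{c}$. Taking the complement gives $p = P(D = 1) = 1 - q^{}_{d} q^{n}_{c}$, and substituting into the Bernoulli form yields the claimed $P(D\mid n, p^{}_{c}, p^{}_{d}) = \text{Bernoulli}(D\mid p = 1 - q^{}_{d} q^{n}_{c})$.

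The arithmetic here is routine; the only genuine subtlety I would flag is precisely the union structure. Because ``at least one photon detected OR dark count'' is a union of non-disjoint events, attacking the click probability directly invites an awkward inclusion--exclusion expansion, whereas passing to the complement reduces the computation to a clean intersection, ``no photon AND no dark count,'' of two independent events whose probabilities multiply. This is the conceptual move that makes the proof a one-line calculation rather than a case analysis.
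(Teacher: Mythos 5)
Your proposal is correct and follows essentially the same route as the paper's proof: both pass to the complementary no-click event, factor $P(D=0) = q^{}_{d}\, q^{n}_{c}$ using the independence of the signal-detection and dark-count processes, and take the complement. The extra discussion of why the complement avoids inclusion--exclusion is sound but does not change the substance of the argument.
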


\begin{proof}
The detection event $D$ occurs if at least one signal photon is detected or a dark count occurs. Since these are independent events, the probability that neither occurs is:

\[
P(D = 0 \mid n, p^{}_{c}, p^{}_{d}) = P(d = 0) P(s = 0) = q^{}_{d} q^{n}_{c}.
\]

Therefore, the probability of a detection event is:

\[
P(D = 1 \mid n, p^{}_{c}, p^{}_{d}) = 1 - P(D = 0\mid n, p^{}_{c}, p^{}_{d}) = 1 - q^{}_{d} q^{n}_{c}.
\]

\end{proof}

\subsubsection{Fiber-Detector}

Optical fibers affect the probability of photons arriving at the detector by attenuating their intensity. The attenuation rate $\alpha$ (dB/km) determines the probability $p^{}_{f}$ of a photon successfully traversing a distance $d$, given by:

\begin{align}
	p^{}_{f} = 10^{-\alpha \frac{d}{10}}.
\end{align}

Each photon independently emerges with probability $p^{}_{f}$, making the number of photons $k$ exiting the fiber, given $n$ entered, follow a binomial distribution: $k \sim \text{Binomial}(n, p^{}_{f})$.

Given this model, the probability of a detection event at a detector with efficiency $p^{}_{c}$ and dark count probability $p^{}_{d}$, after photons pass through a fiber with loss $p^{}_{f}$, is derived using the binomial distribution of photons emerging from the fiber and their detection probabilities. This is formally presented in \Cref{lem:fiber_detector} (see also \Cref{fig:fiber_detector}).

\begin{figure}[h]
    \centering
    \includegraphics[scale=0.25]{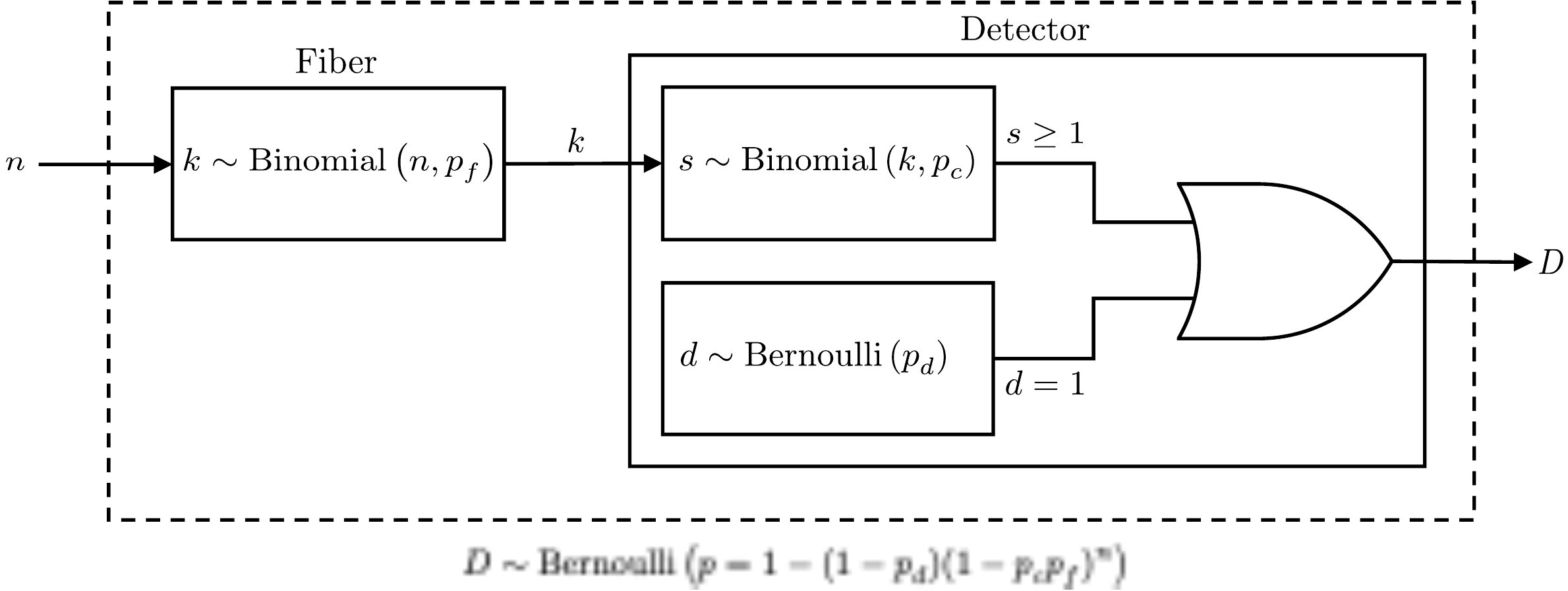}
    \caption{Illustration of the detection process considering both photon transmission through a fiber and the detector's response. The detection event $D$ will follow a Bernoulli distribution as per \Cref{lem:fiber_detector}}
    \label{fig:fiber_detector}
\end{figure}

\begin{lemma}
\label{lem:fiber_detector}
When $n$ photons pass through a fiber before reaching a detector, the probability of a detection event at the detector, given the fiber transmission probability $p^{}_{f}$, detector efficiency $p^{}_{c}$, and dark count probability $p^{}_{d}$, is equivalent to the probability of a detection event for $n$ photons directly incident on a detector with its efficiency scaled by the fiber transmission probability. Specifically, this means:

\begin{align}
	P(D\mid n, p^{}_{c}, p^{}_{d}, p^{}_{f}) = P(D \mid n, p^{}_{c}\cdot p^{}_{f}, p^{}_{d}),
\end{align}
where the right-hand side is the detection probability derived in \Cref{lem:detector}.
\end{lemma}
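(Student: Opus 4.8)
The plan is to condition on the number of photons that survive the fiber and then marginalize, reducing the claim to an application of \Cref{lem:detector} together with the binomial theorem. Let $k$ denote the number of photons exiting the fiber. Since each of the $n$ incident photons traverses independently with probability $p^{}_{f}$, we have $k \sim \text{Binomial}(n, p^{}_{f})$, and conditioned on $k$ the detector behaves exactly as in \Cref{lem:detector}, so $P(D = 0 \mid k, p^{}_{c}, p^{}_{d}) = q^{}_{d}\, q^{k}_{c}$.

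First I would apply the law of total probability over $k$ to the \emph{non}-detection event, which is cleaner than working with $D = 1$ directly:
\begin{align}
	P(D = 0 \mid n, p^{}_{c}, p^{}_{d}, p^{}_{f}) = \sum_{k=0}^{n} \binom{n}{k} p^{k}_{f} (1-p^{}_{f})^{n-k}\, q^{}_{d}\, q^{k}_{c}.
\end{align}
Pulling the $k$-independent factor $q^{}_{d}$ out of the sum and grouping the remaining $k$-dependent terms, the sum takes the form $\sum_{k=0}^{n} \binom{n}{k} (p^{}_{f} q^{}_{c})^{k} (1-p^{}_{f})^{n-k}$.

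The crux of the argument is recognizing that this sum is precisely the binomial expansion of $\bigl((1-p^{}_{f}) + p^{}_{f} q^{}_{c}\bigr)^{n}$, and then simplifying the base via the identity $(1-p^{}_{f}) + p^{}_{f} q^{}_{c} = 1 - p^{}_{f} p^{}_{c}$. I expect this algebraic collapse---confirming that the two-term base reduces to $1 - p^{}_{c} p^{}_{f}$---to be the only step requiring care; everything else is routine bookkeeping. There is no genuine obstacle beyond ensuring the exponents and complements are tracked correctly.

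Finally I would identify $1 - p^{}_{c} p^{}_{f}$ as the complementary efficiency associated with a scaled efficiency $p^{}_{c} \cdot p^{}_{f}$, so that
\begin{align}
	P(D = 0 \mid n, p^{}_{c}, p^{}_{d}, p^{}_{f}) = q^{}_{d} (1 - p^{}_{c} p^{}_{f})^{n} = P(D = 0 \mid n, p^{}_{c} \cdot p^{}_{f}, p^{}_{d})
\end{align}
by \Cref{lem:detector}. Taking complements yields the stated equality for $D = 1$, completing the proof.
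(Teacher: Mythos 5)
Your proposal is correct and follows essentially the same route as the paper's proof: marginalize over the binomially distributed number of surviving photons, apply the binomial theorem, and simplify $(1-p^{}_{f}) + p^{}_{f} q^{}_{c} = 1 - p^{}_{c} p^{}_{f}$. The only cosmetic difference is that you work with the non-detection event $D=0$ throughout rather than expanding $1 - q^{}_{d} q^{k}_{c}$ inside the sum, which streamlines the bookkeeping but changes nothing substantive.
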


\begin{proof}
Since each photon has a probability $p^{}_{f}$ of passing through the fiber and reaching the detector, the number of photons $k$ that reach the detector is a binomial random variable:

\[
k \sim \text{Binomial}(n, p^{}_{f}).
\]

The probability of a click event at the detector is computed by marginalizing over all possible values of $k$, summing the probabilities of a click given $k$, weighted by the probability that exactly $k$ photons reach the detector:

\[
P(D = 1 \mid n, p^{}_{c}, p^{}_{d}, p^{}_{f}) = \sum^{n}_{k=0} P(D = 1 \mid k, p^{}_{c}, p^{}_{d})   P(k \mid n, p^{}_{f}),
\]
where:

\begin{itemize}
    \item $P(D = 1 \mid k, p^{}_{c}, p^{}_{d})$ is the probability of a click given $k$ photons reach the detector, as previously derived in \Cref{lem:detector}.
    \item $P(k \mid n, p^{}_{f})$ is the probability that exactly $k$ photons pass through the fiber, modeled as $\text{Binomial}(k \mid n, p^{}_{f})$.
\end{itemize}

Substituting these into the sum:

\[
P(D = 1 \mid n, p^{}_{c}, p^{}_{d}, p^{}_{f}) = \sum^{n}_{k=0} \left(1 - q^{}_{d} q^{k}_{c}\right) \binom{n}{k} p^{k}_{f} (1 - p^{}_{f})^{n-k}.
\]

Expanding and simplifying:

\[
P(D = 1 \mid n, p^{}_{c}, p^{}_{d}, p^{}_{f}) = 1 - q^{}_{d} \sum^{n}_{k=0} \binom{n}{k} \left(q^{}_{c} p^{}_{f}\right)^k \left(1 - p^{}_{f}\right)^{n-k}.
\]

Applying the binomial theorem~\cite{bishop2006pattern}:

\[
\sum^{n}_{k=0} \binom{n}{k} x^k y^{n-k} = (x + y)^n,
\]
where $x = q^{}_{c} p^{}_{f}$ and $y = 1 - p^{}_{f}$, we get:

\begin{align}
P(D = 1 \mid n, p^{}_{c}, p^{}_{d}, p^{}_{f}) &= 1 - q^{}_{d} \left(q^{}_{c} p^{}_{f} + 1 - p^{}_{f}\right)^n, \nonumber \\
 &= 1 - q^{}_{d} \left(1 - p^{}_{c} p^{}_{f}\right)^n.\nonumber
\end{align}
This matches the form in \Cref{lem:detector}, where the efficiency is effectively $p^{}_{c} p^{}_{f}$.
\end{proof}

\subsubsection{Fiber-Beam Splitter}

If, after passing through the fiber, the photon stream encounters a beam splitter, each photon is independently directed to one of two paths, labeled $0$ and $1$, with probabilities $p^{}_{0}$ and $p^{}_{1} = 1 - p^{}_{0}$, respectively. Then, the number of photons taking the $0$ path follows $m^{}_{0} \sim \text{Binomial}(m, p^{}_{0})$, while those taking the $1$ path follow $m^{}_{1} = m - m^{}_{0} \sim \text{Binomial}(m, p^{}_{1})$.

The objective is to derive the distribution of photons $m^{}_{i}$ directed to detector $D^{}_{i}$ ($i \in \{0, 1\}$) given $n$ photons initially entering the fiber, combining the effects of fiber attenuation and the beam splitter. This is formally presented in \Cref{lem:fiber_beamsplitter} (see also \Cref{fig:fiber_beamsplitter}).

\begin{figure}[h]
    \centering
    \includegraphics[scale=0.25]{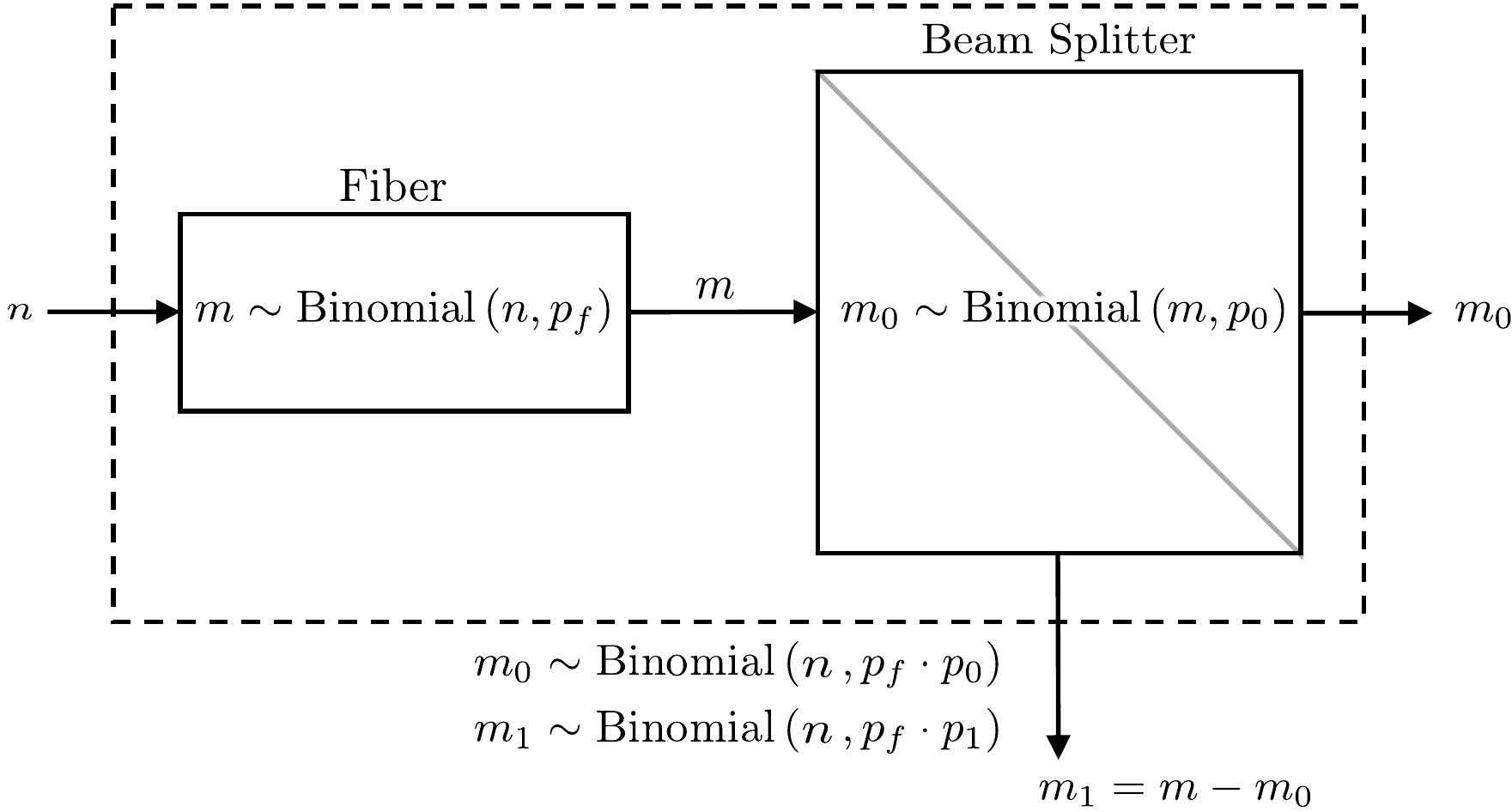}
    \caption{A diagram representing the probabilistic model for a fiber followed by a beam splitter. The number of photons $m^{}_{i}$ directed towards detector $D^{}_{i}$ will follow a binomial distribution as per \Cref{lem:fiber_beamsplitter}.}
    \label{fig:fiber_beamsplitter}
\end{figure}

\begin{lemma}
\label{lem:fiber_beamsplitter}	
If $n$ photons go through a fiber with a transmission probability $p^{}_{f}$ and then encounter a beam splitter that directs each photon to path $i$ with probability $p^{}_{i}$, the number of photons $m^{}_{i}$ directed towards detector $i$ follows a binomial distribution with a combined success probability of $p^{}_{f}   p^{}_{i}$. Formally, this can be expressed as:

\begin{align}
	m^{}_{i} \sim \text{Binomial}(n, p^{}_{f} p^{}_{i}).
\end{align}

\end{lemma}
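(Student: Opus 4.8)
The plan is to exploit the independence of the two stochastic stages—fiber transmission and beam-splitter routing—which act on each photon separately, and to observe that the composition of two independent thinning operations is again a single thinning with the product probability. The cleanest route is a per-photon argument. First I would fix a single photon and trace its fate: it reaches detector $i$ if and only if two independent events both occur, namely that it survives the fiber (probability $p^{}_{f}$) and, conditional on surviving, that the beam splitter routes it to path $i$ (probability $p^{}_{i}$). By independence these combine multiplicatively, so a single photon is registered at detector $i$ with probability $p^{}_{f} p^{}_{i}$, and it fails to reach detector $i$ with the complementary probability $1 - p^{}_{f} p^{}_{i}$.

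Next I would invoke the independence across the $n$ incident photons. Each photon independently reaches detector $i$ with the same success probability $p^{}_{f} p^{}_{i}$, so $m^{}_{i}$ is a sum of $n$ i.i.d.\ Bernoulli trials, which by definition is $\text{Binomial}(n, p^{}_{f} p^{}_{i})$. This establishes the claim directly.

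To corroborate this and to match the marginalization style used in \Cref{lem:fiber_detector}, I would alternatively compute the distribution of $m^{}_{i}$ by conditioning on the number $k$ of photons that exit the fiber. Here $k \sim \text{Binomial}(n, p^{}_{f})$, and given $k$ the routed count satisfies $m^{}_{i} \mid k \sim \text{Binomial}(k, p^{}_{i})$, so marginalizing gives
\begin{align}
P(m^{}_{i} = j \mid n) = \sum^{n}_{k=j} \binom{k}{j} p^{j}_{i} (1 - p^{}_{i})^{k-j} \binom{n}{k} p^{k}_{f} (1 - p^{}_{f})^{n-k}. \nonumber
\end{align}
The main obstacle is the algebraic simplification of this double-indexed expression: I would rewrite $\binom{n}{k}\binom{k}{j} = \binom{n}{j}\binom{n-j}{k-j}$, factor out the terms depending only on $j$, and re-index the sum by $\ell = k - j$ so that the remaining sum collapses via the binomial theorem to $\left(p^{}_{f}(1-p^{}_{i}) + (1 - p^{}_{f})\right)^{n-j} = (1 - p^{}_{f} p^{}_{i})^{n-j}$. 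Collecting terms yields $\binom{n}{j}(p^{}_{f} p^{}_{i})^{j}(1 - p^{}_{f} p^{}_{i})^{n-j}$, confirming the binomial form. The bookkeeping in this Vandermonde-type identity is the only delicate part; the per-photon argument sidesteps it entirely and is the proof I would present as primary.
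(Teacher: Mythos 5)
Your proposal is correct, and your primary argument takes a genuinely different and more elementary route than the paper. The paper proves this lemma by the marginalization you relegate to a corroborating role: it writes $P(m^{}_{i}\mid n, p^{}_{f}, p^{}_{i}) = \sum_{m=m^{}_{i}}^{n}\text{Binomial}(m^{}_{i}\mid m, p^{}_{i})\,\text{Binomial}(m\mid n, p^{}_{f})$, expands the factorials, re-indexes with $c = m - m^{}_{i}$, and collapses the sum via the binomial theorem to reach $\binom{n}{m^{}_{i}}(p^{}_{f}p^{}_{i})^{m^{}_{i}}(1-p^{}_{f}p^{}_{i})^{n-m^{}_{i}}$ — essentially your second computation, differing only in that you route the bookkeeping through the identity $\binom{n}{k}\binom{k}{j}=\binom{n}{j}\binom{n-j}{k-j}$ while the paper manipulates the factorials directly. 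Your per-photon argument — each photon independently survives the fiber and is routed to path $i$ with product probability $p^{}_{f}p^{}_{i}$, so $m^{}_{i}$ is a sum of $n$ i.i.d.\ Bernoulli trials — sidesteps the algebra entirely and makes transparent \emph{why} two successive independent thinnings compose into one; it is arguably the cleaner proof. What the paper's explicit marginalization buys is stylistic uniformity with the neighboring results (\Cref{lem:fiber_detector}, \Cref{lem:laser_fiber}, \Cref{lem:pair_detectos}), all of which are proved by conditioning on a latent photon count and invoking the binomial theorem or exponential series; that uniform template is what the paper later leans on when it rephrases this lemma as a statement about sequential binomial sampling in the error-rate analysis. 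Either proof is complete and correct.
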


\begin{proof}
$m^{}_{i}$ results from two successive binomial processes: first, the probability of $m$ successes in $n$ trials with probability $p^{}_{f}$, and then the probability of $m^{}_{i}$ successes in $m$ trials with probability $p^{}_{i}$. To find the distribution of $m^{}_{i}$, we marginalize over the latent variable $m$:

\[
P(m^{}_{i} \mid n, p^{}_{f}, p^{}_{i}) = \sum^{n}_{m=m^{}_{i}} P(m^{}_{i} \mid m, p^{}_{i})   P(m \mid n, p^{}_{f}).
\]

The index starts at $m^{}_{i}$, as $m^{}_{i} > m$ is not possible. Since $P(m^{}_{i} \mid m, p^{}_{i})$ and $P(m \mid n, p^{}_{f})$ are both binomial distributions, we substitute them as follows:

\begin{align}
P(m^{}_{i} \mid n, p^{}_{f}, p^{}_{i}) &= \sum^{n}_{m=m^{}_{i}} \text{Binomial}(m^{}_{i} \mid m, p^{}_{i})   \text{Binomial}(m \mid n, p^{}_{f}) \nonumber\\
&= \sum^{n}_{m=m^{}_{i}} \binom{m}{m^{}_{i}} p^{m^{}_{i}}_{i} (1-p^{}_{i})^{m-m^{}_{i}} \binom{n}{m} p^{m}_{f} (1-p^{}_{f})^{n-m} \nonumber\\
&= \sum^{n}_{m=m^{}_{i}} \binom{m}{m^{}_{i}} p^{m^{}_{i}}_{i} q^{m-m^{}_{i}}_{i} \binom{n}{m} p^{m}_{f} q^{n-m}_{f} \nonumber\\
&= \sum^{n}_{m=m^{}_{i}} \frac{m!}{m^{}_{i}!(m-m^{}_{i})!} p^{m^{}_{i}}_{i} q^{m-m^{}_{i}}_{i} \frac{n!}{m!(n-m)!} p^{m}_{f} q^{n-m}_{f} \nonumber\\
&= \frac{n!}{m^{}_{i}!} p^{m^{}_{i}}_{i} \sum^{n}_{m=m^{}_{i}} \frac{q^{m-m^{}_{i}}_{i}p^{m}_{f} q^{n-m}_{f} }{(m-m^{}_{i})!(n-m)!}. \nonumber
\end{align}

Letting $c = m - m^{}_{i}$ and $C = n - m^{}_{i}$, and re-indexing:

\begin{align}
P(m^{}_{i} \mid n, p^{}_{f}, p^{}_{i}) &= \frac{n!}{m^{}_{i}!} p^{m^{}_{i}}_{i} \sum^{n}_{c=0} \frac{q^{c}_{i}p^{c+m^{}_{i}}_{f} q^{C-c}_{f} }{c!(C-c)!}, \nonumber\\
 &= \frac{n!}{m^{}_{i}!} p^{m^{}_{i}}_{i}p^{m^{}_{i}}_{f} \sum^{C}_{c=0} \frac{q^{c}_{i}p^{c}_{f} q^{C-c}_{f}}{c!(C-c)!}, \nonumber\\
 &= \frac{n!}{m^{}_{i}!C!} (p^{}_{f} p^{}_{i})^{m^{}_{i}} \sum^{C}_{c=0} \frac{C!}{c!(C-c)!}(q^{}_{i}p^{}_{f})^c q^{C-c}_{f}, \nonumber\\
&= \frac{n!}{m^{}_{i}!C!} (p^{}_{f} p^{}_{i})^{m^{}_{i}} \sum^{C}_{c=0} \binom{C}{c} (q^{}_{i}p^{}_{f})^c q^{C-c}_{f}. \nonumber
\end{align}

Recognizing the binomial theorem~\cite{bishop2006pattern}:

\[
P(m^{}_{i} \mid n, p^{}_{f}, p^{}_{i}) = \frac{n!}{m^{}_{i}!C!} (p^{}_{f} p^{}_{i})^{m^{}_{i}} (q^{}_{i}p^{}_{f} + q^{}_{f})^{C}.
\]

Simplifying and substituting back:

\[
P(m^{}_{i} \mid n, p^{}_{f}, p^{}_{i}) = \binom{n}{m^{}_{i}} (p^{}_{f} p^{}_{i})^{m^{}_{i}} (1 - p^{}_{f} p^{}_{i})^{n-m^{}_{i}}.
\]

This has the form of the binomial distribution with number of trials $m^{}_{i}$ and success probability $p^{}_{f} p^{}_{i}$.
\end{proof}

\subsubsection{Pair of Detectors}
\label{sec:pair_of_detectors}

In practical BB84 implementations, two detectors are used to distinguish between the detection of a true 0 bit and the absence of photon arrival. Therefore, to fully characterize the detection system, we compute the four joint probabilities of detector outcomes:

\begin{itemize}
    \item $P^{00}$: The probability of neither detectors clicking.
    \item $P^{01}$: The probability of only $D^{}_{0}$ clicking.
    \item $P^{10}$: The probability of only $D^{}_{1}$ clicking.
    \item $P^{11}$: The probability of both detectors clicking.
\end{itemize}

If we have access to the marginal probabilities for each detector:

\begin{align}
	P(D^{}_{0} = 1) &= P^{*1} = P^{01} + P^{11},\\
	P(D^{}_{1} = 1) &= P^{1*} = P^{10} + P^{11},
\end{align}
as well as the union probability: 
\begin{align}
		P(D^{}_{0} \vee D^{}_{1} = 1) = P^{\vee} = P^{01} + P^{10} + P^{11},
\end{align}
we can use these to infer the joint probabilities as follows:

\begin{align}
P^{00} &= 1 - P^{\vee}, \\
P^{01} &= P^{\vee} - P^{1*}, \\
P^{10} &= P^{\vee} - P^{*1}, \\
P^{11} &= P^{*1} + P^{1*} - P^{\vee}.
\end{align}

The marginals are straight forward results of previously stated lemmas. From \Cref{lem:fiber_beamsplitter} we showed that the effect of a fiber with loss $p^{}_{f}$ followed by a beam splitter path probability $p^{}_{i}$, is essentially equivalent to a single fiber with a loss of $p^{}_{f}p^{}_{i}$. Additionally, in \Cref{lem:fiber_detector}, we showed that the effect of a fiber with loss $p^{}_{f}$ connected to a detector with efficiency $p^{}_{c}$, is equivalent to a single detector with an efficiency of $p^{}_{f}p^{}_{c}$. Therefore, we can combine these results and conclude that a fiber connected to a beam splitter that is connected to a detector, is equivalent to a single detector with an efficiency of $p^{}_{f}p^{}_{i}p^{}_{c}$, or more formally:
\begin{align}
	P(D^{}_{i}=1\mid n,p^{}_{c},p^{}_{d},p^{}_{f},p^{}_{i}) = P(D=1\mid n,p^{}_{f}  p^{}_{i}  p^{}_{c},p^{}_{d})
\end{align}
In the following Lemma, we derive the union probability in order to derive the joint probabilities. A schematic representation of the two-detector system is presented in \Cref{fig:pair_detectors}.
\begin{figure}[h]
    \centering
    \includegraphics[scale=0.25]{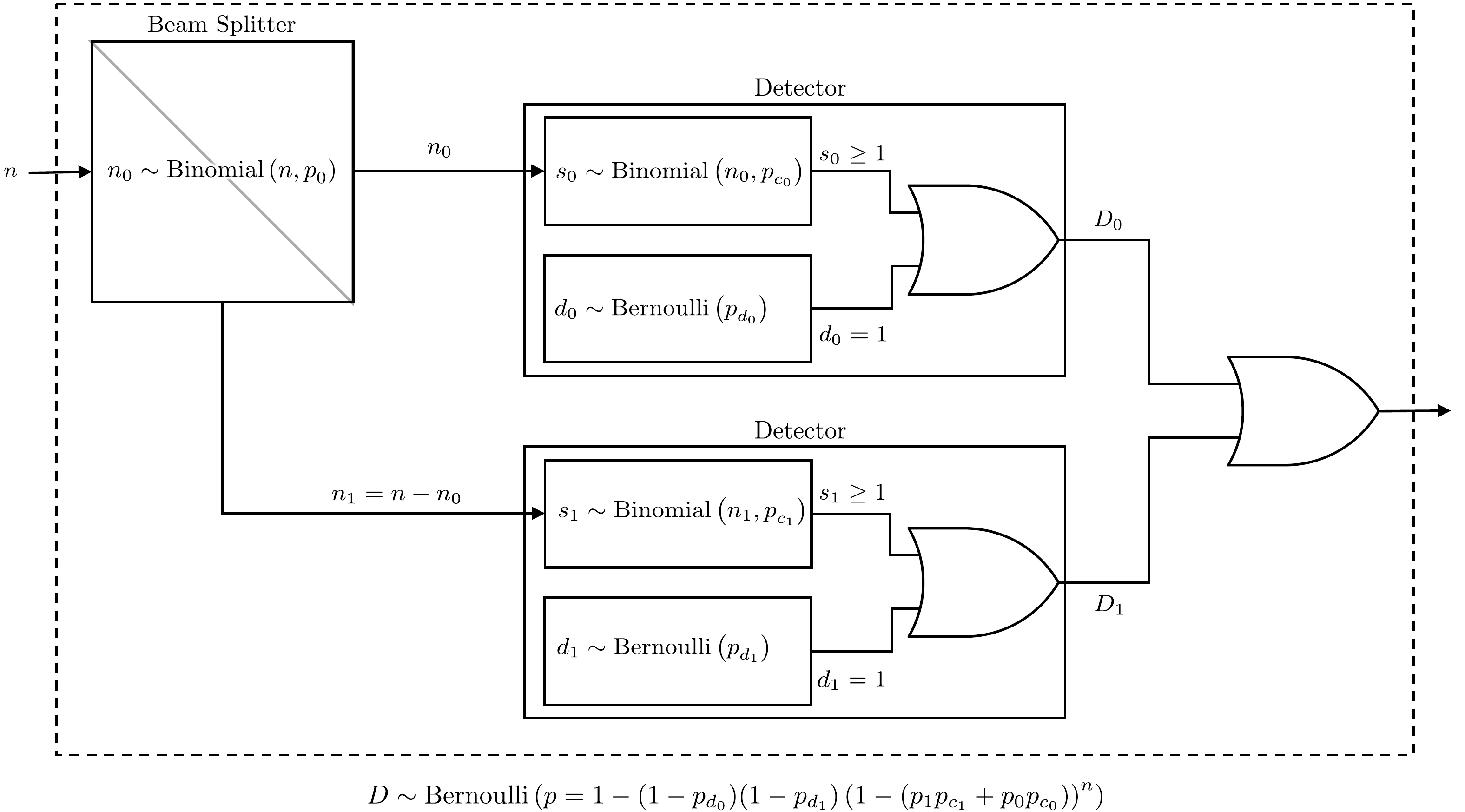}
    \caption{A diagram illustrating the detection setup for two detectors after a beam-splitter. The detection event follows a Bernoulli distribution as per \Cref{lem:pair_detectos}.}
    \label{fig:pair_detectors}
\end{figure}
\begin{lemma}
\label{lem:pair_detectos}
If $n$ photons encounter a beam splitter that independently directs each photon to one of two paths, $i \in \{0, 1\}$, with probabilities $p^{}_{0}$ and $p^{}_{1} = 1 - p^{}_{0}$, and these paths lead to two detectors $D^{}_{0}$ or $D^{}_{1}$ with click probabilities $p^{}_{c^{}_{0}}$ and $p^{}_{c^{}_{1}}$, and dark count probabilities $p^{}_{d^{}_{0}}$ and $p^{}_{d^{}_{1}}$, respectively, then the probability of at least one of the detectors clicking is equivalent to the click probability of a single pseudo detector with $n$ photons directly incident on it, where the click probability is $p^{\vee}_{c}$ and the dark count probability is $p^{\vee}_{d}$. Formally, this probability is given by:

\begin{align}
	P(D^{}_{0} \vee D^{}_{1} = 1 \mid n, p^{}_{c}, p^{}_{d}, p^{}_{0}) = P(D = 1 \mid n, p^{\vee}_{c}, p^{\vee}_{d}),
\end{align}
where

\begin{align}
	p^{\vee}_{c} = p^{}_{1}p^{}_{c^{}_{1}} + p^{}_{0}p^{}_{c^{}_{0}}, \quad p^{\vee}_{d} = 1 - q^{}_{d^{}_{0}} q^{}_{d^{}_{1}},
\end{align}
and $P(D = 1 \mid n, p^{}_{c}, p^{}_{d})$ is as defined in \Cref{lem:detector}.
\end{lemma}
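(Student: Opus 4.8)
The plan is to collapse the two-detector union event onto a single pseudo-detector by passing to its complement. Since $P(D^{}_{0} \vee D^{}_{1} = 1) = 1 - P(D^{}_{0} = 0, D^{}_{1} = 0)$, and the target right-hand side is, by \Cref{lem:detector}, of the form $1 - q^{\vee}_{d}(q^{\vee}_{c})^{n}$, it suffices to show that the joint no-click probability factors as $P(D^{}_{0} = 0, D^{}_{1} = 0 \mid n) = q^{\vee}_{d}(q^{\vee}_{c})^{n}$, where $q^{\vee}_{c} = 1 - p^{\vee}_{c}$ and $q^{\vee}_{d} = 1 - p^{\vee}_{d} = q^{}_{d^{}_{0}}q^{}_{d^{}_{1}}$. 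This reframing is convenient because, unlike the "at least one click" event, the "neither clicks" event factorizes across the two detectors.

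First I would condition on the beam-splitter outcome. Letting $m^{}_{0}$ be the number of photons routed to path $0$, so that $m^{}_{1} = n - m^{}_{0}$ reach detector $1$, the split is binomial, $m^{}_{0} \sim \text{Binomial}(n, p^{}_{0})$. Given the split, the two arms respond independently: the routing-plus-detection at each detector and the two dark-count events are mutually independent, so applying the $D=0$ branch of \Cref{lem:detector} to each detector separately gives $P(D^{}_{0} = 0, D^{}_{1} = 0 \mid m^{}_{0}) = \big(q^{}_{d^{}_{0}} q^{m^{}_{0}}_{c^{}_{0}}\big)\big(q^{}_{d^{}_{1}} q^{n-m^{}_{0}}_{c^{}_{1}}\big)$. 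Marginalizing over $m^{}_{0}$ and pulling the dark-count factors out of the sum yields $q^{}_{d^{}_{0}} q^{}_{d^{}_{1}} \sum_{m^{}_{0}} \binom{n}{m^{}_{0}} (p^{}_{0} q^{}_{c^{}_{0}})^{m^{}_{0}} (p^{}_{1} q^{}_{c^{}_{1}})^{n-m^{}_{0}}$, which the binomial theorem collapses to $q^{}_{d^{}_{0}} q^{}_{d^{}_{1}} (p^{}_{0} q^{}_{c^{}_{0}} + p^{}_{1} q^{}_{c^{}_{1}})^{n}$.

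The main step, and the only place that needs care, is verifying that the base and prefactor are exactly the claimed pseudo-detector parameters. Using $p^{}_{0} + p^{}_{1} = 1$, the base simplifies as $p^{}_{0} q^{}_{c^{}_{0}} + p^{}_{1} q^{}_{c^{}_{1}} = (p^{}_{0} + p^{}_{1}) - (p^{}_{0} p^{}_{c^{}_{0}} + p^{}_{1} p^{}_{c^{}_{1}}) = 1 - p^{\vee}_{c} = q^{\vee}_{c}$, while the prefactor $q^{}_{d^{}_{0}} q^{}_{d^{}_{1}}$ is precisely $q^{\vee}_{d} = 1 - p^{\vee}_{d}$ by definition. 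Substituting and taking the complement recovers $1 - q^{\vee}_{d}(q^{\vee}_{c})^{n}$, which is $P(D = 1 \mid n, p^{\vee}_{c}, p^{\vee}_{d})$ by \Cref{lem:detector}. I do not expect a genuine obstacle beyond correctly tracking that a dark count of the pseudo detector corresponds to at least one of the two real detectors dark-counting—the OR structure that forces $p^{\vee}_{d} = 1 - q^{}_{d^{}_{0}} q^{}_{d^{}_{1}}$ rather than a naive sum of individual dark-count rates—and confirming that the binomial weights $(p^{}_{0} q^{}_{c^{}_{0}})$ and $(p^{}_{1} q^{}_{c^{}_{1}})$ sum to the effective survival probability $q^{\vee}_{c}$.
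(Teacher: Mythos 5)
Your proposal is correct and follows essentially the same route as the paper's proof: condition on the binomial split at the beam splitter, factor the joint no-click probability across the two detectors using \Cref{lem:detector}, collapse the sum via the binomial theorem, and identify the resulting base and prefactor with $q^{\vee}_{c}$ and $q^{\vee}_{d}$. The only cosmetic difference is that you work with the complement event throughout and negate at the end, whereas the paper carries the $1 - (\cdot)$ form inside the marginalization; the computation is identical.
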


\begin{proof}
To compute the probability that at least one detector, either $D^{}_{0}$ or $D^{}_{1}$, registers a click, we sum over all possible numbers of photons $k$ that could be passed to $D^{}_{0}$ and marginalize over this variable:
\[
P(D^{}_{0} \vee D^{}_{1} = 1 \mid n, p^{}_{c}, p^{}_{d}, p^{}_{0}) = \sum^{n}_{k=0} P(D^{}_{0} \vee D^{}_{1} = 1 \mid k, p^{}_{c}, p^{}_{d})   \text{Binomial}(k \mid n, p^{}_{0}).
\]

Since after the beam-split, the click events at $D^{}_{0}$ and $D^{}_{1}$ become independent, the probability that neither detector clicks is the product of their independent non-click probabilities. Therefore, the probability that at least one detector clicks (the union probability) is:

\[
P(D^{}_{0} \vee D^{}_{1} = 1 \mid k, p^{}_{c}, p^{}_{d}) = 1 - P(D^{}_{0} = 0 \mid k, p^{}_{c^{}_{0}}, p^{}_{d^{}_{0}})   P(D^{}_{1} = 0 \mid n-k, p^{}_{c^{}_{1}}, p^{}_{d^{}_{1}}).
\]

We substitute the non-click probabilities:

\[
P(D^{}_{0} = 0 \mid k, p^{}_{c^{}_{0}}, p^{}_{d^{}_{0}}) = q^{}_{d^{}_{0}} q^{k}_{c^{}_{0}}, \quad P(D^{}_{1} = 0 \mid n-k, p^{}_{c^{}_{1}}, p^{}_{d^{}_{1}}) = q^{}_{d^{}_{1}} q^{n-k}_{c^{}_{1}}.
\]

Thus, the probability of a click in either detector is:

\[
P(D^{}_{0} \vee D^{}_{1} = 1 \mid n, p^{}_{c}, p^{}_{d}, p^{}_{0}) = \sum^{n}_{k=0} \left(1 - q^{}_{d^{}_{0}} q^{}_{d^{}_{1}} q^{k}_{c^{}_{0}} q^{n-k}_{c^{}_{1}}\right) \text{Binomial}(k \mid n, p^{}_{0}).
\]

This can be rewritten by expanding and simplifying as:

\[
P(D^{}_{0} \vee D^{}_{1} = 1 \mid n, p^{}_{c}, p^{}_{d}, p^{}_{0}) = 1 - q^{}_{d^{}_{0}}   q^{}_{d^{}_{1}}   \sum^{n}_{k=0} \binom{n}{k} (p^{}_{0}   q^{k}_{c^{}_{0}}) (q^{}_{0}   q^{}_{c^{}_{1}})^{n-k}.
\]

Recognizing the binomial theorem~\cite{bishop2006pattern}:

\[
P(D^{}_{0} \vee D^{}_{1} = 1 \mid n, p^{}_{c}, p^{}_{d}, p^{}_{0}) = 1 - q^{}_{d^{}_{0}}   q^{}_{d^{}_{1}}   (p^{}_{0}   q^{}_{c^{}_{0}} + q^{}_{0}   q^{}_{c^{}_{1}})^n.
\]

Further simplifying:

\[
P(D^{}_{0} \vee D^{}_{1} = 1 \mid n, p^{}_{c}, p^{}_{d}, p^{}_{0}) = 1 - \underset{1-p^{\vee}_{d}}{\underbrace{q^{}_{d^{}_{0}}   q^{}_{d^{}_{1}}}}   (1 - (\underset{p^{\vee}_{c}}{\underbrace{p^{}_{1}p^{}_{c^{}_{1}}+p^{}_{0}p^{}_{c^{}_{0}}}})))^n.
\]

This has the same form as the probability in \Cref{lem:detector}, but with the dark count probability defined as $1 - q^{}_{d^{}_{0}} q^{}_{d^{}_{1}}$, and the click probability defined as $p^{}_{1}p^{}_{c^{}_{1}} + p^{}_{0}p^{}_{c^{}_{0}}$.
\end{proof}

\subsubsection{Laser-Fiber}

In quantum key distribution (QKD) systems, when using a weak coherent laser source with randomized phases, the number of photons emitted by the source follows a Poisson distribution~\cite{ma2005practical}:
\[
n \sim \text{Poisson}(\lambda),
\]
where $\lambda$ is the average photon number per pulse, determined by the intensity of the laser source. Since each of the $n$ photons emitted from the laser has an independent probability $p^{}_{f}$ of passing through the optical fiber, the number of photons $m$ that successfully exit the fiber follows a binomial distribution with parameters $n$ and $p^{}_{f}$.

\begin{figure}[h]
    \centering
    \includegraphics[scale=0.25]{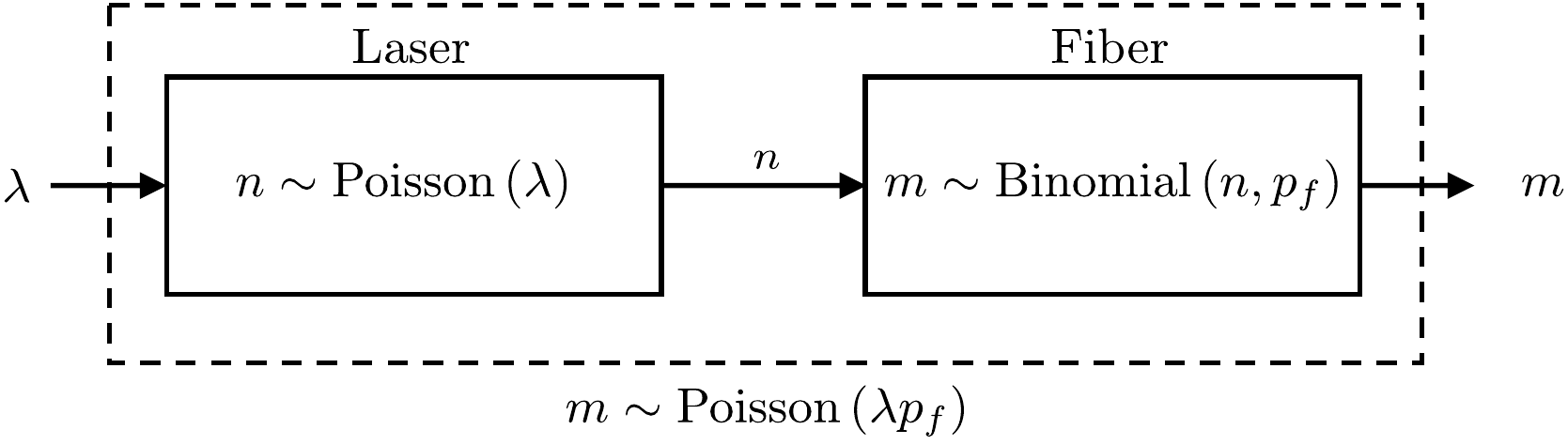}
    \caption{Schematic representation of a laser source connected to a fiber. The number of photons $m$ that passes the fiber follows a Poisson distribution as per \Cref{lem:laser_fiber}.}
    \label{fig:laser_fiber}
\end{figure}

To determine the distribution of $m$ given $\lambda$ and $p^{}_{f}$, we need to account for the combined effect of the Poisson-distributed photon source and the binomial transmission process through the fiber. This relationship is formally described in \Cref{lem:laser_fiber} and visually illustrated in \Cref{fig:laser_fiber}, which shows that the effect of the fiber is to reduce the average number of photons by a factor of $p^{}_{f}$.

\begin{lemma}
\label{lem:laser_fiber}
If the number of photons $n$ entering a fiber is Poisson-distributed with parameter $\lambda$, and each photon has an independent probability $p^{}_{f}$ of being transmitted through the fiber, then the number of photons $m$ exiting the fiber follows a Poisson distribution with parameter $\lambda p^{}_{f}$:
\begin{align}
	P(m \mid \lambda, p^{}_{f}) = \text{Poisson}(m \mid \lambda p^{}_{f}).	
\end{align}
\end{lemma}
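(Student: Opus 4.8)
The plan is to prove this by marginalizing over the latent variable $n$, the number of photons emitted by the laser, exactly as in the proofs of the preceding lemmas. We are given that $n \sim \text{Poisson}(\lambda)$ and that, conditioned on $n$, the number of transmitted photons satisfies $m \mid n \sim \text{Binomial}(n, p^{}_{f})$. The quantity we want is the marginal distribution of $m$, obtained by summing the joint probability $P(m \mid n, p^{}_{f}) P(n \mid \lambda)$ over all admissible $n$. Since $m$ photons cannot exit when fewer than $m$ entered, the sum runs from $n = m$ to $\infty$.

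First I would write the marginalization explicitly,
\[
P(m \mid \lambda, p^{}_{f}) = \sum^{\infty}_{n=m} \binom{n}{m} p^{m}_{f} (1 - p^{}_{f})^{n-m} \, \frac{\lambda^{n} e^{-\lambda}}{n!}.
\]
Next I would substitute the binomial coefficient $\binom{n}{m} = \frac{n!}{m!(n-m)!}$, which cancels the $n!$ coming from the Poisson factor, leaving a cleaner expression with $\frac{1}{m!(n-m)!}$. I would then pull out all factors independent of the summation index $n$ — namely $\frac{p^{m}_{f} \lambda^{m} e^{-\lambda}}{m!}$ after writing $\lambda^{n} = \lambda^{m} \lambda^{n-m}$ — and re-index the sum with $j = n - m$ so that it runs from $j = 0$ to $\infty$.

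The remaining sum is $\sum^{\infty}_{j=0} \frac{\big(\lambda(1-p^{}_{f})\big)^{j}}{j!}$, which I would recognize as the Taylor series for the exponential, $e^{\lambda(1-p^{}_{f})}$. Multiplying this against the prefactor and combining the two exponential terms via $e^{-\lambda} e^{\lambda(1-p^{}_{f})} = e^{-\lambda p^{}_{f}}$ collapses everything to
\[
P(m \mid \lambda, p^{}_{f}) = \frac{(\lambda p^{}_{f})^{m} e^{-\lambda p^{}_{f}}}{m!} = \text{Poisson}(m \mid \lambda p^{}_{f}),
\]
which is exactly the claimed result, identifying $\lambda p^{}_{f}$ as the thinned rate.

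I do not anticipate a genuine obstacle here, since this is the classical Poisson-thinning identity and the algebra parallels the binomial-theorem manipulations already carried out in \Cref{lem:fiber_detector} and \Cref{lem:fiber_beamsplitter}. The only point requiring mild care is the factorial bookkeeping: ensuring the $n!$ cancellation is done correctly and that the re-indexing shifts the lower limit to zero so the infinite sum matches the exponential series cleanly. An alternative one-line argument I could mention is via probability generating functions — the PGF of a Poisson-thinned process is $G(z) = \exp\big(\lambda p^{}_{f}(z-1)\big)$, the PGF of $\text{Poisson}(\lambda p^{}_{f})$ — but the direct summation is more self-contained and consistent with the style of the surrounding lemmas.
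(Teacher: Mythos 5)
Your proposal is correct and follows essentially the same route as the paper's proof: marginalizing over $n$ from $n=m$ upward, cancelling the $n!$ factors, re-indexing by $n-m$, and recognizing the exponential series to combine $e^{-\lambda}e^{\lambda(1-p^{}_{f})}=e^{-\lambda p^{}_{f}}$. The generating-function aside is a valid alternative but, as you note, the direct summation is what the paper uses.
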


\begin{proof}
To derive the distribution of $m$, we calculate the total probability of $m$ photons exiting by summing over the probabilities of all possible photon counts $n$ that could enter the fiber:

\[
\begin{aligned}
P(m \mid \lambda, p^{}_{f}) &= \sum^{\infty}_{n=m} P(m \mid n, p^{}_{f})   P(n \mid \lambda) \\
&= \sum^{\infty}_{n=m} \text{Binomial}(m \mid n, p^{}_{f})   \text{Poisson}(n \mid \lambda),
\end{aligned}
\]
where the indexing is set to start at $m$, since $\text{Binomial}(m \mid n, p^{}_{f})=0\quad\forall\, m>n$. Expanding the expressions for the binomial and Poisson probabilities:

\[
\begin{aligned}
P(m \mid \lambda, p^{}_{f}) &= \sum^{\infty}_{n=m} \frac{n!}{m!(n-m)!} p^{m}_{f} q^{n-m}_{f} \frac{\lambda^n}{n!}e^{-\lambda}.
\end{aligned}
\]

Simplifying this expression:

\[
\begin{aligned}
P(m \mid \lambda, p^{}_{f}) &= \frac{p^{m}_{f}}{m!} e^{-\lambda} \sum^{\infty}_{n=m} \frac{\lambda^n q^{n-m}_{f}}{(n-m)!}.
\end{aligned}
\]

We define $k = n - m$ and re-index the sum:

\[
\begin{aligned}
P(m \mid \lambda, p^{}_{f}) &= \frac{p^{m}_{f}}{m!} e^{-\lambda} \sum^{\infty}_{k=0} \frac{\lambda^{k+m} q^{k}_{f}}{k!},\\
 &= \frac{p^{m}_{f}}{m!} e^{-\lambda} \lambda^{m}\sum^{\infty}_{k=0} \frac{(\lambda q^{}_{f})^{k}}{k!},\\
\end{aligned}
\]

Recognizing the power series expansion for the exponential function~\cite{NIST:DLMF}:

\[
\sum^{\infty}_{k=0} \frac{(\lambda q^{}_{f})^{k}}{k!} = e^{\lambda q^{}_{f}},
\]

the expression simplifies to:

\[
P(m \mid \lambda, p^{}_{f}) = \frac{(\lambda p^{}_{f})^m}{m!} e^{-\lambda p^{}_{f}}.
\]

This final form is the Poisson distribution with parameter $\lambda p^{}_{f}$.
\end{proof}

\subsubsection{Laser-Detector}

Consider a scenario where a Poisson-distributed photon stream is directly connected to a detector with efficiency $p^{}_{c}$ and a dark count probability $p^{}_{d}$. The objective is to derive the probability of a click event at the detector, considering the stochastic nature of both the photon emission process and the detection. This involves marginalizing over all possible numbers of photons $n$ emitted by the source, weighting each by the probability that $n$ photons cause a detection event. This process is formally presented in \Cref{lem:laser_detector} and illustrated in \Cref{fig:laser_detector}.

\begin{figure}[h]
    \centering
    \includegraphics[scale=0.25]{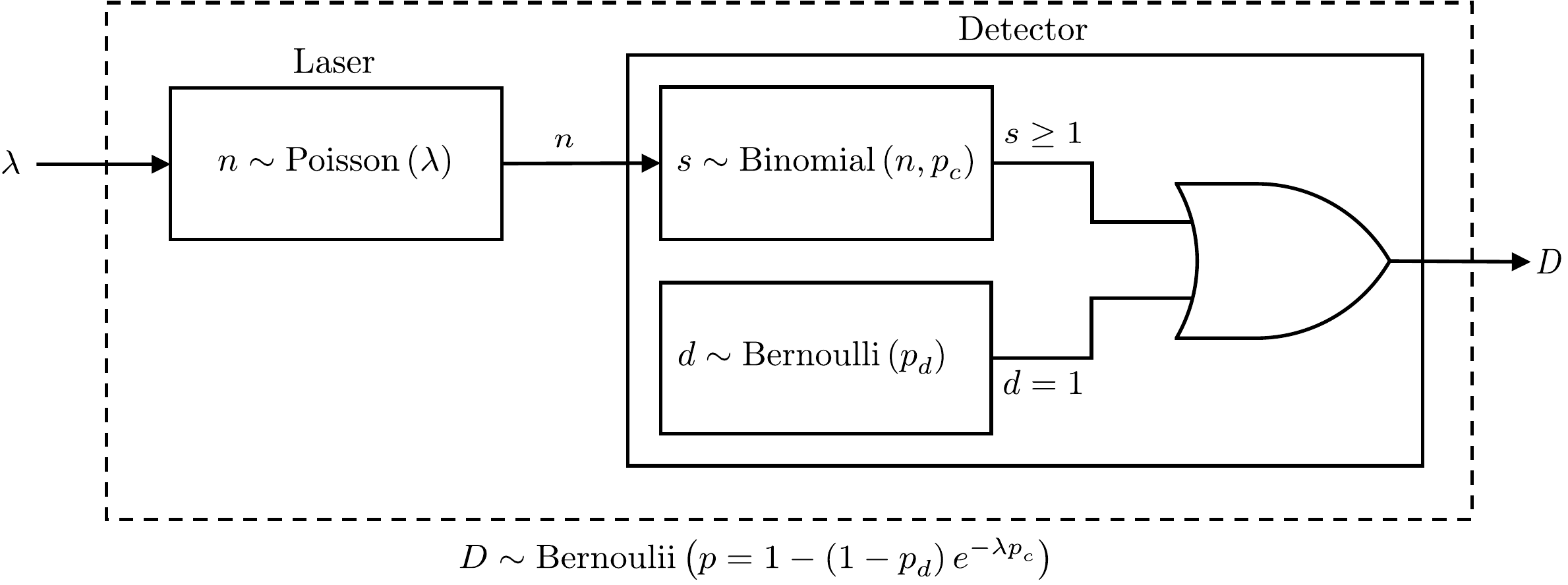} 
    \caption{Illustration of the detection process for photons emitted directly from a laser source. The detection event follows a Bernoulli distribution as per \Cref{lem:laser_detector}}
\label{fig:laser_detector}
\end{figure}

\begin{lemma}
\label{lem:laser_detector}
If a Poisson-distributed number of photons with mean $\lambda$ enters a detector with efficiency $p^{}_{c}$ and dark count probability $p^{}_{d}$, then the click event $D$ is Bernoulli distributed as follows:
\begin{align}
	D \sim \text{Bernoulli}\left(p = 1 - (1-p^{}_{d}) e^{-p^{}_{c} \lambda}\right).
\end{align}
\end{lemma}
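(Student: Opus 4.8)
The plan is to obtain the click probability by marginalizing over the (random) number of emitted photons $n$, exactly as in the earlier component lemmas, but now with a Poisson source rather than a fixed count. The key observation is that the per-$n$ behaviour is already settled by \Cref{lem:detector}, which gives the conditional no-click probability $P(D=0\mid n,p^{}_{c},p^{}_{d})=q^{}_{d}q^{n}_{c}$. Working with the no-click event is the convenient move here, since it factorizes cleanly and keeps the dark-count term $q^{}_{d}$ outside the sum.

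Concretely, I would first write the law of total probability over $n$,
\[
P(D=0\mid\lambda,p^{}_{c},p^{}_{d})=\sum_{n=0}^{\infty}P(D=0\mid n,p^{}_{c},p^{}_{d})\,\text{Poisson}(n\mid\lambda),
\]
then substitute $P(D=0\mid n)=q^{}_{d}q^{n}_{c}$ together with $\text{Poisson}(n\mid\lambda)=e^{-\lambda}\lambda^{n}/n!$, and pull the $n$-independent factors $q^{}_{d}e^{-\lambda}$ out of the summation. This reduces the problem to evaluating the remaining series $\sum_{n=0}^{\infty}(q^{}_{c}\lambda)^{n}/n!$.

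The second step is to recognize this as the exponential power series, $\sum_{n=0}^{\infty}(q^{}_{c}\lambda)^{n}/n!=e^{q^{}_{c}\lambda}$, so that $P(D=0\mid\lambda,p^{}_{c},p^{}_{d})=q^{}_{d}e^{-\lambda}e^{q^{}_{c}\lambda}=q^{}_{d}e^{-(1-q^{}_{c})\lambda}=q^{}_{d}e^{-p^{}_{c}\lambda}$, using $1-q^{}_{c}=p^{}_{c}$. Taking the complement then yields $P(D=1)=1-q^{}_{d}e^{-p^{}_{c}\lambda}=1-(1-p^{}_{d})e^{-p^{}_{c}\lambda}$, which is the claimed Bernoulli parameter.

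I do not expect a genuine obstacle: the computation is a one-line Poisson marginalization that collapses through the exponential generating function, and the dark-count factor never interacts with the sum. The only points to watch are notational consistency ($q^{}_{c}=1-p^{}_{c}$, $q^{}_{d}=1-p^{}_{d}$) and noting that all terms are nonnegative so the series manipulation is valid. As a sanity check, I would observe that this result coincides with composing \Cref{lem:laser_fiber} (Poisson thinning of the mean by the efficiency $p^{}_{c}$) with \Cref{lem:detector}; the direct marginalization is preferred only because it folds the dark count $p^{}_{d}$ in within a single step.
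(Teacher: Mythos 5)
Your proposal is correct and follows essentially the same route as the paper: both reduce the claim to the no-click probability and collapse a Poisson marginalization through the exponential series, the only cosmetic difference being that the paper evaluates the signal term $P(s=0)=e^{-p^{}_{c}\lambda}$ by invoking the Poisson-thinning result of \Cref{lem:laser_fiber}, whereas you sum the series $\sum_{n}(q^{}_{c}\lambda)^{n}/n!$ explicitly starting from the conditional no-click probability of \Cref{lem:detector}. Your closing remark already identifies this equivalence, so nothing further is needed.
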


\begin{proof}
The probability that the detector registers at least one click, since the processes of incoming photons and dark counts are independent, can be expressed as:
\[
P(D=1 \mid \lambda, p^{}_{c}, p^{}_{d}) = 1 - P(d=0 \mid p^{}_{d}) P(s=0 \mid \lambda, p^{}_{c}).
\]

From \Cref{lem:laser_fiber}, we know that if the laser source emits photons following a Poisson distribution with mean $\lambda$, and the signal detection process has efficiency $p^{}_{c}$, then the number of detected signal photons $s$ is Poisson-distributed with mean $\lambda p^{}_{c}$. Thus, the probability that no photon is detected:
\[
P(s=0 \mid \lambda, p^{}_{c}) = \text{Poisson}(s=0\mid \lambda p^{}_{c})= e^{-p^{}_{c} \lambda}.
\]

Combining these results, we have:
\[
\begin{aligned}
P(D=1 \mid \lambda, p^{}_{c}, p^{}_{d}) &= 1 - P(d=0 \mid p^{}_{d}) P(s=0 \mid \lambda, p^{}_{c}) \\
&= 1 - (1-p^{}_{d}) e^{-p^{}_{c} \lambda},
\end{aligned}
\]
\end{proof}

\subsection{PNS Attack}

We now model Eve's interception of the key, considering specific choices of the bit Alice sends ($x$) and the bases chosen by Alice ($a$) and Bob ($b$). The model integrates the effects of the laser source, fiber losses, Eve's attack vector, misalignment in the optical components, and detector characteristics.

\subsubsection{Generalized PNS Attack}
\label{subsec:generalized_pns}

In our analysis, we consider a generalized Photon Number Splitting (PNS) attack, where Eve has a range of strategic choices that extend beyond the standard assumptions typically made in QKD security models. This model allows Eve to optimize her attack strategy according to a broader set of parameters, providing a more flexible and realistic scenario. Specifically, Eve can choose:

\begin{enumerate}
    \item \textbf{Distance from Alice, $d^{}_{AE}$}: Eve's interception point is between Alice and Bob, with $0 \leq d^{}_{AE} \leq d^{}_{AB}$, where $d^{}_{AB}$ is their total distance.
    \item \textbf{Proportion of intercepted pulses, $\Delta$}: Eve intercepts a fraction $\Delta$ of pulses, $0 \leq \Delta \leq 1$, balancing information gain and detection risk.
    \item \textbf{Number of intercepted photons, $k$}: Eve intercepts $k \geq 1$ photons per intercepted pulse.
    \item \textbf{Channel efficiency, $p^{}_{EB}$}: Eve selects a channel efficiency $0 \leq p^{}_{EB} \leq 1$ to transmit intercepted pulses, simulating legitimate loss to avoid detection.
\end{enumerate}

While many traditional PNS attack models assume that Eve optimizes the channel efficiency $p^{}_{EB}$ to maintain the appearance of normal channel loss, they often implicitly assume some or all of the following:

\begin{itemize}
    \item Eve intercepts every pulse ($\Delta = 1$).
    \item Eve intercepts immediately after Alice ($d^{}_{AE} = 0$).
    \item Eve intercepts exactly one photon per pulse ($k = 1$).
\end{itemize}

By allowing Eve to vary these parameters beyond just optimizing $p^{}_{EB}$, the generalized PNS attack model captures a broader range of potential eavesdropping strategies, which is crucial for accurately inferring $\Delta$, the proportion of intercepted pulses. Assuming a fixed scenario for $d^{}_{AE}$, $k$, or $\Delta$ could lead to incorrect inferences about the extent of Eve's presence. Furthermore, this generalized model enhances the inference on $\Delta$; for instance, if Eve intercepts $k > 1$ photons per pulse while the model assumes $k = 1$, the model might compensate by predicting a larger $\Delta$ than actually exists. By considering a wider variety of possible attack scenarios, the generalized model provides a more comprehensive understanding of the robustness of QKD protocols and improves the accuracy of security assessments.

\subsubsection{PNS Model}

To simplify our derivation, we start with a basic scenario where Alice, Eve, and Bob are directly connected, with Bob using a single detector. This simplified model provides a foundational basis that can be extended to more complex and realistic setups using previously derived lemmas. More specifically, a detection setup that includes both a fiber and a detector can be modeled as a single detector with an effective efficiency that accounts for the fiber loss factor (\Cref{lem:fiber_detector}). For setups involving multiple detectors, the beam-splitter probabilities can be integrated into the detection efficiencies (\Cref{lem:fiber_beamsplitter} and \Cref{lem:pair_detectos}). Additionally, a laser source followed by a fiber can be modeled as a source with reduced intensity, scaled by the fiber's transmission probability (\Cref{lem:laser_fiber}). This modular approach allows us to seamlessly extend the simplified model to more sophisticated real-world configurations without sacrificing analytical rigor.

In this scenario, Alice sends $n$ photons, sampled from a Poisson distribution with parameter $\lambda$, directly to Bob, while Eve intercepts the pulse by capturing up to $k$ photons. If $n < k$, Eve captures all $n$ photons, and Bob receives the remaining photons (see \Cref{fig:PNS}). We present a closed-form expression for the probability of Bob's detector clicking under these conditions in \Cref{theorem:PNS}.
\begin{figure}[h]
    \centering
    \includegraphics[scale=0.25]{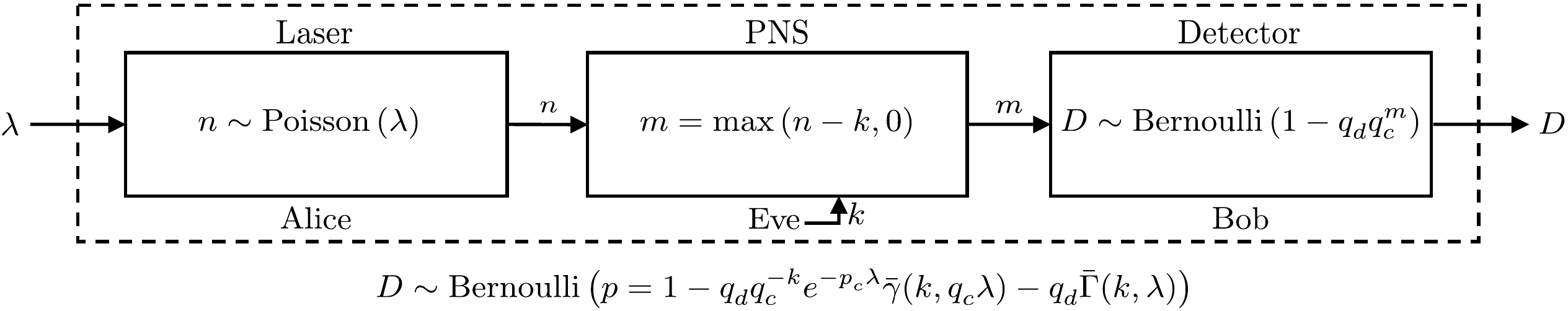}
    \caption{Model representation of the Photon Number Splitting (PNS) attack scenario. Here, Alice uses a laser source emitting photons following a Poisson distribution with mean $\lambda$. Eve intercepts the communication, capturing $k$ photons using a PNS attack strategy. The remaining photons, $m = \max(n - k, 0)$, reach Bob. The probability of a detection event at Bob's detector follows a Bernoulli distribution as per \Cref{theorem:PNS}.}
    \label{fig:PNS}
\end{figure}
\begin{theorem}
\label{theorem:PNS}
Consider a QKD setup where Alice's photon source emits pulses with the number of photons following a Poisson distribution with mean $\lambda$. Eve intercepts up to $k$ photons from each pulse. The remaining photons are sent to Bob, who uses a detector characterized by an efficiency $p^{}_{c}$ and a dark count probability $p^{}_{d}$. Under these conditions, the detection event $D$ at Bob's detector is distributed as follows:

\begin{align}
	P(D \mid \lambda, p^{}_{c}, p^{}_{d}, k)  = \text{Bernoulli}\left(D\mid p=1 - q^{}_{d} q^{-k}_{c} e^{-p^{}_{c} \lambda} \bar{\gamma}(k,q^{}_{c} \lambda) - q^{}_{d} \bar{\Gamma}(k,\lambda)\right),
\end{align}
where $q^{}_{c}=1-p^{}_{c}$ and $q^{}_{d}=1-p^{}_{d}$. $\bar{\Gamma}(s,x)$ and $\bar{\gamma}(s,x) = 1 - \bar{\Gamma}(s,x)$ are the upper and lower regularized incomplete gamma functions, respectively~\cite{NIST:DLMF}. The regularized upper incomplete gamma function is defined as~\cite{NIST:DLMF}:

\[
\bar{\Gamma}(s,x) = \frac{\Gamma(s, x)}{\Gamma(s, 0)}, \quad \Gamma(s, x) = \int^{\infty}_{x} t^{s-1} e^{-t} \, dt.
\]
\end{theorem}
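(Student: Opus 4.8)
The plan is to condition on the number of photons $n$ emitted by Alice (Poisson with mean $\lambda$), split into the two regimes $n < k$ and $n \geq k$, and then marginalize. When $n < k$, Eve captures all $n$ photons, so Bob receives $m = 0$ photons and can only register a dark count, giving a non-click probability $q^{}_{d}$. When $n \geq k$, Bob receives $m = n - k$ photons, each detected independently with probability $p^{}_{c}$, so by \Cref{lem:detector} the non-click probability is $q^{}_{d} q^{\,n-k}_{c}$. The overall click probability is $P(D=1) = 1 - \sum_{n=0}^{\infty} P(D=0 \mid n)\,\text{Poisson}(n \mid \lambda)$, and I would compute the complementary non-click probability $P(D=0)$ by summing these two contributions against the Poisson weights.

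Concretely, I would write
\begin{align}
P(D=0) = q^{}_{d}\sum_{n=0}^{k-1} e^{-\lambda}\frac{\lambda^{n}}{n!} \;+\; q^{}_{d}\sum_{n=k}^{\infty} q^{\,n-k}_{c}\, e^{-\lambda}\frac{\lambda^{n}}{n!}. \nonumber
\end{align}
The first (finite) sum is a truncated Poisson tail, which I would express via the regularized incomplete gamma function using the standard identity $\sum_{n=0}^{k-1} e^{-\lambda}\lambda^{n}/n! = \bar{\Gamma}(k,\lambda)$; this accounts for the $q^{}_{d}\bar{\Gamma}(k,\lambda)$ term in the claimed formula. For the second sum, I would substitute $j = n-k$, pull out $q^{-k}_{c}$, and recognize $\sum_{n=k}^{\infty} q^{\,n}_{c}\,\lambda^{n}/n!$ as an incomplete exponential series whose value is $e^{q^{}_{c}\lambda}\bar{\gamma}(k, q^{}_{c}\lambda)$ after identifying the lower tail with the lower regularized incomplete gamma function. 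Combining the factor $e^{-\lambda}$ with $e^{q^{}_{c}\lambda} = e^{-p^{}_{c}\lambda}e^{\lambda}$ then yields the $q^{}_{d} q^{-k}_{c} e^{-p^{}_{c}\lambda}\bar{\gamma}(k, q^{}_{c}\lambda)$ term.

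The main obstacle, and the step requiring the most care, is correctly matching the two partial Poisson-type sums to the regularized incomplete gamma functions with the right arguments and normalization, since the definitions involve $\Gamma(s,x)/\Gamma(s,0)$ and the index ranges must align precisely (note that $\bar\Gamma(k,\lambda)$ here uses integer $s=k$, for which $\Gamma(k,\lambda)/(k-1)!$ reproduces the truncated Poisson sum). The subtlety in the second sum is that $\sum_{n=k}^{\infty} (q^{}_{c}\lambda)^{n}/n!$ is a \emph{tail} of the exponential series for $e^{q^{}_{c}\lambda}$, and the tail $\sum_{n=k}^{\infty} x^{n}/n! = e^{x}\,\bar\gamma(k,x)$ corresponds to the \emph{lower} regularized gamma $\bar\gamma$ rather than the upper one—a point where a sign or index slip would flip $\bar\gamma \leftrightarrow \bar\Gamma$. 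Once both sums are expressed correctly and the exponential prefactors are consolidated, subtracting from $1$ gives the stated Bernoulli parameter, and I would finish by sanity-checking the limit $k \to \infty$ (recovering $P(D=1) = p^{}_{d}$, since Eve takes everything) and $k = 0$ (recovering \Cref{lem:laser_detector}, the unattacked laser-detector result), to confirm the arguments and normalizations are consistent.
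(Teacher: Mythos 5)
Your proposal is correct and follows essentially the same route as the paper's proof: condition on the Poisson photon number $n$, split at $n<k$ versus $n\ge k$, identify the truncated Poisson sum with $\bar\Gamma(k,\lambda)$, and recognize the exponential-series tail as $e^{q^{}_{c}\lambda}\bar\gamma(k,q^{}_{c}\lambda)$. The only (cosmetic) difference is that you compute the non-click probability $P(D=0)$ directly, which skips the paper's intermediate cancellation of the $\bigl(1-\bar\Gamma(k,\lambda)\bigr)$ and $p^{}_{d}\bar\Gamma(k,\lambda)$ terms; your closing sanity checks at $k\to\infty$ and $k=0$ are a worthwhile addition.
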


\begin{proof}
To derive the click probability $P(D=1 \mid \lambda, p^{}_{c}, p^{}_{d}, k)$, we consider all possible numbers of photons $n$ that Alice could send, and we marginalize over these photon counts to account for the different scenarios of interception and detection. The total probability is computed as:

\[
\begin{aligned}
P(D=1 \mid \lambda, p^{}_{c}, p^{}_{d}, k) &= \sum^{\infty}_{n=0} P(D=1 \mid p^{}_{c}, p^{}_{d}, n)   P(n \mid \lambda, k).
\end{aligned}
\]

In this expression, the first term represents the click probability given $n$ photons and the detector's characteristics. For $n < k$, Eve intercepts all photons, leaving none for Bob, so the click probability reduces to the dark count probability $p^{}_{d}$. For $n \geq k$, Eve intercepts exactly $k$ photons, and the click probability is determined by the remaining $n-k$ photons. Therefore, we can split the summation into two parts based on the value of $n$:

\[
\begin{aligned}
P(D=1 \mid \lambda, p^{}_{c}, p^{}_{d}, k) &= \sum^{\infty}_{n=k} P(D=1 \mid p^{}_{c}, p^{}_{d}, n-k)   P(n \mid \lambda) + p^{}_{d} \sum^{k-1}_{n=0} P(n \mid \lambda).
\end{aligned}
\]

Expanding this expression, we use the Bernoulli distribution for the click probability, from \Cref{lem:detector}, and the Poisson distribution for the photon count:

\[
\begin{aligned}
P(D=1 \mid \lambda, p^{}_{c}, p^{}_{d}, k) &= \sum^{\infty}_{n=k} \left(1 - q^{}_{d} q^{n-k}_{c}\right) \frac{\lambda^n e^{-\lambda}}{n!} + p^{}_{d} \sum^{k-1}_{n=0} \frac{\lambda^n e^{-\lambda}}{n!}.
\end{aligned}
\]

Recognizing that the second summation represents the cumulative distribution function (CDF) of the Poisson distribution, and utilizing the following relationship~\cite{NIST:DLMF}:

\[
\sum^{k}_{n=0} \text{Poisson}(n \mid \lambda) = \text{PoissonCDF}(k,\lambda) = \bar{\Gamma}(k+1,\lambda),
\]
we rewrite the expression as:

\[
P(D=1 \mid \lambda, p^{}_{c}, p^{}_{d}, k) = \sum^{\infty}_{n=k} \left(1 - q^{}_{d} q^{n-k}_{c}\right) \frac{\lambda^n e^{-\lambda}}{n!} + p^{}_{d} \bar{\Gamma}(k,\lambda).
\]

To further simplify, we split the exponential series:

\begin{align}
P(D=1 \mid \lambda, p^{}_{c}, p^{}_{d}, k) &= \sum^{\infty}_{n=k} \frac{\lambda^n e^{-\lambda}}{n!} - q^{}_{d} \sum^{\infty}_{n=k} q^{n-k}_{c} \frac{\lambda^{n} e^{-\lambda}}{n!} + p^{}_{d} \bar{\Gamma}(k,\lambda),\nonumber\\
=&~1-\bar\Gamma\left(k,\lambda\right)-q^{}_{d}q^{-k}_{c}e^{-\lambda}\sum^{\infty}_{n=k}\frac{\left(q^{}_{c}\lambda\right)^{n}}{n!}+p^{}_{d}\bar\Gamma\left(k,\lambda\right)\nonumber\\
=&~1-q^{}_{d}q^{-k}_{c}e^{-\lambda}e^{q^{}_{c}\lambda}\sum^{\infty}_{n=k}\frac{\left(q^{}_{c}\lambda\right)^{n}}{n!}e^{-q^{}_{c}\lambda}-\bar\Gamma\left(k,\lambda\right)+p^{}_{d}\bar\Gamma\left(k,\lambda\right)\nonumber\\
=&~1-q^{}_{d}q^{-k}_{c}e^{-\lambda+q^{}_{c}\lambda}\bar\gamma\left(k,q^{}_{c}\lambda\right)-\left(1-p^{}_{d}\right)\bar\Gamma\left(k,\lambda\right)\nonumber\\
=&~1-q^{}_{d}q^{-k}_{c}e^{-p^{}_{c}\lambda}\bar\gamma\left(k,q^{}_{c}\lambda\right)-q^{}_{d}\bar\Gamma\left(k,\lambda\right).\nonumber
\end{align}
\end{proof}

Notably, by utilizing the relationship between the Poisson CDF and the incomplete gamma function, we extend the domain of $k$ to the non-negative real numbers. This extension is particularly useful when computing gradients later, as $k$ can now be treated as a continuous parameter rather than a discrete summation index and the derivatives of the incomplete gamma functions with respect to it can be derived.

\subsubsection*{Alice to Eve}

According to \Cref{lem:laser_fiber}, the combination of the laser source and the fiber from Alice to Eve can be modeled as a Poisson distribution with a reduced intensity parameter $\lambda p^{}_{AE}$, where $p^{}_{AE} = 10^{-\alpha d^{}_{AE}/10}$, $d^{}_{AE}$ being the distance between Alice and Eve, $\lambda$ representing the average photon number emitted by Alice, and $\alpha$ as the attenuation coefficient in dB/km.

\subsubsection*{Eve to Bob}

If the measured error probability due to misalignment in the beam splitter is $p^{}_{e}$, the corresponding angular misalignment can be computed as:

\begin{align}
	\hat{p}_{e} = \arcsin\left(\sqrt{p^{}_{e}}\right).
\end{align}

The probability that a photon reaches detector $i$ for specific choices of bit $x$, and bases $a$ and $b$, is given by:

\begin{align}
p(i \mid x, a, b, p^{}_{e}) = \cos\left( \frac{\pi}{2}(i + x) - \frac{\pi}{4}(a - b) + \hat{p}_{e} \right)^2.\label{eq:p_bs}
\end{align}

We can continue modeling the photon transmission through the system by applying \Cref{lem:fiber_beamsplitter} and \Cref{lem:fiber_detector} to combine the effects of the fiber loss from Eve to Bob, the beam splitter, and the detector into a single detector model with an effective efficiency scaled by both the fiber loss and the beam splitter path probability. We then state the following corollary derived from \Cref{theorem:PNS}:

\begin{corollary}
Let $\theta$ represent the set of system parameters:

\begin{align}
	\theta = \{\underset{\theta^{}_{A}}{\underbrace{\lambda, \alpha, d^{}_{AB}}}, \underset{\theta^{}_{B}}{\underbrace{p^{}_{c}, p^{}_{d}, p^{}_{e}}}, \underset{\theta^{}_{E}}{\underbrace{d^{}_{AE}, p^{}_{EB}, k, \Delta}}\}.
\end{align}

The probability of detection at detector $D^{}_{i}$ for specific choices of bases ($a$ and $b$), a bit choice ($x$), intensity ($\lambda$), and whether Eve intercepted or not ($e$), is given by:

\begin{align}
P(D^{}_{i}\mid \theta, a, b, x, e,\lambda) = P(D \mid \hat{\lambda}, p^{}_{c^{}_{i}}, p^{}_{d^{}_{i}}, \hat{k}),
\end{align}
where $P(D \mid \hat{\lambda}, \hat{p}_{c}, p^{}_{d^{}_{i}}, \hat{k})$ is the probability of detection from \Cref{theorem:PNS}, and the adjusted parameters $\hat{\lambda}$, $\hat{p}_{c}$, and $\hat{k}$ are defined as:

\begin{align}
\hat{\lambda} &= \lambda p^{(1-e)}_{AB} p^{e}_{AE},\\
\hat{p}_{c} &= p(i \mid x, a, b, p^{}_{e}) p^{}_{c^{}_{i}} p^{e}_{EB},\\
\hat{k} &= e k,
\end{align}
where $p^{}_{AE} = 10^{-\alpha d^{}_{AE}/10}$ and $p^{}_{AB} = 10^{-\alpha d^{}_{AB}/10}$ are the channel efficiencies from Alice to Bob and Alice to Eve respectively.
\end{corollary}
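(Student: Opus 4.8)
The plan is to treat the two possible values of Eve's interception indicator $e\in\{0,1\}$ separately and then observe that the unified exponents $p^{(1-e)}_{AB}p^{e}_{AE}$, $p^{e}_{EB}$, and $ek$ collapse to the correct parameters in each case. The whole argument is essentially a bookkeeping exercise in chaining the modular lemmas already established, since every optical element downstream of the photon source acts as an independent binomial- (or Poisson-) thinning operation, and such operations compose multiplicatively into a single effective source intensity and a single effective detector efficiency.

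For the intercepted case $e=1$, I would first apply \Cref{lem:laser_fiber} to fold the laser source and the Alice-to-Eve fiber into an equivalent Poisson source of mean $\lambda p^{}_{AE}=\hat\lambda$ arriving at Eve's interception point. Eve then removes $k$ photons, so the residual stream is exactly the input assumed in \Cref{theorem:PNS} with $\hat k = k$. Next I would collapse everything downstream of Eve---the Eve-to-Bob channel $p^{}_{EB}$, the beam-splitter routing probability $p(i\mid x,a,b,p^{}_{e})$ from \eqref{eq:p_bs}, and the detector efficiency $p^{}_{c^{}_{i}}$---into a single effective detector: \Cref{lem:fiber_detector} absorbs the channel loss $p^{}_{EB}$ into the efficiency, and \Cref{lem:fiber_beamsplitter} absorbs the beam-splitter path probability, yielding the combined efficiency $\hat p_{c}=p(i\mid x,a,b,p^{}_{e})\,p^{}_{c^{}_{i}}\,p^{}_{EB}$. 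Invoking \Cref{theorem:PNS} with the triple $(\hat\lambda,\hat p_{c},p^{}_{d^{}_{i}})$ and $\hat k=k$ then delivers the claimed expression directly.

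For the non-intercepted case $e=0$, the same chaining applies but with the full Alice-to-Bob fiber: \Cref{lem:laser_fiber} gives an effective source mean $\lambda p^{}_{AB}=\hat\lambda$, and folding only the beam splitter and detector (no $p^{}_{EB}$ factor) gives $\hat p_{c}=p(i\mid x,a,b,p^{}_{e})\,p^{}_{c^{}_{i}}$. Here $\hat k = 0$, so I would verify that \Cref{theorem:PNS} degenerates to the plain laser-detector Bernoulli law of \Cref{lem:laser_detector}. This degeneration is the one step that is not pure bookkeeping and is where I expect the only real subtlety: from the Poisson-CDF identity $\sum_{n=0}^{k-1}\mathrm{Poisson}(n\mid\lambda)=\bar\Gamma(k,\lambda)$ used in the proof of \Cref{theorem:PNS}, the empty sum at $k=0$ forces $\bar\Gamma(0,\lambda)=0$ and hence $\bar\gamma(0,q^{}_{c}\lambda)=1$, so the click probability reduces to $1-q^{}_{d}e^{-p^{}_{c}\lambda}$. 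One must be careful that this is interpreted through the Poisson-CDF/series definition rather than the ratio $\Gamma(s,x)/\Gamma(s)$, which is ill-defined at $s=0$ since $\Gamma(0)$ diverges.

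Finally, I would confirm that the unified exponent bookkeeping is consistent: substituting $e=0$ into $\hat\lambda=\lambda p^{(1-e)}_{AB}p^{e}_{AE}$, $\hat p_{c}=p(i\mid x,a,b,p^{}_{e})\,p^{}_{c^{}_{i}}\,p^{e}_{EB}$, and $\hat k=ek$ recovers the no-Eve parameters, while $e=1$ recovers the intercepted parameters, so both cases are captured by the single formula asserted in the corollary. Because the thinning operations commute, the post-interception losses may be composed in any order, and no additional care about the physical ordering of channel, beam splitter, and detector is required.
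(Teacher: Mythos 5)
Your proposal is correct and follows essentially the same route the paper takes: folding the source and upstream fiber via \Cref{lem:laser_fiber}, collapsing the downstream channel, beam splitter, and detector into a single effective efficiency via \Cref{lem:fiber_detector} and \Cref{lem:fiber_beamsplitter}, and invoking \Cref{theorem:PNS} with the case split on $e$. Your explicit check that the $e=0$ case degenerates correctly (with $\bar\Gamma(0,\lambda)=0$ interpreted through the Poisson-CDF identity, recovering \Cref{lem:laser_detector}) is a detail the paper leaves implicit, but it does not change the approach.
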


\subsubsection{All Possible Detection Events}

We now have all the necessary components to derive the probabilities of the four joint detection events that Bob can observe, given the set of session parameters, $\theta$, and pulse parameters, namely $a$, $b$, $x$ and $e$. By deriving the marginal and union probabilities, as discussed in \Cref{sec:pair_of_detectors} , we have:

\begin{align}
    \mathbf{P}_{abxe\lambda}^{\vee}(\theta) &= P(D^{}_{0} \vee D^{}_{1} = 1 \mid \theta,a,b,x,e,\lambda),\\
    \mathbf{P}_{abxe\lambda}^{*1}(\theta) &= P(D^{}_{0} = 1 \mid \theta,a,b,x,e,\lambda),\\
    \mathbf{P}_{abxe\lambda}^{1*}(\theta) &= P(D^{}_{1} = 1 \mid \theta,a,b,x,e,\lambda).
\end{align}

Using these, the joint probabilities can be reconstructed as:

\begin{align}
    \mathbf{P}_{abxe\lambda}^{00}(\theta) &= 1 - \mathbf{P}_{abxe\lambda}^{\vee}(\theta),\\
    \mathbf{P}_{abxe\lambda}^{01}(\theta) &= \mathbf{P}_{abxe\lambda}^{\vee}(\theta) - \mathbf{P}_{abxe\lambda}^{1*}(\theta),\\
    \mathbf{P}_{abxe\lambda}^{10}(\theta) &= \mathbf{P}_{abxe\lambda}^{\vee}(\theta) - \mathbf{P}_{abxe\lambda}^{*1}(\theta),\\
    \mathbf{P}_{abxe\lambda}^{11}(\theta) &= \mathbf{P}_{abxe\lambda}^{*1}(\theta) + \mathbf{P}_{abxe\lambda}^{1*}(\theta) - \mathbf{P}_{abxe\lambda}^{\vee}(\theta).
\end{align}

From \Cref{lem:pair_detectos}, we know that the union probability for a two-detector system can be modeled as a single detector with adjusted dark count and click probabilities:

\begin{align}
    p^{\vee}_{d} &= 1 - q^{}_{d^{}_{0}}q^{}_{d^{}_{1}},\\
    p^{\vee}_{c} &= p(1, x, a, b, p^{}_{e})p^{}_{c^{}_{1}} + p(0, x, a, b, p^{}_{e})p^{}_{c^{}_{0}}.
\end{align}

We denote the set of four possible outcomes given the session and pulse parameters as:
\begin{align}
	\mathbf{P}_{abxe\lambda}(\theta) = \left[\mathbf{P}_{abxe\lambda}^{00}(\theta)\quad \mathbf{P}_{abxe\lambda}^{01}(\theta)\quad \mathbf{P}_{abxe\lambda}^{10}(\theta)\quad \mathbf{P}_{abxe\lambda}^{11}(\theta)\right].\label{eq:P_abxel}
\end{align}
\subsubsection{Marginalization of Unknowns in the Sifting Phase}

For the purpose of inferring the percentage of tagged photons $\Delta$, we rely only on the information available during the sifting phase, specifically, the basis choices $a$ and $b$ and marginalize over the variables $x$ and $e$:

\begin{align}
	\mathbf{P}_{ab\lambda}(\theta) = \sum^{}_{x\in\{0,1\}}\sum^{}_{e\in\{0,1\}}\mathbf{P}_{abxe\lambda}(\theta)P(e)P(x)\label{eq:Pabl}.
\end{align}

Assuming that Alice sends the bits with equal probabilities using a quantum random number generator, then $P(x=0) = P(x=1) = \frac{1}{2}$. If Eve tags a pulse with probability $\Delta$, then $P(e=1) = \Delta$ and $P(e=0) = 1 - \Delta$. Note from \Cref{eq:p_bs} that:

\begin{align}
p(i\mid x, b, a, p^{}_{e}) = 
\begin{cases} 
p(i\mid 1-x, a, b, p^{}_{e}) & \text{if } a \neq b, \\
p(i\mid x, a, b, p^{}_{e}) & \text{if } a = b.
\end{cases}
\end{align}

Thus, swapping the basis will either swap the terms of the sum in \Cref{eq:Pabl} when $a \neq b$, or it will have no effect when $a = b$. Consequently, the probability only depends on whether the bases matched or not, or more formally:

\begin{align}
	\mathbf{P}_{ab\lambda}(\theta) = \mathbf{P}_{ba\lambda}(\theta).\label{eq-symmetric-bases}
\end{align}

Therefore, for analytical purposes, we can assume Bob always measures in a fixed basis (e.g., $b = 1$), while Alice switches her basis with 50\% probability. The observed probabilities for $a = b$ and $a \neq b$ will be identical.

In conclusion, for a specific intensity $\lambda$, Bob can observe two distinct probabilities for a given detector $i$ based on whether the bases matched or not:

\begin{align}
	\mathbf{P}_{m\lambda}(\theta) &= 	\mathbf{P}_{(a=m)(b=1)\lambda}(\theta)\label{eq:Pi_m},
\end{align}
where $m$ is a flag to encode whether Alice and Bob have matching basis. Finally, the set of probabilities for all possible outcomes is:

\begin{align}
    \mathbf{P}_{\lambda}(\theta) = \left[\mathbf{P}_{(m=0)\lambda}(\theta)   P(m=0)\quad \mathbf{P}_{(m=1)\lambda}(\theta)   P(m=1)\right],\label{eq:P_iid}
\end{align}

If we assume Alice and Bob switch bases with equal probability, $P(m=0) = P(m=1) = \frac{1}{2}$, this simplifies to:

\begin{align}
    \mathbf{P}_{\lambda}(\theta) = \left[\mathbf{P}_{(m=0)\lambda}(\theta)\quad \mathbf{P}_{(m=1)\lambda}(\theta) \right]/2.
\end{align}

\subsubsection{Multiple Intensities}
\label{sec-multiple-intensities}
Thus far, our probabilistic model has assumed a single intensity $\lambda$ for the session. However, as discussed in \Cref{subsec:generalized_pns} , Eve can manipulate the channel by choosing a different attenuation rate, allowing her to disguise her presence as normal channel loss. With a single intensity, Eve can select a channel efficiency that minimizes the statistical differences between the click distributions with and without her interference, potentially reducing the distinguishability between $P(\theta|e=0)$ and $P(\theta|e=1)$ (e.g., by minimizing the KL divergence between them). 

To mitigate this, we must use multiple intensities, ensuring that Eve cannot conceal herself by tuning a single channel efficiency value. By varying the intensities, it becomes more challenging for Eve to simultaneously minimize the statistical differences across all intensity settings, thereby making her presence more detectable.

We adopt a straightforward strategy of selecting equally spaced intensities, using at least four to account for the four unknown parameters in Eve's potential attack (i.e. $d^{}_{AE}$, $p^{}_{EB}$, $k$ and $\Delta$). The range for the minimum and maximum intensity values is determined using the following heuristics:

\begin{itemize}
    \item $\lambda^{}_{\min}$: The lowest intensity where, if Eve intercepts a single photon immediately after Alice, she must use a channel with efficiency $p^{}_{EB} = 1$ to avoid detection.
    
    \item $\lambda^{}_{\max}$: The highest intensity that maximizes the proportion of events where $D^{}_{0} \oplus D^{}_{1}$ occurs, as cases where both detectors either click or do not click provide no useful information and appear uniformly random from Bob's perspective.
\end{itemize}

To reflect the use of multiple intensities, we update our notation as follows:

\begin{align}
\mathbf{P}(\theta) = \left[\mathbf{P}_{\lambda^{}_{1}}(\theta)\quad \mathbf{P}_{\lambda^{}_{2}}(\theta)\quad\hdots\quad \mathbf{P}_{\lambda^{}_{N^{}_{\lambda}}}(\theta)\right]/N^{}_{\lambda},\label{eq:P_iid}	
\end{align}
where $N^{}_{\lambda}$ is the number of intensities. In this study, we assume that each intensity is equally likely to be selected. Note that while other protocols discard non-matching bases, we still probabilistically modeled them to utilize their statistics when infering Eve’s presence (though not for key generation), thus increasing the statistical power to detect eavesdropping.

\subsection{After-Pulsing}

In real-world QKD systems, photon detectors are affected by after-pulsing, a phenomenon where a click from a prior detection event increases the chance of another click in the next pulse, regardless of whether a new photon is actually detected~\cite{restelli2013afterpulse}. This effect can introduce correlations between detection events, breaking the independence assumption that would otherwise imply each event is isolated from others. After-pulsing can be reduced by increasing the detector's dead time, which lowers the pulse frequency and consequently reduces the key rate. On the other hand, simply neglecting the impact of after-pulsing results in incorrect error rate assessments and inaccurate key rate calculations. Both approaches fail to fully address the complexities introduced by after-pulsing, potentially compromising the security analysis of a QKD system or the key generation rate.

To address this challenge, we employ a Hidden Markov Model (HMM) to rigorously capture the behavior of detectors affected by after-pulsing. This approach improves the accuracy of security analysis and key rate calculations. The details of this HMM-based modeling are presented in the following subsections.

\subsubsection{Single Detector, Single Probability}
Consider a scenario where Alice repeatedly sends identical pulses to Bob, consistently choosing the same bit value $x$, basis $a$, and intensity $\lambda$. Bob, on the other hand, measures these pulses in a fixed basis $b$, with no interference from Eve. The signal will cause the detector to click with probability $p^{}_{1}$, and, consequently, does not click with probability $p^{}_{0}=1-p^{}_{1}$. The detector then can be in one of three internal states, depending on its recent detection history, see \Cref{fig:hmm_single}:

\begin{itemize}
    \item \textbf{State $S^{}_{0}$}: No click occurs, at which it remains in this state with probability $p^{}_{0}$, or transition to the $S^{}_{1}$ state with probability $p^{}_{1}$.
    \item \textbf{State $S^{}_{1}$}: A click due to either a signal or a dark count. It will then transition to the after-pulse state $S^{}_{\hat{1}}$ with probability $q$, or with probability $p^{}_{0}(1 - q)$ transition to state $S^{}_{0}$ or remain in state $S^{}_{1}$ with probability $p^{}_{1}(1 - q)$.
    \item \textbf{State $S^{}_{\hat{1}}$}: A click due to an after-pulse effect. This state occurs following a click in the previous detection window (state $S^{}_{1}$). Then it transitions to the $S^{}_{0}$ state with probability $p^{}_{0}$ or state $S^{}_{1}$ with probability $p^{}_{1}$, but it cannot remain in this state.
\end{itemize}
\begin{figure}[h]
    \centering
    \includegraphics[scale=0.25]{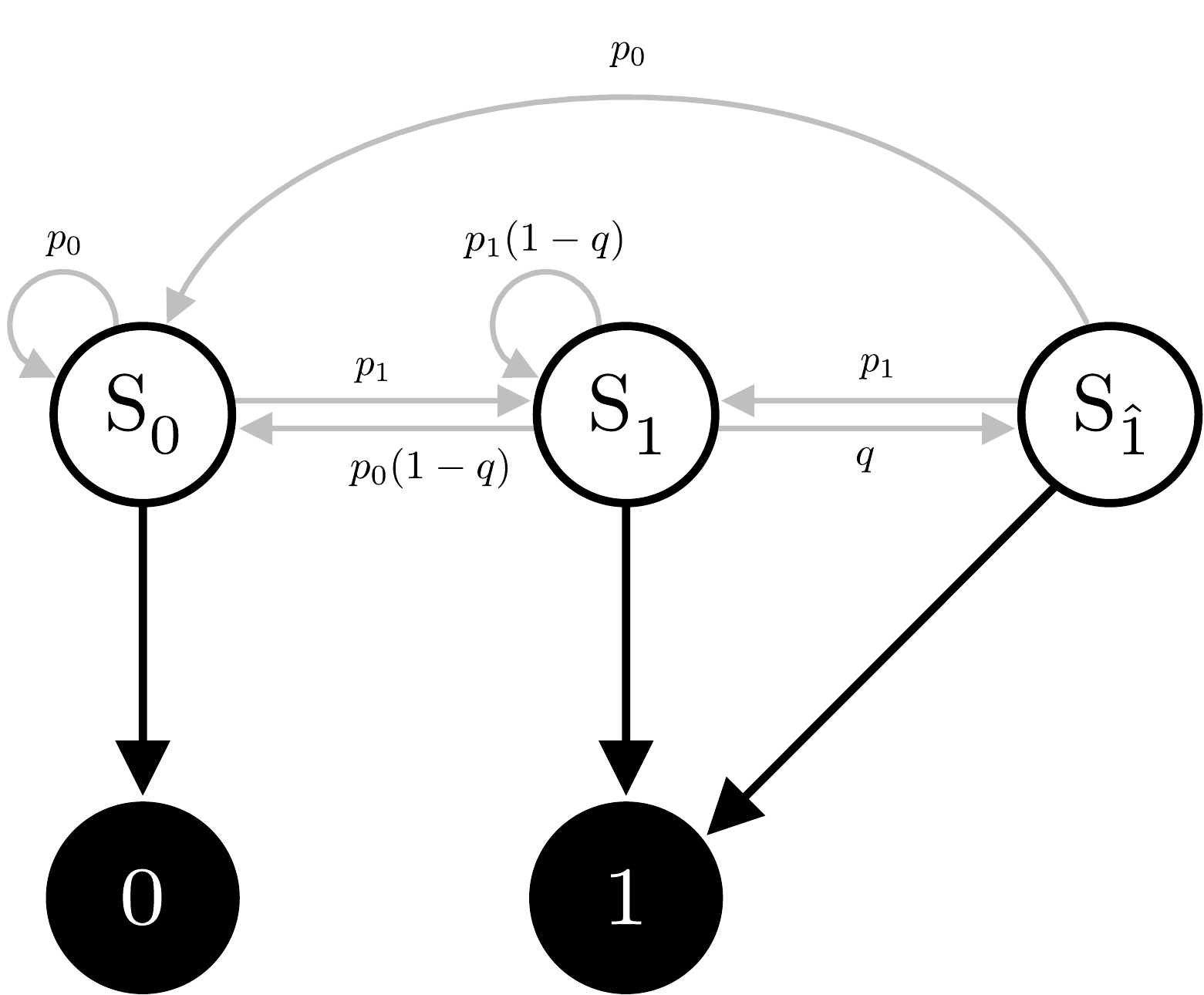}
    \caption{Hidden Markov Model (HMM) representation of the single detector scenario. The states $S^{}_{0}$, $S^{}_{1}$, and $S^{}_{\hat{1}}$ represent no click, a click due to signal or dark count, and a click due to an after-pulse, respectively. Arrows indicate possible state transitions with associated probabilities.}
    \label{fig:hmm_single}
\end{figure}

Hidden Markov Models (HMMs) are characterized by three sets of probabilities:

\begin{itemize}
    \item \textbf{Transition matrix ($\mathbf{T}$)}: Probabilities of transitioning from one hidden state to another (grey arrows in \Cref{fig:hmm_single}).

    \item \textbf{Emission matrix ($\mathbf{E}$)}: Probabilities of observing specific outputs given the current hidden state (black arrows in \Cref{fig:hmm_single}).

    \item \textbf{Initial state probabilities ($\boldsymbol{\pi}$)}: Represents the probabilities of starting in each hidden state at the beginning of the process (usually assumed uniform or alwayas starts at a specific state).
\end{itemize}

For the HMM presented in \Cref{fig:hmm_single}, the transition and emission matrix are defined as follows:

\begin{align}
\mathbf{T}(p,q) = 
\begin{blockarray}{llll}
  & S^{}_{0} & S^{}_{1} & S^{}_{\hat{1}} \\
\begin{block}{l[lll]}
  S^{}_{0} & p^{}_{0} & p^{}_{1} & 0 \\
  S^{}_{1} & p^{}_{0}(1-q) & p^{}_{1}(1-q) & q \\
  S^{}_{\hat{1}} & p^{}_{0} & p^{}_{1} & 0 \\
\end{block}
\end{blockarray}, \quad \mathbf{E} = 
	\begin{blockarray}{lcc}
  		& 0 & 1  \\
		\begin{block}{l[cc]}
			S^{}_{0} & 1 & 0 \\
			S^{}_{1} & 0 & 1 \\
			S^{}_{\hat{1}} & 0 & 1 \\
		\end{block}
	\end{blockarray}, \label{eq:emission}
\end{align}
where $p=\left[p^{}_{0}~p^{}_{1}\right]$. The emission matrix $\mathbf{E}$, which represents the probabilities of observing each possible outcome (0 or 1) given the detector's state, the black arrows in \Cref{fig:hmm_single}, is structured as follows:

We assume a first-order Markov chain (i.e., the next state depends only on the current state) and disallow consecutive after-pulse states. This reflects typical detector behavior; however, the model is flexible: transition probabilities can be adjusted for different after-pulsing dynamics, and higher-order Markov chains can be implemented by expanding the state space~\cite{murphy2002dynamic}.

\subsubsection{Two Detectors, Single Probability}

For a two-detector setup, the click probabilities for each detector combination—namely (00), (01), (10), and (11)—have already been derived in \Cref{eq:P_abxel} for specific pulse parameters $a$, $b$, $x$, $e$, and $\lambda$. We denote these probabilities as $p^{}_{00}$, $p^{}_{01}$, $p^{}_{10}$, and $p^{}_{11}$, respectively. Let $q^{}_{0}$ and $q^{}_{1}$ represent the after-pulse probabilities for detectors $D^{}_{0}$ and $D^{}_{1}$, respectively. We also define the following marginal probabilities:

\begin{align}
    p^{}_{*0} &= p^{}_{00} + p^{}_{10}, \\
    p^{}_{*1} &= p^{}_{01} + p^{}_{11}, \\
    p^{}_{0*} &= p^{}_{00} + p^{}_{01}, \\
    p^{}_{1*} &= p^{}_{10} + p^{}_{11}.
\end{align}

For the after-pulse probabilities we would need to derive the opposite, we already have access to the marginals, $q^{}_{0}$ and $q^{}_{1}$, and we would like to derive the joint distributions. Since the after-pulse triggers are considered independent:

\begin{align}
    q^{}_{00} &= (1 - q^{}_{0})(1 - q^{}_{1}), \\
    q^{}_{01} &= 1 - q^{}_{00} - q^{}_{1}, \\
    q^{}_{10} &= 1 - q^{}_{00} - q^{}_{0}, \\
    q^{}_{11} &= q^{}_{0}q^{}_{1}.
\end{align}

In this two-detector scenario, the state and observation spaces expand accordingly. The set of possible states $S$ and observations $O$ are given by:

\begin{align}
    S &= \left\{S^{}_{00}, S^{}_{01}, S^{}_{10}, S^{}_{11}, S^{}_{0\hat{1}}, S^{}_{\hat{1}0}, S^{}_{1\hat{1}}, S^{}_{\hat{1}1}, S^{}_{\hat{1}\hat{1}}\right\}, \\
    O &= \left\{00, 01, 10, 11\right\},
\end{align}
where, for example, $S^{}_{0\hat{1}}$ denotes the state where detector $D^{}_{1}$ did not click and detector $D^{}_{0}$ clicked due to an after-pulse.

The emission matrix $\mathbf{E}$ is constructed to reflect the probabilities of each possible observation given the detector states:

\begin{align}
	\mathbf{E} = 
	\begin{blockarray}{rcccc}
		\begin{block}{rcccc}
		 	& 00 & 01 & 10 & 11 \\
		\end{block}
		\begin{block}{r[cccc]}
			S^{}_{00} & 1 & 0 & 0 & 0 \\
			S^{}_{01} & 0 & 1 & 0 & 0 \\
			S^{}_{10} & 0 & 0 & 1 & 0 \\
			S^{}_{11} & 0 & 0 & 0 & 1 \\
			S^{}_{0\hat{1}} & 0 & 1 & 0 & 0 \\
			S^{}_{\hat{1}0} & 0 & 0 & 1 & 0 \\
			S^{}_{1\hat{1}} & 0 & 0 & 0 & 1 \\
			S^{}_{\hat{1}1} & 0 & 0 & 0 & 1 \\
			S^{}_{\hat{1}\hat{1}} & 0 & 0 & 0 & 1 \\
		\end{block}
	\end{blockarray}.
\end{align}

Constructing the transition matrix $\mathbf{T}$ is more complex due to the extended state space. The matrix $\mathbf{T}$ captures the transition probabilities between each state, accounting for both the regular transitions and those influenced by after-pulsing effects:

\begin{align}
\mathbf{T}(p,q) = \left[
\begin{blockarray}{cccclllll}
\begin{block}{cccc][lllll}
 \phantom{(1-q^{}_{0})\cdot}~[p^{}_{00} & p^{}_{01} & p^{}_{10} & p^{}_{11} & 0 & 0 & 0 & 0 & 0^{}_{\phantom{11}}]\\
\end{block}
\begin{block}{cccc][lllll}
 (1-q^{}_{0})\cdot [p^{}_{00} & p^{}_{01} & p^{}_{10} & p^{}_{11} & p^{}_{0*} & 0 & p^{}_{1*} & 0 & 0^{}_{\phantom{11}}]\cdot q^{}_{0}\\
\end{block}
\begin{block}{cccc][lllll}
 (1-q^{}_{1})\cdot[p^{}_{00} & p^{}_{01} & p^{}_{10} & p^{}_{11} & 0 & p^{}_{*0} & 0 & p^{}_{*1} & 0^{}_{\phantom{11}}]\cdot q^{}_{1}\\
\end{block}
\begin{block}{cccc][lllll}
 \phantom{(1-)}q^{}_{00}\cdot[p^{}_{00} & p^{}_{01} & p^{}_{10} & p^{}_{11} & p^{}_{0*}q^{}_{01} & p^{}_{*0}q^{}_{10} & p^{}_{1*}q^{}_{01} & p^{}_{*1}q^{}_{10} & q^{}_{11}]\\
\end{block}
\begin{block}{cccc][lllll}
 \phantom{(1-q^{}_{0})\cdot}~[p^{}_{00} & p^{}_{01} & p^{}_{10} & p^{}_{11} & 0 & 0 & 0 & 0 & 0^{}_{\phantom{11}}] \\
\end{block}
\begin{block}{cccc][lllll}
 \phantom{(1-q^{}_{0})\cdot}~[p^{}_{00} & p^{}_{01} & p^{}_{10} & p^{}_{11} & 0 & 0 & 0 & 0 & 0^{}_{\phantom{11}}] \\
\end{block}
\begin{block}{cccc][lllll}
 (1-q^{}_{1})\cdot[p^{}_{00} & p^{}_{01} & p^{}_{10} & p^{}_{11} & 0 & p^{}_{*0} & 0 & p^{}_{*1} & 0^{}_{\phantom{11}}]\cdot q^{}_{1}\\
\end{block}
\begin{block}{cccc][lllll}
 (1-q^{}_{0})\cdot [p^{}_{00} & p^{}_{01} & p^{}_{10} & p^{}_{11} & p^{}_{0*} & 0 & p^{}_{1*} & 0 & 0^{}_{\phantom{11}}]\cdot q^{}_{0}\\
\end{block}
\begin{block}{cccc][lllll}
 \phantom{(1-q^{}_{0})\cdot}~[p^{}_{00} & p^{}_{01} & p^{}_{10} & p^{}_{11} & 0 & 0 & 0 & 0 & 0^{}_{\phantom{11}}]\\
\end{block}
\end{blockarray}\right],\label{eq:T_p_q}
\end{align}
where $p=\left[p^{}_{00}~p^{}_{01}~p^{}_{01}~p^{}_{01}\right]$ and $q=\left[q^{}_{0}~q^{}_{1}\right]$. The order of states in both the rows and columns of the transition matrix $\mathbf{T}$ is consistent with the order of the states in the rows of the emission matrix $\mathbf{E}$. 

The matrix is organized into sub-blocks to highlight different transitions. The left sub-block represents the transitions without after-pulsing, while the right sub-blocks account for transitions influenced by after-pulsing. For instance, when the system is in state $S^{}_{01}$, second row, there is a $q^{}_{0}$ probability of transitioning to an after-pulse state, either $S^{}_{0\hat{1}}$ with probability $p^{}_{0*}$ or $S^{}_{1\hat{1}}$ with probability $p^{}_{1*}$. If no after-pulse occurs, the system transitions among states $S^{}_{00}$, $S^{}_{01}$, $S^{}_{10}$, and $S^{}_{11}$ with their respective probabilities $p^{}_{00}$, $p^{}_{01}$, $p^{}_{10}$, and $p^{}_{11}$ but scaled by ($1 - q^{}_{0}$). An illustration for specific sets of $p$ and $q$ is provided in \Cref{fig:hmm_pair}.

\begin{figure}[h]
    \centering
    \includegraphics[scale=0.25]{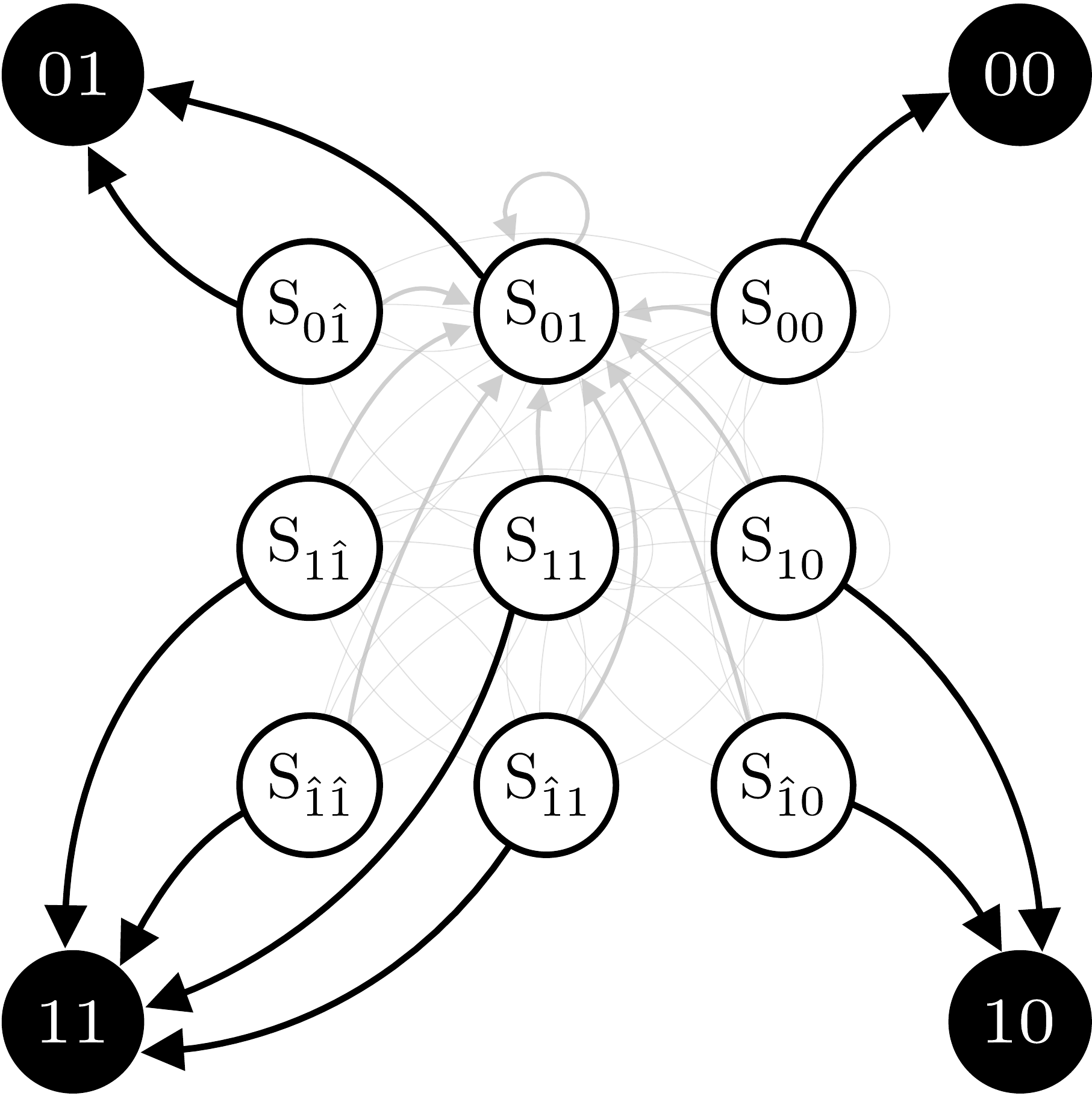}
    \caption{State transition diagram for a specific set of pulse parameters. The figure illustrates the transitions for the probabilities $p = [0.1~0.7~0.1~0.1]$ and after-pulse probabilities $q^{}_{0} = q^{}_{1} = 0.1$. Each state represents a unique combination of detector responses and after-pulse conditions. Arrows indicate the possible transitions between states, with the thickness of the lines being proportional to their associated transition probabilities.}
    \label{fig:hmm_pair}
\end{figure}

\subsubsection{Two Detectors, Multiple Probabilities}
\label{sec:inter-intra-mode}
In the previous subsection, we considered the scenario where Bob receives a repeated pulse with fixed parameters ($a$, $b$, $x$, $e$, and $\lambda$). Under this condition, the behavior of the system could be modeled using a Hidden Markov Model (HMM) with a single transition matrix that corresponds to these fixed parameters.

Now, we consider a more complex scenario where Alice alternates her bit choice $x$ between $0$ and $1$. Although in QKD protocols this bit switching is typically assumed to be independent and occur with equal probability, we will formulate a more general case. We define the transition probabilities between bit states as follows:

\begin{align}
	P(0 \mid 0) &= P(x^{}_{t} = 0 \mid x^{}_{t-1} = 0), \\
	P(0 \mid 1) &= P(x^{}_{t} = 0 \mid x^{}_{t-1} = 1), \\
	P(1 \mid 0) &= P(x^{}_{t} = 1 \mid x^{}_{t-1} = 0), \\
	P(1 \mid 1) &= P(x^{}_{t} = 1 \mid x^{}_{t-1} = 1).
\end{align}

Let $\mathbf{T}_{0}$ and $\mathbf{T}_{1}$ represent the detector dynamics for cases where Alice's bit $x$ was consistently $0$ or $1$, respectively. Given this setup, our goal is to construct an HMM that captures the combined dynamics of the system as Alice's bit choice switches over time. The key is to describe both the intra-mode transitions (transitions within the same bit choice) and inter-mode transitions (transitions between different bit choices), effectively combining the two HMMs ($\mathbf{T}_{0}$ and $\mathbf{T}_{1}$) that represent the system's behavior for $x = 0$ and $x = 1$ (see \Cref{fig:hmms_switch}).

To demonstrate this, consider a generic state $S^{}_{ij}$ representing the detectors' internal states in a particular mode, either $x = 0$ or $x = 1$. The mode here indicates which of the two separate HMMs (either $\mathbf{T}_{0}$ or $\mathbf{T}_{1}$) currently governs the transitions.

If the system is in state $S^{}_{ij}$ and mode $x = 0$, there are two possible types of transitions:

\begin{itemize}
    \item \textbf{Intra-Mode Transition}: With probability $P(0 \mid 0)$, the bit choice remains $x = 0$. The system transitions according to the probabilities defined by the current state's corresponding entry in $\mathbf{T}_{0}$. For example, the probability of transitioning from $S^{}_{ij}$ in mode $0$ to $S^{}_{kl}$ in mode $0$ is given by $P(0 \mid 0) \times \mathbf{T}_{0}(S^{}_{ij} \to S^{}_{kl})$.

    \item \textbf{Inter-Mode Transition}: With probability $P(1 \mid 0)$, the bit choice switches to $x = 1$. The system then transitions according to the probabilities defined by the corresponding entry in $\mathbf{T}_{1}$. For example, the probability of transitioning from $S^{}_{ij}$ in mode $0$ to $S^{}_{kl}$ in mode $1$ is $P(1 \mid 0) \times \mathbf{T}_{1}(S^{}_{ij} \to S^{}_{kl})$.
\end{itemize}

The analysis is equivalent if the system is in mode $x = 1$; however, the intra-mode transitions will involve $P(1 \mid 1)$ and the inter-mode transitions will involve $P(0 \mid 1)$, transitioning according to either $\mathbf{T}_{1}$ or $\mathbf{T}_{0}$, respectively.

Using this framework, the combined transition matrix $\mathbf{T}$ can be constructed to account for both the detector state transitions and the changes in Alice's bit choice. This leads to the following block matrix structure:

\begin{align}
	\mathbf{T} = 
	\begin{bmatrix}
		P(0 \mid 0) \mathbf{T}_{0} & P(1 \mid 0) \mathbf{T}_{1} \\
		P(0 \mid 1) \mathbf{T}_{0} & P(1 \mid 1) \mathbf{T}_{1}
	\end{bmatrix}.
\end{align}

The combined transition matrix $\mathbf{T}$ captures all possible transitions from any state $S^{}_{ij}$ in any mode to any state $S^{}_{kl}$ in any mode, encompassing the full dynamic process of the system as it evolves both within and between modes.

\begin{figure}[h]
    \centering
    \includegraphics[scale=0.25]{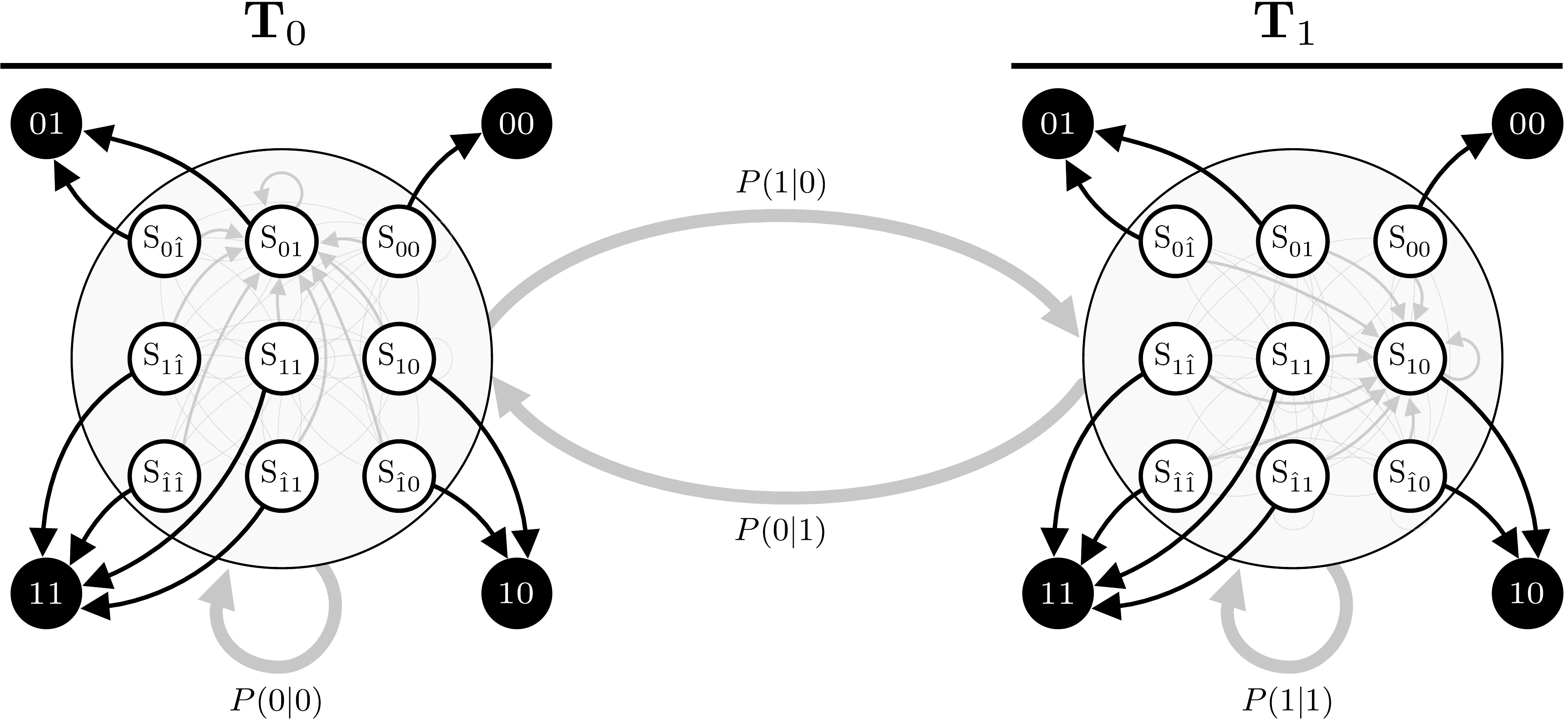}
    \caption{State transition diagram for a specific set of pulse parameters. The left and right clusters represent transitions within the HMMs corresponding to different pulse parameters, specifically for cases where Alice's bit $x$ changes. For illustrative purposes, $\mathbf{T}_{0}$ was constructed using probabilities $p = [0.1, 0.7, 0.1, 0.1]$ and $\mathbf{T}_{1}$ using probabilities $p = [0.1, 0.1, 0.7, 0.1]$, both with after-pulse probabilities $q = [0.1, 0.1]$.}
    \label{fig:hmms_switch}
\end{figure}

\subsubsection{Constructing the Full Transition Matrix}

This section constructs the full HMM by incorporating transitions for all pulse parameters, such as basis choices ($a$ and $b$), bit choices ($x$), the eavesdropping flag ($e$), and intensity settings ($\lambda$), to fully describe the detection dynamics in a quantum key distribution (QKD) system. The transition matrix is built incrementally, starting with individual components and progressively incorporating all relevant variables.

\subsubsection*{Intensity-Based Transitions}

We begin by constructing the transition matrix for a specific set of pulse parameters. This matrix captures the transition dynamics when Alice and Bob's bases ($a, b$), Alice's bit choice ($x$), the presence or absence of eavesdropping ($e$), and the photon intensity ($\lambda^{}_{i}$) are all fixed:

\begin{align}
	\mathbf{T}_{abxe\lambda}(\theta) = \mathbf{T}(\mathbf{P}_{abxe\lambda}(\theta), p^{}_{a}).
\end{align}

Here, $\mathbf{P}_{abxe\lambda}(\theta)$ represents the click probabilities given the session and pulse parameters, as defined in \Cref{eq:P_abxel}. The vector $p^{}_{a} = \left[p^{}_{a^{}_{0}} \; p^{}_{a^{}_{1}}\right]$ denotes the after-pulse probabilities for detectors $D^{}_{0}$ and $D^{}_{1}$, respectively. The parameter set $\theta = \left\{\theta^{}_{A}, \theta^{}_{B}, \theta^{}_{E}\right\}$ includes Alice's, Bob's, and Eve's parameters, where Bob's parameters are updated to include the after-pulse probabilities:

\begin{align}
	\theta^{}_{B} = \left\{p^{}_{a}, p^{}_{c}, p^{}_{d}, p^{}_{e}\right\}.
\end{align}

Next, we construct the transition matrix across all possible intensity settings $\lambda^{}_{i}$, using the same approach outlined in \Cref{sec:inter-intra-mode}, by combining the individual matrices $\mathbf{T}_{abxe\lambda^{}_{i}}(\theta)$ for each $\lambda^{}_{i}$. If Alice transitions between the different intensities uniformly and independently, the transition matrix simplifies to the following:

\begin{align}
	\mathbf{T}_{abxe}(\theta) = 
\begin{bmatrix}
\mathbf{T}_{abxe\lambda^{}_{1}}(\theta) & \cdots & \mathbf{T}_{abxe\lambda^{}_{N^{}_{\lambda}}}(\theta) \\
\vdots & \ddots & \vdots \\
\mathbf{T}_{abxe\lambda^{}_{1}}(\theta) & \cdots & \mathbf{T}_{abxe\lambda^{}_{N^{}_{\lambda}}}(\theta)
\end{bmatrix} \times \frac{1}{N^{}_{\lambda}}.
\end{align}
\subsubsection*{Eavesdropping Transitions}

To incorporate the possibility of eavesdropping, we expand the matrix $\mathbf{T}_{abxe}(\theta)$ to account for scenarios where Eve transitions between intercepting ($e=1$) and not intercepting ($e=0$), with probabilities $\Delta$ and $1-\Delta$, respectively:

\begin{align}
	\mathbf{T}_{abx}(\theta) = 
\begin{bmatrix}
(1 - \Delta) \mathbf{T}_{abx(e = 0)}(\theta) & \Delta \mathbf{T}_{abx(e = 1)}(\theta) \\
(1 - \Delta) \mathbf{T}_{abx(e = 0)}(\theta) & \Delta \mathbf{T}_{abx(e = 1)}(\theta)
\end{bmatrix}.
\end{align}

\subsubsection*{Bit Choice Transitions}

To model transitions associated with Alice's bit choices, we assume she switches between $x = 0$ and $x = 1$ with equal probability (50\%). The corresponding transition matrix that captures this behavior is given by:

\begin{align}
	\mathbf{T}_{ab}(\theta) = 
\begin{bmatrix}
\mathbf{T}_{ab(x = 0)}(\theta) & \mathbf{T}_{ab(x = 1)}(\theta) \\
\mathbf{T}_{ab(x = 0)}(\theta) & \mathbf{T}_{ab(x = 1)}(\theta)
\end{bmatrix} \times \frac{1}{2}.
\end{align}

\subsubsection*{Basis Choice Transitions}

Similarly, we can construct the transition matrix to account for the basis choices made by Alice ($a$) and Bob ($b$), assuming that both choose their bases independently and uniformly, with equal probability for each choice. First by incorporating Bob's basis transition

\begin{align}
	\mathbf{T}_{a}(\theta) = 
\begin{bmatrix}
\mathbf{T}_{a(b = 0)}(\theta) & \mathbf{T}_{a(b = 1)}(\theta) \\
\mathbf{T}_{a(b = 0)}(\theta) & \mathbf{T}_{a(b = 1)}(\theta)
\end{bmatrix} \times \frac{1}{2},
\end{align}
then incorporating Alice's basis transition
\begin{align}
	\mathbf{T}(\theta) = 
\begin{bmatrix}
\mathbf{T}_{(a = 0)}(\theta) & \mathbf{T}_{(a = 1)}(\theta) \\
\mathbf{T}_{(a = 0)}(\theta) & \mathbf{T}_{(a = 1)}(\theta)
\end{bmatrix} \times \frac{1}{2}.
\end{align}

This final matrix $\mathbf{T}(\theta)$ encapsulates the complete set of transition probabilities, integrating all pulse parameters—basis choices, bit choices, eavesdropping scenarios, and intensity settings—to model the full dynamics of the QKD system.

\subsubsection{Extracting the Probabilities of Click Events}

The probabilities of different click events in an HMM, denoted as $\hat{\mathbf{P}}(\theta)$, are derived by first computing the state probabilities from the transition matrix $\mathbf{T}(\theta)$ and then mapping these state probabilities to observations using the emission matrix $\mathbf{E}$. The stationary distribution, $\mathbf{v}(\theta)$, represents the long-term probabilities of visiting each state and satisfies the following equation:

\begin{align}
	\mathbf{T}^{\top}(\theta) \mathbf{v}(\theta) = \mathbf{v}(\theta).
\end{align}

This is equivalent to finding the eigenvector corresponding to the eigenvalue of 1. Since the transition matrix $\mathbf{T}(\theta)$ is aperiodic (states are not revisited at fixed intervals) and irreducible (every state is reachable from any other), the Perron-Frobenius theorem~\cite{perron1907theorie,frobenius1912theorie} shows that this eigenvector exists, is the maximum eigenvector, and is unique. This property guarantees that the system converges to an equilibrium representing its long-term behavior after sufficient transitions, regardless of the initial state~\cite{norris1998markov}. Furthermore, since we are only interested in the eigenvector corresponding to the maximum eigenvalue, we can leverage more efficient decomposition algorithms, such as the Power Method or Krylov subspace methods~\cite{murphy2022probabilistic,stewart2001krylov}.

By normalizing $\mathbf{v}(\theta)$, we obtain a valid probability distribution over the hidden states:

\begin{align}
	\bar{\mathbf{v}}(\theta) = \frac{\mathbf{v}(\theta)}{\sum^{}_{i} v^{}_{i}(\theta)}.\label{eq:v_norm}
\end{align}

The vector $\bar{\mathbf{v}}(\theta)$ has a length of $144 \times N^{}_{\lambda}$, corresponding to the number of states ($|S|=9$) times the possible choices of $a$, $b$, $x$, and $e$ ($=16$) times the number of intensities ($|\Lambda|=N^{}_{\lambda}$). To correctly project the vector $\bar{\mathbf{v}}(\theta)$ of state probabilities into observable probabilities using the emission matrix $\mathbf{E}$, which is of size $|S| \times |O|$, the vector is first ``folded" into a matrix $\mathbf{V}(\theta)$ of size $|S| \times (16 \times N^{}_{\lambda})$. This folding operation is defined by the mapping function:

\[
\mathcal{F}(\mathbf{v},m) = \mathbf{V}, \quad \text{such that,}
\]

\begin{align}
	\mathcal{F}\left(\mathbf{v}=[v^{}_{1}, v^{}_{2}, \dots, v^{}_{n}],m\right) \rightarrow \mathbf{V} = \begin{bmatrix}
v^{}_{1} & v^{}_{m+1} & \dots & v^{}_{(\frac{n}{m} - 1) \times m + 1} \\
v^{}_{2} & v^{}_{m+2} & \dots & v^{}_{(\frac{n}{m} - 1) \times m + 2} \\
\vdots & \vdots & \ddots & \vdots \\
v^{}_{m} & v^{}_{2m} & \dots & v^{}_{n}
\end{bmatrix},
\end{align}
where each column corresponds to a specific pulse configuration, and each row corresponds to one of the HMM states. We then compute the observable probabilities by projecting this folded matrix onto the emission matrix $\mathbf{E}$:

\begin{align}
\hat{\mathbf{P}}(\theta) = \mathbf{E}^{\top} \bar{\mathbf{V}}(\theta),\label{eq:P_hmm}
\end{align}
where $\bar{\mathbf{V}}(\theta) = \mathcal{F}\left(\bar{\mathbf{v}}(\theta),|S|\right)$ and $\hat{\mathbf{P}}(\theta)$ represents the adjusted probabilities of the observable detection events.

At this point, we have obtained the probabilities of the four possible detection outcomes given any configuration of $(a, b, x, e, \lambda)$. However, Bob only observes $a$, $b$, and $\lambda$. Therefore, we can marginalize over the variables $x$ and $e$ by summing over the corresponding columns in $\hat{\mathbf{P}}(\theta)$.

We can apply this to $x$, $e$ and utilize the symmetry in \Cref{eq-symmetric-bases} to only focus on whether the basis match or did not match. This allows us to obtain $\mathbf{P}(\theta)$, which contains $8 N^{}_{\lambda}$ outcomes (corresponding to $4$ detection events, $2$ cases of matching or non-matching bases, and $N^{}_{\lambda}$ intensity levels).

While this approach provides a complete probabilistic view for analytical purposes, it may not be the most computationally efficient due to the large size of the transition matrix $\mathbf{T}(\theta)$ and the eigenvalue decomposition required. If we are only interested in $\mathbf{P}_{m \lambda}(\theta)$ instead of $\mathbf{P}_{a b x e \lambda}(\theta)$, we can construct a smaller transition matrix directly based on the matching or non-matching cases. This results in a much smaller matrix of size $18 N^{}_{\lambda}$, significantly reducing computational complexity.

\subsection{Error and Gain Probabilities}

To accurately compute the secure-key rate in the QKD protocol, particularly following the Gottesman-Lo-Lütkenhaus-Preskill (GLLP) formula, it is essential to derive the probabilities associated with the gain (probability of a detection events) and erroneous click events (probability of a click at the incorrect detector). Our previous results, encapsulated in the probability vector $\mathbf{P}_{abxe\lambda}(\theta)$, allow us to compute the detection probabilities under various configurations of basis settings, bit choices, intensities, and interception scenarios. By leveraging these configurations, we can rigorously determine the gain and error probabilities required for the secure-key rate calculation, which will be described in the following subsections.

\subsubsection{Marginalizing Over Eve's Interception Flag}

For each combination of basis settings ($a$ and $b$), bit choice ($x$), and intensity ($\lambda$), we first marginalize over the possible states of Eve's interception flag ($e$), where $e = 1$ represents interception and $e = 0$ represents no interception. This marginalization enables us to calculate the effective probabilities $\mathbf{P}_{abx\lambda}(\theta)$ under the influence of Eve's interception rate $\Delta$:

\begin{align}
    \mathbf{P}_{abx\lambda}(\theta) & = \sum^{}_{e\in \left\{0,1\right\}} \mathbf{P}_{abex\lambda}(\theta)P(e),\nonumber\\
    & = \mathbf{P}_{abx(e=0)\lambda}(\theta)(1 - \Delta)+\mathbf{P}_{abx(e=1)\lambda}(\theta)\Delta.
\end{align}

The resulting vector $\mathbf{P}_{abx\lambda}(\theta)$ comprises four elements corresponding to each detection outcome:

\begin{equation}
    \mathbf{P}_{abx\lambda}(\theta) = \left[ \mathbf{P}_{abx\lambda}^{00}(\theta) \quad \mathbf{P}_{abx\lambda}^{01}(\theta) \quad \mathbf{P}_{abx\lambda}^{10}(\theta) \quad \mathbf{P}_{abx\lambda}^{11}(\theta) \right],
\end{equation}
where each element represents the probability of a particular click configuration (e.g., $\mathbf{P}_{abx\lambda}^{01}(\theta)$ represents a click at detector $D^{}_{0}$ only for a specific bit choice ($x$), bases configuration ($a$, $n$) and intensity $\lambda$).

\subsubsection{Gain and Error Probabilities}

The gain probability $Q^{}_{abx\lambda}(\theta)$ for a given bit choice $x$ is obtained by summing the probabilities of the events where only one detector clicks (i.e., single-click events):

\begin{equation}
    Q^{}_{abx\lambda}(\theta) = \mathbf{P}_{abx\lambda}^{01}(\theta) + \mathbf{P}_{abx\lambda}^{10}(\theta).
\end{equation}

The probability of an erroneous click event, $EQ^{}_{abx\lambda}(\theta)$, where the detector registers an error, depends on the bit choice $x$. Specifically:

\begin{align}
    EQ^{}_{ab(x=0)\lambda}(\theta) &= \mathbf{P}_{ab(x=0)\lambda}^{10}(\theta), \quad \text{for} \; x = 0,	\\
    EQ^{}_{ab(x=1)\lambda}(\theta) &= \mathbf{P}_{ab(x=1)\lambda}^{01}(\theta), \quad \text{for} \; x = 1.
\end{align}

\subsubsection{Marginalizing Over Bit Choice}

To obtain the total click and error probabilities for a given basis configuration and intensity, we marginalize over the bit choice $x$, assuming Alice's bit choices are uniformly distributed. This leads to the overall gain $Q^{}_{ab\lambda}(\theta)$ and error probabilities $EQ^{}_{ab\lambda}(\theta)$ given by:

\begin{align}
	Q^{}_{ab\lambda}(\theta) & = \frac{1}{2} \left( Q^{}_{ab(x=0)\lambda}(\theta) + Q^{}_{ab(x=1)\lambda}(\theta) \right),\\
    EQ^{}_{ab\lambda}(\theta) & = \frac{1}{2} \left( EQ^{}_{ab(x=0)\lambda}(\theta) + EQ^{}_{ab(x=1)\lambda}(\theta) \right).
\end{align}

These expressions represent the probabilities of a signal and an erroneous click under the conditions specified by $\theta$. To further simplify notation, we consider the two distinct cases of basis alignment, where $a = b$ and $a \neq b$. Referring to the symmetry from \Cref{eq-symmetric-bases}, where $\mathbf{P}_{ab\lambda}(\theta) = \mathbf{P}_{ba\lambda}(\theta)$, we redefine the gain and error probabilities in terms of alignment cases:

\begin{align}
	Q^{}_{m\lambda}(\theta) &= Q^{}_{(a=m)(b=1)\lambda}(\theta),\\
    EQ^{}_{m\lambda}(\theta) &= EQ^{}_{(a=m)(b=1)\lambda}(\theta),
\end{align}
where $m = 1$ denotes matching bases ($a = b$) and $m = 0$ denotes non-matching bases ($a \neq b$).

\subsubsection{Distribution of Gain and Error Counts}

The number of gain clicks, $G^{}_{m\lambda}$, and error clicks, $R^{}_{m\lambda}$, for a specific bases alignment $m$ and intensity $\lambda$ are distributed as follows (assuming i.i.d.):

\begin{align}
    G^{}_{m\lambda} &\sim \text{Binomial}\left( N^{}_{m\lambda}, \, Q^{}_{m\lambda}(\theta) \right), \\
    R^{}_{m\lambda} &\sim \text{Binomial}\left( N^{}_{m\lambda}, \, EQ^{}_{m\lambda}(\theta) \right),\label{eq-r_N_E}
\end{align}
where $N^{}_{m\lambda}$ is the total number of pulses with basis alignment $m$ and intensity $\lambda$. Here, $G^{}_{m\lambda}$ and $R^{}_{m\lambda}$ are not independent, as the number of error clicks is a subset of the number of gain clicks.

The number of error clicks $R^{}_{m\lambda}$ can be viewed as resulting from sequential binomial sampling. First, the gain clicks $G^{}_{m\lambda}$ are selected from $N^{}_{m\lambda}$ with probability $Q^{}_{m\lambda}(\theta)$, then $R^{}_{m\lambda}$ is selected from $G^{}_{m\lambda}$ with probability $\delta^{}_{m\lambda}(\theta)$ (the conditional probability of an error given a signal). More formally,

\begin{equation}
    R^{}_{m\lambda} \sim \text{Binomial}\left( G^{}_{m\lambda}, \, \delta^{}_{m\lambda}(\theta) \right).\label{eq-r_s_delta}
\end{equation}

To derive the conditional error probability $\delta^{}_{m\lambda}(\theta)$, we apply \Cref{lem:fiber_beamsplitter}, which can be rephrased in terms of two sequential binomial sampling processes. If we first sample $s \sim \text{Binomial}(N, p)$, and then sample $r \sim \text{Binomial}(s, q)$, then $r \sim \text{Binomial}(N, p\cdot q)$

This lemma shows that the effective success probability for sequential binomial sampling is the product of the individual probabilities in each stage. Applying this to $R^{}_{m\lambda}$, we have:

\begin{equation}
    R^{}_{m\lambda} \sim \text{Binomial}\left( N^{}_{m\lambda}, \, Q^{}_{m\lambda}(\theta) \, \delta^{}_{m\lambda}(\theta) \right).
\end{equation}

Comparing this with \Cref{eq-r_N_E}, we see that:

\begin{align}
    Q^{}_{m\lambda}(\theta) \, \delta^{}_{m\lambda}(\theta) &= EQ^{}_{m\lambda}(\theta), \\
    \delta^{}_{m\lambda}(\theta) &= \frac{EQ^{}_{m\lambda}(\theta)}{Q^{}_{m\lambda}(\theta)}.\label{eq-delta_E_Q}
\end{align}

We can apply the same principle of sequential binomial sampling to express $G^{}_{m\lambda}$ and $R^{}_{m\lambda}$ distributions in terms of $N$, the number of session pulses, directly. Since $N^{}_{m\lambda}$ results from first selecting $N^{}_{m}$ with probability $P(m) = 1/2$ (the probability of bases alignment), then selecting $N^{}_{m\lambda}$ with probability $P(\lambda) = 1/N^{}_{\lambda}$ (assuming uniform intensity selection), we can also express them as:

\begin{align}
    G^{}_{m\lambda} &\sim \text{Binomial}\left( N, \, P(m) P(\lambda) Q^{}_{m\lambda}(\theta) \right), \label{eq-G_m_lambda_E}\\
    R^{}_{m\lambda} &\sim \text{Binomial}\left( N, \, P(m) P(\lambda) EQ^{}_{m\lambda}(\theta) \right).\label{eq-R_m_lambda_E}
\end{align}

While $\delta^{}_{m\lambda}(\theta)$ and $Q^{}_{m\lambda}(\theta)$ are sufficient for secure-key rate calculations, we also derive $\mathbb{E}[\delta^{}_{m\lambda}(\theta)]$ and $\mathbb{V}[\delta^{}_{m\lambda}(\theta)]$ to facilitate a robust comparison between theoretical expectations and simulated data. The following lemma provides these values, establishing a basis for evaluating the alignment of our model with empirical results.

\begin{theorem}
\label{theorem:rho_E_V}
Let $r$ be a random variable obtained through a two-step sampling process where first $s \sim \text{Binomial}(N, p)$ and then, given $s$, $r \mid s \sim \text{Binomial}(s, q)$. Define $\rho = \frac{r}{s}$ when $s > 0$. Then, the expected value and variance of $\rho$ are given by:

\begin{align}
	\mathbb{E}[\rho] & = q,\\
\mathbb{V}[\rho] & = q(1 - q) \, \mathbb{E}\left[\frac{1}{s}\right],
\end{align}
where $\mathbb{E}\left[\frac{1}{s}\right]$ is the expected value of the reciprocal of $s$.
\end{theorem}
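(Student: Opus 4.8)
The plan is to condition on the first-stage count $s$ and exploit that, given $s$, the conditional moments of $r$ are exactly the binomial moments, so that the ratio $\rho = r/s$ inherits a clean conditional structure. Throughout, every expectation is understood to be taken over the law of $s$ restricted to the event $s > 0$, since $\rho$ is only defined there; this conditioning is the single point that needs care, and I would state it explicitly at the outset.

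First I would fix $s > 0$ and compute the conditional mean and variance of $\rho$. Since $r \mid s \sim \text{Binomial}(s, q)$, we have $\mathbb{E}[r \mid s] = sq$ and $\mathbb{V}[r \mid s] = sq(1-q)$. Dividing by the (now constant) factor $s$ gives
\begin{align}
\mathbb{E}[\rho \mid s] = \frac{1}{s}\mathbb{E}[r \mid s] = q, \qquad
\mathbb{V}[\rho \mid s] = \frac{1}{s^2}\mathbb{V}[r \mid s] = \frac{q(1-q)}{s}.\nonumber
\end{align}
The key observation is that the conditional mean $q$ does not depend on $s$.

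Next I would apply the law of total expectation: because $\mathbb{E}[\rho \mid s] = q$ is constant in $s$, averaging over the conditional distribution of $s$ given $s > 0$ immediately yields $\mathbb{E}[\rho] = q$. For the variance I would invoke the law of total variance,
\begin{align}
\mathbb{V}[\rho] = \mathbb{E}\big[\mathbb{V}[\rho \mid s]\big] + \mathbb{V}\big[\mathbb{E}[\rho \mid s]\big].\nonumber
\end{align}
The second term vanishes since $\mathbb{E}[\rho \mid s] = q$ is deterministic, and the first term becomes $\mathbb{E}\big[q(1-q)/s\big] = q(1-q)\,\mathbb{E}[1/s]$, pulling the constant $q(1-q)$ out of the expectation. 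This gives the claimed formula.

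The only real obstacle is the degenerate case $s = 0$, where $\rho$ is undefined: I would handle it by working throughout with the distribution of $s$ conditioned on $s > 0$, so that the tower property is applied to a random variable that is well-defined on the entire conditioning event, and by noting that $\mathbb{E}[1/s]$ is likewise defined only over $s > 0$. No closed form for $\mathbb{E}[1/s]$ is needed, as the statement deliberately leaves it symbolic. The binomial moment identities $\mathbb{E}[r \mid s] = sq$ and $\mathbb{V}[r \mid s] = sq(1-q)$ may be assumed as standard.
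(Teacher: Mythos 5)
Your proposal is correct and follows essentially the same route as the paper's proof: conditioning on $s$, using the binomial conditional moments to get $\mathbb{E}[\rho\mid s]=q$ and $\mathbb{V}[\rho\mid s]=q(1-q)/s$, and then applying the laws of total expectation and total variance. Your explicit handling of the $s>0$ conditioning is slightly more careful than the paper's main argument, which relegates that point to a remark following the theorem.
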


\begin{proof}
First, compute the expected value $\mathbb{E}[\rho]$ using the law of total expectation~\cite{murphy2022probabilistic}:
\[
\mathbb{E}[\rho] = \mathbb{E}\left[\mathbb{E}\left[\rho \mid s\right]\right].
\]
Given $s$, since $r \mid s \sim \text{Binomial}(s, q)$ and $\rho = \frac{r}{s}$, we have:

\[
\begin{aligned}
\mathbb{E}\left[\rho \mid s\right] =  \mathbb{E}\left[\frac{r}{s} \mid s\right] = \frac{1}{s} \mathbb{E}\left[r \mid s\right]  = \frac{1}{s} (s q)  = q,\\
\end{aligned}
\]
therefore,
\[
\mathbb{E}[\rho] = \mathbb{E}[q] = q.
\]

The variance $\mathbb{V}[\rho]$ can be expressed using the law of total variance~\cite{murphy2022probabilistic}:
\[
\mathbb{V}[\rho] = \mathbb{E}\left[\mathbb{V}\left[\rho \mid s\right]\right] + \mathbb{V}\left[\mathbb{E}\left[\rho \mid s\right]\right].
\]
Given $s$, the conditional variance is:
\[
\begin{aligned}
	\mathbb{V}\left[\rho \mid s\right] = \mathbb{V}\left[\frac{r}{s} \mid s\right]  = \frac{1}{s^2} \mathbb{V}\left[r \mid s\right]  = \frac{1}{s^2} (s q (1 - q))  = \frac{q (1 - q)}{s}.
\end{aligned}
\]
Thus,
\[
\mathbb{E}\left[\mathbb{V}\left[\rho \mid s\right]\right] = q (1 - q) \, \mathbb{E}\left[\frac{1}{s}\right].
\]
Since $\mathbb{E}\left[\rho \mid s\right] = q$ is constant,
\[
\mathbb{V}\left[\mathbb{E}\left[\rho \mid s\right]\right] = \mathbb{V}[q] = 0.
\]
Combining these results, we obtain:
\[
\mathbb{V}[\rho] = q (1 - q) \, \mathbb{E}\left[\frac{1}{s}\right].
\]
\end{proof}

Note that the derivation in \Cref{{theorem:rho_E_V}} assumes $s > 0$, ensuring that $\rho = \frac{r}{s}$ is well-defined. For clarity and simplicity, this condition has been omitted from the main expressions, but all expectations and variances are implicitly conditioned on $s > 0$. Additionally, $\mathbb{E}\left[\frac{1}{s}\right]$ does not have a closed-form expression for $s \sim \text{Binomial}(N, p)$; however, for large $Np$, it can be approximated as:
\[
\mathbb{E}\left[\frac{1}{s}\right] \approx \frac{1}{\mathbb{E}\left[s\right]} = \frac{1}{Np}.
\]
This leads to an approximate variance:
\[
\mathbb{V}[\rho] \approx \frac{q(1 - q)}{Np}.
\]
This approximation holds well when $N$ is large and $p$ is not too small, rendering the probability $P(s = 0)$ negligible.

Applying this result to \Cref{eq-r_s_delta} and \Cref{eq-G_m_lambda_E}, we obtain the expected value and variance of the error rate $\rho$ for a basis alignment $m$ and intensity $\lambda$ as follows:
\begin{align}
	\mathbb{E}\left[\rho\right]&=\delta^{}_{m\lambda}(\theta),\label{eq-E_rho}\\
	\mathbb{V}\left[\rho\right]&\approx \frac{\delta^{}_{m\lambda}(\theta)(1-\delta^{}_{m\lambda}(\theta))}{N Q^{}_{m\lambda}(\theta)P(m)P(\lambda)}.\label{eq-V_rho}
\end{align}

Since $\rho \in \left[0,1\right]$, it is natural to model it with a Beta distribution. Using \Cref{eq-beta_parameters}, we derive parameters for a Beta distribution that match the expected value and variance in \Cref{eq-E_rho} and \Cref{eq-V_rho}. This allows us to \textit{approximate} the distribution of the error rate as follows:
\begin{align}
	P(\rho) \approx \text{Beta}\left(\rho\mid \alpha^{}_{m\lambda}(\theta), \beta^{}_{m\lambda}(\theta)\right),\label{eq-beta_rho}
\end{align}
where $\alpha^{}_{m\lambda}(\theta)$ and $\beta^{}_{m\lambda}(\theta)$ are calculated to ensure that the Beta distribution’s mean and variance align with \Cref{eq-E_rho} and \Cref{eq-V_rho}.

\section{Bayesian Inference}
\label{sec:bayesian-inference}
In this section, we aim to infer the unknown parameters of the QKD system based on the observed detection events. Recall that the parameter set $\theta = \left\{\theta^{}_{A}, \theta^{}_{B}, \theta^{}_{E}\right\}$ encompasses all aspects of the QKD system, where $\theta^{}_{A}$ represents Alice's parameters, $\theta^{}_{B}$ represents Bob's parameters, and $\theta^{}_{E}$ captures Eve's potential strategies. Let $C$ be the vector of counts for each detection outcome:

\begin{align}
    C^{}_{m} &= \left[C^{00}_{m}\quad C^{01}_{m}\quad C^{10}_{m}\quad C^{11}_{m}\right],\nonumber\\
    C &= \left[C^{}_{0}\quad C^{}_{1}\right],
\end{align}
where $C^{01}_{0}$, for example, represents the count of events where $D^{}_{0}=1$, $D^{}_{1}=0$, and the bases did not match.

The objective is to infer the unknown parameters associated with Eve, specifically $\theta^{}_{E} = \{d^{}_{AE}, p^{}_{EB}, k, \Delta\}$, based on observed data $C$ and the known parameters $\theta^{}_{A}$ and $\theta^{}_{B}$. Utilizing Bayes' theorem, the posterior distribution of Eve's parameters, given the observed data and known parameters, is formulated as:

\begin{align}
	\text{Posterior}(\theta^{}_{E} \mid C,N,\theta^{}_{A}, \theta^{}_{B}, \theta^{}_{P}) &= \frac{\mathcal{L}(C \mid N,\theta^{}_{A}, \theta^{}_{B}, \theta^{}_{E}) \mathcal{P}(\theta^{}_{E} \mid \theta^{}_{P})}{\mathcal{M}(C \mid N,\theta^{}_{A}, \theta^{}_{B}, \theta^{}_{P})},
\end{align}
where
\begin{align}
	\mathcal{M}(C \mid N,\theta^{}_{A}, \theta^{}_{B},\theta^{}_{P}) =\int \mathcal{L}(C \mid N,\theta^{}_{A}, \theta^{}_{B}, \theta^{}_{E})  \mathcal{P}(\theta^{}_{E} \mid \theta^{}_{P}) \, d\theta^{}_{E},\label{eq:marginal}
\end{align}
where $\mathcal{L}(C \mid N,\theta^{}_{A}, \theta^{}_{B}, \theta^{}_{E})$ is the likelihood of the data given the parameters, and $\mathcal{P}(\theta^{}_{E} \mid \theta^{}_{P})$ is the prior distribution of Eve's parameters, with $\theta^{}_{P}$ representing the hyper-parameters of these priors, and $\mathcal{M}(C \mid N,\theta^{}_{A}, \theta^{}_{B},\theta^{}_{P})$ is the marginal likelihood (or evidence) that normalizes the posterior distribution.

\subsection{Constructing the Likelihood}

To apply Bayesian inference, we first need to construct the likelihood function that describes the probability of observing the data given the parameters of the system. This formally presented in the following theorem:

\begin{theorem}
\label{theorem:likelihood_iid}
Let Alice emits Poisson-distributed number of photons with average intensities selected uniformly at random from a set of $N^{}_{\lambda}$ intensities $\{\lambda^{}_{1}, \lambda^{}_{2}, \ldots, \lambda^{}_{N^{}_{\lambda}}\}$. For each pulse, Alice randomly selects a basis $a \in \{0, 1\}$ with equal probability, and Bob independently measures in a randomly chosen basis $b \in \{0, 1\}$, also with equal probability. 

Let Eve be positioned at a distance $d^{}_{AE}$ from Alice, potentially intercepting a fraction $\Delta$ of the total transmission and capturing $k$ photons from each intercepted pulse. Bob is positioned at a distance $d^{}_{AB}$ from Alice, with fiber attenuation $\alpha$. Bob uses two detectors characterized by detection efficiencies $p^{}_{c^{}_{0}}$ and $p^{}_{c^{}_{1}}$, dark count probabilities $p^{}_{d^{}_{0}}$ and $p^{}_{d^{}_{1}}$, and a beam splitter with a misalignment error $p^{}_{e}$. The detection events are assumed to be independent.

Under these conditions, the vector of counts for the $8N^{}_{\lambda}$ distinct detection outcomes—corresponding to each combination of intensity, basis choices, and detection results—is distributed according to a multinomial distribution:

\begin{align}
	P(C\mid N, \theta) = \operatorname{Multinomial}\left(N, \, \mathbf{P}(\theta)\right),
\end{align}
where $\mathbf{P}(\theta)$ is the vector of probabilities for each detection outcome, derived from the set of system parameters $\theta = \{\theta^{}_{A}, \theta^{}_{B}, \theta^{}_{E}\}$ as in \Cref{eq:P_iid}.
\end{theorem}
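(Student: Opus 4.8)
The plan is to show that the vector of counts $C$ over the $8N^{}_{\lambda}$ distinct outcomes arises from $N$ independent, identically distributed categorical trials, so that a multinomial distribution follows directly. First I would argue that each pulse constitutes one independent trial: Alice independently samples an intensity $\lambda$ uniformly from the $N^{}_{\lambda}$ settings, a basis $a$ with probability $\tfrac12$, and a bit $x$ with probability $\tfrac12$; Eve independently sets her interception flag $e$ with $P(e=1)=\Delta$; and Bob independently samples his basis $b$ with probability $\tfrac12$. The detection event at the two detectors then lands in exactly one of the four mutually exclusive cells $\{00,01,10,11\}$. Under the stated independence assumption (the i.i.d.\ hypothesis in the theorem), these $N$ trials are independent and share the same outcome distribution.

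Next I would identify the per-trial probability of each of the $8N^{}_{\lambda}$ observable categories with the entries of $\mathbf{P}(\theta)$. The relevant reduction is already carried out earlier in the excerpt: marginalizing over the unobserved $x$ and $e$ via \Cref{eq:Pabl}, using the basis symmetry $\mathbf{P}_{ab\lambda}(\theta)=\mathbf{P}_{ba\lambda}(\theta)$ from \Cref{eq-symmetric-bases} to collapse the basis labels into the single matching flag $m\in\{0,1\}$, and assembling the intensity blocks as in \Cref{eq:P_iid}. I would emphasize that the resulting $\mathbf{P}(\theta)$ is a genuine probability vector: its nonnegativity is inherited from the click-event probabilities established in \Cref{lem:pair_detectos} and the joint reconstruction in \Cref{sec:pair_of_detectors}, and it sums to one because each of the inner $4$-vectors $\mathbf{P}_{m\lambda}(\theta)$ sums to one while the outer weights $P(m)P(\lambda)=\tfrac12\cdot\tfrac{1}{N^{}_{\lambda}}$ form a probability distribution over the $2N^{}_{\lambda}$ blocks.

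Given $N$ i.i.d.\ categorical trials with common cell probabilities $\mathbf{P}(\theta)$, the count vector $C$ recording how many trials fall in each category is, by the standard definition, multinomially distributed with parameters $N$ and $\mathbf{P}(\theta)$; I would invoke this as the concluding step. Concretely, the probability of a given count configuration is the multinomial coefficient $N!/\prod_j C_j!$ times $\prod_j \mathbf{P}_j(\theta)^{C_j}$, which is exactly $\operatorname{Multinomial}(N,\mathbf{P}(\theta))$.

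The main obstacle I anticipate is not the multinomial counting itself—that is essentially a definition once independence is granted—but rather justifying the reduction from the full per-pulse configuration space (indexed by $a,b,x,e,\lambda$ and the two detector outcomes) down to the $8N^{}_{\lambda}$ observable categories while preserving the correct probabilities. The care lies in correctly marginalizing the latent variables $x$ and $e$ against their priors and in invoking the basis symmetry to merge $a,b$ into $m$; I would lean on the already-established \Cref{eq:Pabl}, \Cref{eq-symmetric-bases}, and \Cref{eq:P_iid} to discharge this bookkeeping, so the theorem becomes an assembly of prior results plus the elementary fact that independent identical categorical draws yield a multinomial count.
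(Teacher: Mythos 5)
Your proposal is correct and follows exactly the route the paper intends: the paper actually states \Cref{theorem:likelihood_iid} without an explicit proof, treating it as immediate from the construction of $\mathbf{P}(\theta)$ in \Cref{eq:Pabl}, \Cref{eq-symmetric-bases}, and \Cref{eq:P_iid} together with the i.i.d.\ assumption, which is precisely the argument you spell out. Your added checks (that $\mathbf{P}(\theta)$ is a valid probability vector and that the latent variables $x$ and $e$ are marginalized against their priors) fill in the bookkeeping the paper leaves implicit, and they are correct.
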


\subsubsection{HMM Likelihood}

The likelihood of observing a specific sequence of detection events $\mathbf{O}=\{O^{}_{t}\}_{t=1}^N$ given state transition probabilities $\mathbf{T}$, emission probabilities $\mathbf{E}$, and initial state probabilities $\bm{\pi}$ can be computed using the forward algorithm~\cite{bishop2006pattern}:

\begin{align}
L(\mathbf{O} \mid \mathbf{T}, \mathbf{E},\bm{\pi}) &= \sum^{N^{}_{s}}_{i=1}\alpha^{}_{N}(i),
\end{align}
where $N^{}_{s}$ is the number of states and the forward probability at time $t$ for state $i$, $\alpha^{}_{t}(i)$, is recursively expressed as:
\begin{align}
	\alpha^{}_{t}(i) = \mathbf{E}_{i,O^{}_{t}} \cdot \sum^{N^{}_{s}}_{j=1} \mathbf{T}_{j,i} \cdot \alpha^{}_{t-1}(j),
\end{align}
and the initial forward probability
\begin{align}
\alpha^{}_{1}(i) = \bm{\pi}_{i} \mathbf{E}_{i,O^{}_{1}}.
\end{align}

Computing the forward algorithm is computationally intensive and impractical for large $N$. Additionally, our primary interest lies not in the likelihood of a specific sequence of detection events but in the aggregated counts of detection outcomes over the session.

To address this, we can use the long term stationary distribution $\hat{\mathbf{P}}(\theta)$ as the parameter for the multinomial distribution. Although individual detection events are dependent, the aggregated counts over a large number of trials can be modeled using a multinomial distribution with these adjusted probabilities. 

\begin{theorem}
\label{theorem:likelihood_hmm}
Given a hidden Markov model (HMM) with an irreducible and an  aperiodic state transition matrix $\mathbf{T}(\theta)$, emission probabilities $\mathbf{E}$, and initial state probabilities $\bm{\pi}$, where the system accounts for dependencies induced by after-pulsing, the likelihood function for the aggregated counts $C$ over $N$ trials is formulated as:
\begin{align}
P(C \mid N, \theta) = \text{Multinomial}(C \mid N, \hat{\mathbf{P}}(\theta)),
\end{align}
where $\hat{\mathbf{P}}(\theta) = \mathbf{E}^{\top} \bar{\mathbf{v}}(\theta)$ represents the adjusted probabilities derived from the stationary distribution $\bar{\mathbf{v}}(\theta)$ of the Markov chain as in \Cref{eq:v_norm}.
\end{theorem}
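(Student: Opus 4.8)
The plan is to prove the stated equality directly, by establishing that once the chain is initialized at its stationary law the detection-outcome sequence is a sequence of \emph{independent} draws from $\hat{\mathbf{P}}(\theta)$, so that the aggregate counts $C$ are, by the definition of the multinomial, distributed as $\operatorname{Multinomial}(N,\hat{\mathbf{P}}(\theta))$. The first step is to pin down $\bar{\mathbf{v}}(\theta)$: since $\mathbf{T}(\theta)$ is irreducible and aperiodic, the Perron--Frobenius theorem gives a unique strictly positive stationary eigenvector for eigenvalue $1$, which after the normalization in \Cref{eq:v_norm} is $\bar{\mathbf{v}}(\theta)$. Taking $\bm{\pi}=\bar{\mathbf{v}}(\theta)$ makes the hidden chain strictly stationary, so each $S_t$ has law $\bar{\mathbf{v}}(\theta)$; because the emission matrix $\mathbf{E}$ sends every state deterministically to one of the four click patterns, each observation $O_t$ then has the marginal law $\hat{\mathbf{P}}(\theta)=\mathbf{E}^{\top}\bar{\mathbf{v}}(\theta)$. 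This fixes the correct per-coordinate probabilities; what remains is to upgrade matched marginals into an exact product law for the whole sequence.

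The second step is to exploit the block structure assembled in \Cref{eq:T_p_q} and the subsequent recursions. At every level of the construction---intensity, eavesdropping flag, bit, and both bases---the two block-rows of the assembled matrix are identical, so the pulse-configuration coordinate $(a,b,x,e,\lambda)$ is redrawn from its fixed product law at each step independently of the current configuration. I would therefore factor the state as $S_t=(M_t,A_t)$, where $M_t$ is the i.i.d.\ pulse mode and $A_t$ is the detector after-pulse substate, and write the one-step kernel as a mode-refresh composed with the detector dynamics $\mathbf{T}(\mathbf{P}_{abxe\lambda}(\theta),p_a)$. Since $M_t$ is independent across time, the only channel for inter-step dependence is $A_t$, and the goal reduces to showing that, under the stationary occupancy of $A_t$, the pattern $O_t$ carries no information about $O_{t+1}$.

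The third step, which I expect to be the crux, is this independence claim. I would attack it by conditioning on the previous observation and computing the one-step predictive law from the stationary posterior over the hidden substate, verifying that $\sum_{A_t} P(O_{t+1}\mid A_t)\,P(A_t\mid O_t,\bar{\mathbf{v}})$ collapses to $\hat{\mathbf{P}}(\theta)(O_{t+1})$ for every $O_t$; this would render the observation process memoryless, let an induction on $N$ telescope the joint law into $\prod_t \hat{\mathbf{P}}(\theta)(O_t)$, and, after summing over the $\binom{N}{C}$ orderings, yield the multinomial. The delicate point is exactly the after-pulse coupling: an observed click shifts stationary mass toward the after-pulse states $S_{0\hat1},S_{\hat10},\dots$ and thereby perturbs the next click probability, so the identity holds only if this perturbation is exactly reabsorbed by the stationary mode-refresh. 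Establishing that cancellation---rather than merely bounding it---is the main obstacle, and if it holds only up to the after-pulse order I would isolate the residual and show that it does not enter the realized count statistic, thereby closing the identity as stated.
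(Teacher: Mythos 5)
Your Step 3 is not a closable gap: the cancellation you are hoping for provably fails, and with it the whole strategy of deriving an exact product law. Take the single-detector chain of \Cref{fig:hmm_single} with after-pulse probability $q>0$. Conditioned on $O_t=0$ the hidden state is exactly $S_0$ (only $S_0$ emits $0$), so the next-click probability is $p_1$; conditioned on $O_t=1$ the hidden state is a stationary mixture of $S_1$ and $S_{\hat{1}}$, and the next-click probability is $p_1 + q\,p_0\,\bar{v}_{S_1}/(\bar{v}_{S_1}+\bar{v}_{S_{\hat{1}}})$, which is strictly larger than $p_1$ whenever $q,p_0,p_1>0$. The i.i.d.\ mode-refresh over $(a,b,x,e,\lambda)$ cannot reabsorb this, because the after-pulse substate is carried across the refresh by construction (that is precisely what the right-hand block columns of \Cref{eq:T_p_q} encode). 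So the observation process is genuinely autocorrelated; the paper itself says as much (``Although individual detection events are dependent, the aggregated counts \dots can be modeled using a multinomial distribution''). Your fallback --- isolating the residual and showing it ``does not enter the realized count statistic'' --- also fails: positive autocorrelation inflates $\mathbb{V}[C_i]$ above the multinomial value $N\hat{P}_i(1-\hat{P}_i)$ by an amount of order $2N\sum_{k\ge 1}\gamma_k$ (the summed lag-$k$ autocovariances), a discrepancy that does not vanish relative to the multinomial variance. Hence the counts are not exactly multinomial for any parameter setting with $p_a>0$, and no exact-independence argument can establish the theorem as a distributional identity.

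The deeper point is that the paper does not prove this theorem as an exact result, and it cannot be so proven. It is a modeling formulation: by Perron-Frobenius ergodicity (your Step 1, which is correct and matches the paper's reasoning), the empirical frequencies of the observation categories converge to $\hat{\mathbf{P}}(\theta)$, so for large $N$ the multinomial with parameter $\hat{\mathbf{P}}(\theta)$ is adopted as a tractable surrogate for the true likelihood (which would require the forward algorithm) --- one that matches the first moment of the counts while deliberately discarding the autocorrelation structure. Your proposal should therefore either stop after Step 1 and be reframed as this asymptotic first-moment justification, with an explicit acknowledgment that the covariance is only approximated, or it must abandon the claim of exactness; as written, the crux identity you set out to verify in Step 3 is false.
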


By replacing the i.i.d. probabilities with the adjusted probabilities from the stationary distribution, we effectively account for the dependencies introduced by after-pulsing. This allows us to use the multinomial likelihood function for the aggregated counts, enabling accurate and computationally efficient inference of the system parameters.

\subsection{Constructing the Prior}

To construct a robust Bayesian inference framework, it is essential to carefully choose priors for the unknown parameters $\theta^{}_{E}$. The parameters $\theta^{}_{E} = \{d^{}_{AE}, p^{}_{EB}, k, \Delta\}$ represent Eve's distance from Alice, the channel efficiency chosen by Eve, the number of photons intercepted per pulse, and the fraction of intercepted pulses, respectively. 

\subsubsection{Assumptions and Independence}

To address the uncertainty surrounding Eve’s strategy, we encode a bias towards worst-case scenarios while incorporating practical constraints. We assume that Eve’s choices are independent of Alice and Bob’s configurations, allowing her to select parameters freely from the allowable domains. Although a strategic Eve might optimize her settings to maximize information gain based on Alice’s and Bob’s choices, modeling such behavior would introduce another exploitable vulnerability. Instead, we take a precautious approach, treating Eve’s parameters as independent, ensuring that she cannot exploit any alignment with Alice and Bob’s configurations.

\subsubsection{Priors for Bounded Parameters}

The parameters $d^{}_{AE}$, $p^{}_{EB}$, and $\Delta$ are all bounded within specific ranges:
\begin{align}
    0 \leq  d^{}_{AE} &\leq d^{}_{AB},  \\
    0 \leq  p^{}_{EB} &\leq 1,  \\
    0 \leq ~\,\Delta~\, &\leq 1. 
\end{align}

Let $\theta^{}_{b}$ a bounded parameter in $\theta^{}_{E}$. For such parameter, we utilize the Beta distribution due to its flexibility in modeling probabilities over a finite range after rescaling the parameter to the $[0,1]$ range. Specifically, the prior for $\theta^{}_{b}$:

\begin{align}
	P(\theta^{}_{b} \mid \vartheta^{}_{b}) = \text{Beta} \left( \frac{\theta^{}_{b}-l^{}_{b}}{u^{}_{b}-l^{}_{b}} \; \middle| \; \alpha^{}_{b}, \beta^{}_{b} \right) \frac{1}{u^{}_{b}-l^{}_{b}},
\end{align} 
where $\vartheta^{}_{b} = \{\alpha^{}_{b}, \beta^{}_{b},u^{}_{b},l^{}_{b}\}$ are the hyper-parameters of the prior distribution, namely the shape parameters of the Beta distribution ($\alpha^{}_{b}$ and $\beta^{}_{b}$), as well as the upper and lower bound of $\theta^{}_{b}$ ($u^{}_{b}$ and $l^{}_{b}$). The scaling factor $\frac{1}{u^{}_{b}-l^{}_{b}}$ ensures proper normalization over the interval $[l^{}_{b}, u^{}_{b}]$.

\subsubsection{Prior for Semi-Bounded Parameters}

The parameter $k$, representing the number of photons intercepted per pulse, is semi-bounded with $k \geq 1$. In \Cref{theorem:PNS} we utilized the incomplete gamma function to extend the domain of $k$ to the positive real numbers. This extension allows us to treat $k$ as a continuous variable, which is particularly useful for gradient-based optimization or numerical integration techniques and more nuanced probabilistic modeling.

Given this extended domain, we employ a Gamma distribution for $k$, which is suitable for modeling continuous variables that are bounded below. Although in this specific setting, there is only one semi-bounded parameter, i.e. $k$, in preparation for a fully Bayesian approach where we can treat other fixed variables as random variables, we will denote a semi-bounded parameter modeled with a Gamma prior as $\theta^{}_{g}$:

\begin{align}
	P(\theta^{}_{g} \mid \vartheta^{}_{g}) = \text{Gamma}(\theta^{}_{g} + l^{}_{g} \mid \alpha^{}_{g}, \beta^{}_{g}),
\end{align}
where $\vartheta^{}_{g} = \{\alpha^{}_{g}, \beta^{}_{g}, u^{}_{g}, l^{}_{g}\}$ are the hyper-parameters of the Gamma distribution. In the case of $k$,The distribution is shifted by $l^{}_{g}=1$ to ensure alignment with the lower bound $k = 1$, reflecting that if Eve intercepts, she must capture at least one photon.

\subsubsection{Selection of Prior Parameters}

For the bounded parameters, setting $\alpha = \beta = 1$ in the Beta distribution yields a uniform distribution, reflecting maximal uncertainty within the given bounds. We apply this uniform distribution to parameters like $p^{}_{EB}$, encoding no prior preference over Eve’s channel efficiency. However, to introduce a cautious bias against Eve's activity, we set $\alpha^{}_{\Delta} = 2$ and $\beta^{}_{\Delta} = 1$, making $\Delta = 1$ more likely a priori. This choice implies a precautionary stance, assuming Eve's presence unless data suggests otherwise.

For the distance parameter $d^{}_{AB}$, we set the prior parameters in the Beta distribution oppositely, with $\alpha^{}_{d^{}_{AB}} = 1$ and $\beta^{}_{d^{}_{AB}} = 2$. This encodes the assumption that Eve is more likely to position herself closer to Alice, where the signal intensity is higher, minimizing the efficiency requirements for her alternative channel when intercepting photons.

For the semi-bounded parameter $k$, setting $\alpha^{}_{k} = 1$ in the Gamma distribution places the mode of the distribution at $k = 1$. This choice encodes the assumption that, in the absence of data, Eve is most likely to capture the minimum number of photons from an intercepted pulse, which is 1. To provide practical constraints within this framework, we define a useful \textit{pseudo} upper bound on $k$:

\begin{itemize}
    \item $k^{}_{\max}$: The maximum number of photons Eve can intercept from a pulse immediately emitted by Alice such that she would require a channel efficiency of $p^{}_{EB} = 1$ to avoid detection.
\end{itemize}

This \textit{pseudo} upper bound, $k^{}_{\max}$, serves as a guideline for setting the rate parameter $\beta^{}_{k}$ of the Gamma distribution. Specifically, $\beta^{}_{k}$ is chosen such that the expected value lies halfway between 1 and $k^{}_{\max}$, ensuring that the prior realistically constrains Eve's capabilities while remaining theoretically unbounded.

\subsubsection{Final Formulation of the Prior}

Combining the priors for each parameter, the joint prior distribution for Eve’s parameter $\theta^{}_{E}$ is given by:

\begin{align}
	\mathcal{P}(\theta^{}_{E} \mid \theta^{}_{P}) = &~\text{Beta} \left( \frac{d^{}_{AE}}{d^{}_{AB}} \; \middle| \; \alpha^{}_{d^{}_{AE}}, \beta^{}_{d^{}_{AE}} \right) \frac{1}{d^{}_{AB}} \times \nonumber \\
	&~\text{Beta} \left( p^{}_{EB} \mid \alpha^{}_{p^{}_{EB}}, \beta^{}_{p^{}_{EB}} \right) \times \nonumber \\
	&~\text{Beta} \left( \Delta \mid \alpha^{}_{\Delta}, \beta^{}_{\Delta} \right) \times \nonumber \\
	&~\text{Gamma} \left( k + 1 \mid \alpha^{}_{k}, \beta^{}_{k} \right),\label{eq:prior}
\end{align}
and

\begin{align}
	\theta^{}_{P}=&~\{\bm{\alpha}, \bm{\beta}, \mathbf{u}, \mathbf{l}\},\\
	\bm{\alpha} =&~\{\alpha^{}_{d^{}_{AE}}, \alpha^{}_{p^{}_{EB}}, \alpha^{}_{k}, \alpha^{}_{\Delta}\},\\
	\bm{\beta}=&~\{\beta^{}_{d^{}_{AE}}, \beta^{}_{p^{}_{EB}}, \beta^{}_{k}, \beta^{}_{\Delta}\},\\
	\mathbf{u}=&~\{d^{}_{AB}, 1, \infty, 1\},\\
	\mathbf{l}=&~\{0, 0, 1, 0\},
\end{align}
where $\mathbf{u}$ and $\mathbf{l}$ are the upper and lower bounds of the parameters. We store these in the set of priors' hyper-parameters as they will be used later to un-bound the parameters when constructing the posterior.

\subsection{Constructing the Posterior}

With the likelihood $\mathcal{L}(C \mid N, \theta^{}_{A}, \theta^{}_{B}, \theta^{}_{E})$ and prior $\mathcal{P}(\theta^{}_{E} \mid \theta^{}_{P})$ established, the final step is to compute the marginal likelihood $\mathcal{M}(C \mid N, \theta^{}_{A}, \theta^{}_{B}, \theta^{}_{P})$, as defined in \Cref{eq:marginal}, to derive the posterior distribution of Eve's parameters $\theta^{}_{E}$ conditioned on the system parameters and observed data.

However, due to the complexity of the likelihood function in our model, stemming from the multiple parameters and the interactions modeled within the QKD system, analytically computing this marginal is infeasible. Additionally, an analytical approach might restrict our choice of priors to conjugate priors, which could fail to accurately represent the underlying physics and real-world constraints of the problem.

\subsubsection{Methods for Approximating the Posterior}

Given the challenges in deriving the posterior analytically, we consider several computational methods. One common approach is \emph{Markov Chain Monte Carlo} (MCMC)\cite{metropolis1953equation,hastings1970monte}, which can handle complex, high-dimensional distributions. MCMC methods, such as Metropolis-Hastings\cite{metropolis1953equation,hastings1970monte} or Gibbs sampling~\cite{geman1984stochastic}, do not require the posterior to have a specific form, allowing flexible modeling with realistic priors that reflect the physical constraints of the QKD system. Although MCMC converges to the true posterior with sufficient samples, it can be computationally intensive and requires careful tuning and convergence checks~\cite{cowles1996markov, robert2004monte}.

\emph{Variational Bayes} (VB)~\cite{murphy2022probabilistic} approximates the posterior by optimizing a simpler, parameterized distribution to minimize KL divergence. This can be more efficient than MCMC, especially with large datasets, but yields only an approximate posterior that depends on the chosen family of distributions. In security-critical QKD applications, where assumptions about the posterior can introduce vulnerabilities, we prefer methods that let the data define the distribution.

\emph{Bayesian Quadrature} could be used for low-dimensional parameter spaces by modeling the integrand as a Gaussian process~\cite{osborne2012active,gunter2014sampling}. Although it may require fewer evaluations than traditional quadrature, it assumes integrand smoothness, which might miss important features in security-critical scenarios. Consequently, like VB, Bayesian Quadrature is less suitable when avoiding assumptions is essential for security.

Given the complexity of our parameter space, we also considered gradient-based numerical integration methods like \emph{Hamiltonian Monte Carlo} (HMC)\cite{duane1987hybrid, neal2011mcmc} and its adaptive variant, the \emph{No-U-Turn Sampler} (NUTS)\cite{hoffman2014no}. We computed gradients for all variables via the incomplete gamma function, enabling a continuous likelihood function and methods that exploit gradient information to navigate the parameter space more efficiently~\cite{betancourt2017conceptual}.

Ultimately, we chose \emph{Covariance-Adaptive Slice Sampling}\cite{thompson2010covariance}, which incorporates gradient-based adaptation while retaining the robustness of slice sampling. With fewer hyper-parameters, it minimizes manual tuning and enhances reliability while efficiently exploring high-dimensional parameter spaces by adapting to local covariance structure. Our fully Bayesian framework—where any parameter can be switched from fixed to random by assigning a prior—maintains security guarantees in QKD systems by avoiding restrictive modeling assumptions\cite{gelman2013bayesian}.

\subsection{From Bounded to Unbounded Variables}
To perform effective Bayesian inference, especially using gradient-based numerical integration methods, it is crucial to transform the parameter space so that all variables are unbounded. However, this process is not as straightforward as simply applying a transformation step during posterior computation. Transformations alter the probability distribution non-uniformly, effectively mapping the original distribution into a different space. To ensure that sampling remains consistent with the true posterior, the probability density function (PDF) must be adjusted to account for these changes. This adjustment is achieved through the transform of variables method, which ensures an accurate and mathematically consistent mapping between the original and transformed spaces.

\subsubsection{Defining Transformations and Their Inverses}

To transform the parameters $\theta^{}_{E} = \{d^{}_{AE}, p^{}_{EB}, k, \Delta\}$ into an unbounded space and vice versa, we use the sigmoid, $\sigma(x)$, and logit, $\sigma^{-1}(x)$ transformations for the bounded parameters, and the exponential, $e^{x}$, and logarithmic, $\log(x)$, for the semi-bounded.
\begin{align}
	   \sigma(x) = \frac{1}{1 + e^{-x}}, \quad \sigma^{-1}(x) = \log\left(\frac{x}{1 - x}\right).
\end{align}
   
Let $b$ be the set of bounded parameters modeled with a Beta prior, and $g$ is the set of semi-bounded parameters modeled with a Gamma prior, we define the width $w^{}_{b} = u^{}_{b}-l^{}_{b}$, where $u^{}_{b}$ and $l^{}_{b}$ are the parameter's upper and lower bounds respectively. The transformation to the unbounded space and its reverse are given by:

\begin{align}
	\phi^{}_{b} = \sigma^{-1}\left(\frac{\theta^{}_{b} - l^{}_{b}}{w^{}_{b}}\right),\quad \theta^{}_{b} = \sigma(\phi^{}_{b})   w^{}_{b} + l^{}_{b},
\end{align}
and for the semi-bounded variables

\begin{align}
	\phi^{}_{g} = \log(\theta^{}_{g} - l^{}_{g}),\quad \theta^{}_{g} = e^{\phi^{}_{g}} + l^{}_{g}.
\end{align}

More generally, let:
\begin{align}
	\Phi(\theta^{}_{E}, \theta^{}_{P}) &= \phi^{}_{E},\\
	\Theta(\phi^{}_{E}, \theta^{}_{P}) &= \theta^{}_{E},
\end{align}
be the two mapping functions that transform the variables from and to different spaces. By applying these transformations, we define the new set of transformed variables $\phi^{}_{E} = \{\phi^{}_{d^{}_{AE}}, \phi^{}_{p^{}_{EB}}, \phi^{}_{k}, \phi^{}_{\Delta}\}$, where each $\phi$ now lies in the real, unbounded space $\mathbb{R}$.

\subsubsection{Jacobian of the Transformation}

When transforming variables in a probability distribution, it is essential not only to transform the variables but also to update the probability density function (PDF) accordingly. This is because the original PDF is defined over the original variable space, and a transformation changes the volume element in this space. The transformation can stretch, compress, or otherwise distort the space, altering the relative spacing between points. 

To ensure that the transformed PDF correctly reflects the distribution of probabilities over the new variable space, we must incorporate the Jacobian determinant of the transformation. The Jacobian determinant represents the factor by which the volume element changes under the transformation. Mathematically, if we have a transformation $y = g(x)$ and the original PDF is $f^{}_{X}(x)$, the transformed PDF $f^{}_{Y}(y)$ is given by~\cite{bishop2006pattern}:

\begin{align}
	f^{}_{Y}(y) = f^{}_{X}(g^{-1}(y)) \left| \frac{d}{dy} g^{-1}(y) \right|.
\end{align}

This ensures that the total probability remains normalized to one after the transformation, preserving the properties of a probability distribution. For multidimensional distributions, it generalizes to finding the determinant of the Jacobian matrix $\mathbf{J}(\phi^{}_{E}, \theta^{}_{P})$ of the transformation from $\phi^{}_{E}$ back to $\theta^{}_{E}$:

\begin{align}
	\mathbf{J}(\phi^{}_{E}, \theta^{}_{P}) = 
\begin{bmatrix}
    \frac{\partial \theta^{}_{d^{}_{AE}}}{\partial \phi^{}_{d^{}_{AE}}} & 0 & 0 & 0 \\
    0 & \frac{\partial \theta^{}_{p^{}_{EB}}}{\partial \phi^{}_{p^{}_{EB}}} & 0 & 0 \\
    0 & 0 & \frac{\partial \theta^{}_{k}}{\partial \phi^{}_{k}} & 0 \\
    0 & 0 & 0 & \frac{\partial \theta^{}_{\Delta}}{\partial \phi^{}_{\Delta}}
\end{bmatrix}.
\end{align}

The individual partial derivatives, using the inverse functions of the transformations, are:

\begin{align}
\frac{\partial \theta^{}_{b}}{\partial \phi^{}_{b}} &= \frac{\partial}{\partial \phi^{}_{b}} \left( \sigma(\phi^{}_{b})   w^{}_{b} + l^{}_{b} \right) =    \sigma(\phi^{}_{b}) (1 - \sigma(\phi^{}_{b}))w^{}_{b}, \\
\frac{\partial \theta^{}_{g}}{\partial \phi^{}_{g}} &= \frac{\partial}{\partial \phi^{}_{g}} \left( e^{\phi^{}_{g}} + l^{}_{g} \right) = e^{\phi^{}_{g}}.
\end{align}

Considering that $w^{}_{p^{}_{EB}}=w^{}_{\Delta}=1$, and $w^{}_{d^{}_{AE}}=d^{}_{AB}$, the determinant of the Jacobian matrix, $\left |\mathbf{J}(\phi^{}_{E}, \theta^{}_{P})\right|$, can be expressed as:
\begin{align}
	\mathcal{J}(\phi^{}_{E}\mid \theta^{}_{P}) & = \left |\mathbf{J}(\phi^{}_{E}, \theta^{}_{P})\right|\nonumber\\
	& = \sigma(\phi^{}_{d^{}_{AE}}) (1 - \sigma(\phi^{}_{d^{}_{AE}}))d^{}_{AE}\times\sigma(\phi^{}_{p^{}_{EB}}) (1 - \sigma(\phi^{}_{p^{}_{EB}}))\times\sigma(\phi^{}_{\Delta}) (1 - \sigma(\phi^{}_{\Delta}))\times e^{\phi^{}_{k}}.\label{eq:jacobian}
\end{align}

\subsubsection{Updating the Posterior with the Jacobian}

The posterior distribution with respect to the transformed variables $\phi^{}_{E}$ is given by incorporating the Jacobian determinant to adjust for the change of variables:

\begin{align}
\text{Posterior}(\phi^{}_{E} \mid C, N, \theta^{}_{A}, \theta^{}_{B}, \theta^{}_{P}) \propto \mathcal{L}(C\mid N,\theta^{}_{A}, \theta^{}_{B}, \Theta(\phi^{}_{E}))\times \mathcal{P}(\Theta(\phi^{}_{E})\mid \theta^{}_{P})\times \mathcal{J}(\phi^{}_{E}\mid \theta^{}_{P}),\label{eq:posterior}
\end{align}
where $\mathcal{L}(C\mid N,\theta^{}_{A}, \theta^{}_{B}, \Theta(\phi^{}_{E}))=\text{Multinomial}\left(C\mid N,\mathbf{P}\left(\left\{\theta^{}_{A}, \theta^{}_{B}, \Theta(\phi^{}_{E})\right\}\right)\right)$ is the likelihood as derived in \Cref{theorem:likelihood_iid} and $\mathcal{P}(\Theta(\phi^{}_{E})\mid \theta^{}_{P})$ is the prior as in \Cref{eq:prior} but with $\theta^{}_{E}=\Theta(\phi^{}_{E})$.

\subsection{From Fixed to Random Variables}

Until now, our modeling approach has operated under the assumption that the parameters governing our system are fixed and known, leading to the treatment of detection events as independent and identically distributed (i.i.d.). Under the i.i.d. framework, each detection event is considered independent, and the probability distribution governing these events remains constant over time. However, this assumption of identically distributed data does not align with the inherent stochasticity present in real-world quantum key distribution (QKD) systems, where variables such as the intensity of a laser source or detector efficiencies may vary due to external conditions or device imperfections.

By transitioning from a fixed-parameter model to a Bayesian framework, we allow these parameters to be treated as random variables. This shift introduces prior distributions over the parameters, which are updated as more data becomes available, reflecting their true variability. As an example, consider the intensity $\lambda$ of a laser source. Instead of assuming $\lambda$ is fixed and known, we can model it as a random variable with a Gamma prior distribution, Gamma$(\alpha^{}_{\lambda}, \beta^{}_{\lambda})$, where $\alpha^{}_{\lambda}$ and $\beta^{}_{\lambda}$ are the shape and rate parameters of the Gamma distribution, respectively.

To illustrate, suppose the laser's intensity $\lambda$ has an expected value $\mathbb{E}[\lambda]$ and a variance $\mathbb{V}[\lambda]$ based on the laser’s specifications. We can set the parameters of the Gamma prior to match these moments~\cite{murphy2022probabilistic}:
    
\begin{align}
	\mathbb{E}[\lambda] = \frac{\alpha^{}_{\lambda}}{\beta^{}_{\lambda}}, \quad \mathbb{V}[\lambda] = \frac{\alpha^{}_{\lambda}}{\beta^{2}_{\lambda}}.	
\end{align}

Solving these equations for $\alpha^{}_{\lambda}$ and $\beta^{}_{\lambda}$ gives:
    
\begin{align}
	\alpha^{}_{\lambda} = \frac{\mathbb{E}[\lambda]^2}{\mathbb{V}[\lambda]}, \quad \beta^{}_{\lambda} = \frac{\mathbb{E}[\lambda]}{\mathbb{V}[\lambda]}.\label{eq-gamma_parameters}
\end{align}
    
This formulation provides a prior distribution for $\lambda$ that reflects both its expected value and the uncertainty due to variability, allowing our model to capture the true stochastic nature of the laser intensity. Similarly, for the detector efficiency $p^{}_{c}$, a Beta distribution $\text{Beta}(\alpha^{}_{p^{}_{c}}, \beta^{}_{p^{}_{c}})$ can be used as a prior.
    
Applying the same method to the Beta distribution, the parameters $\alpha^{}_{p^{}_{c}}$ and $\beta^{}_{p^{}_{c}}$ can be derived from the expected value $\mathbb{E}[p^{}_{c}]$ and variance $\mathbb{V}[p^{}_{c}]$ as~\cite{murphy2022probabilistic}:
    
\begin{align}
\alpha^{}_{p^{}_{c}} = \frac{\mathbb{E}[p^{}_{c}]^2 (1 - \mathbb{E}[p^{}_{c}])}{\mathbb{V}[p^{}_{c}]} - \mathbb{E}[p^{}_{c}], \quad \beta^{}_{p^{}_{c}} = \frac{\mathbb{E}[p^{}_{c}] (1 - \mathbb{E}[p^{}_{c}])^2}{\mathbb{V}[p^{}_{c}]} + \mathbb{E}[p^{}_{c}] - 1.\label{eq-beta_parameters}
\end{align}
    
By setting these parameters, we construct a Beta prior distribution for $p^{}_{c}$ that accurately reflects both the expected efficiency and the uncertainty inherent in the detector's performance.

By adopting this approach, we move along a spectrum from deterministic models with fixed parameters to fully Bayesian models that account for uncertainties in all parameters. At one extreme, a fixed parameter can be viewed as having a Dirac delta function as its prior~\cite{murphy2022probabilistic}, reflecting complete certainty about its value. At the other extreme, we have models like those used for Eve's parameters, where the priors are non-informative or represent maximal uncertainty. In between, partially informative priors, such as those for $\lambda$ and $p^{}_{c}$, provide a balanced representation that incorporates both prior knowledge and observed data.

An alternative approach is to remodel the likelihood by integrating over the prior distribution of the parameter, thus accounting for its uncertainty directly in the likelihood function. For example, the distribution of photons $n$ from a laser with intensity $\lambda$ could be updated as follows:
    
\begin{align}
	P(n \mid \alpha^{}_{\lambda}, \beta^{}_{\lambda}) = \int^{\infty}_{0} \text{Poisson}(n \mid \lambda) \, \text{Gamma}(\lambda \mid \alpha^{}_{\lambda}, \beta^{}_{\lambda}) \, d\lambda.
\end{align}    
This is known to follow a negative binomial distribution with parameters $r=\alpha^{}_{\lambda}$ (number of successes) and $p=\frac{\beta^{}_{\lambda}}{\beta^{}_{\lambda}+1}$ (success probability)~\cite{greenwood1920inquiry}, or more precisely:
    
\begin{align}
	P(n \mid \alpha^{}_{\lambda}, \beta^{}_{\lambda}) &= \text{NegativeBinomial}\left(n\mid r=\alpha^{}_{\lambda},p=\frac{\beta^{}_{\lambda}}{\beta^{}_{\lambda}+1}\right),\nonumber\\
	&= \frac{\Gamma(n + \alpha^{}_{\lambda})}{n! \, \Gamma(\alpha^{}_{\lambda})} \left(\frac{\beta^{}_{\lambda}}{1 + \beta^{}_{\lambda}}\right)^{\alpha^{}_{\lambda}} \left(\frac{1}{1 + \beta^{}_{\lambda}}\right)^{n}.
	\end{align}
    
A similar approach can be applied to the $p$ parameter in a binomial distribution. If $p$ is as a random variable following a Beta distribution, the probability of observing $k$ successes in $n$ trials with a variable success probability $p$ can be expressed as:
\begin{align}
	P(k \mid n, \alpha^{}_{p}, \beta^{}_{p}) = \int^{1}_{0} \text{Binomial}(k \mid n, p) \, \text{Beta}(p \mid \alpha^{}_{p}, \beta^{}_{p}) \, dp.
\end{align}
This scenario is modeled by a Beta-Binomial distribution~\cite{murphy2022probabilistic}, specifically:
\begin{align}
P(k \mid n, \alpha^{}_{p}, \beta^{}_{p}) &= \text{Beta-Binomial}\left(k \mid n, \alpha^{}_{p}, \beta^{}_{p}\right), \nonumber\\
&= \binom{n}{k} \frac{\text{B}\left(k + \alpha^{}_{p}, n - k + \beta^{}_{p}\right)}{\text{B}\left(\alpha^{}_{p}, \beta^{}_{p}\right)},
\end{align}
where $\text{B}(x, y)$ denotes the beta function:
\begin{align}
	\text{B}(x, y) = \frac{\Gamma(x) \Gamma(y)}{\Gamma(x + y)}
\end{align}
    
Modeling the photon distribution of the laser and the detector's efficiency in this manner accounts for the uncertainty in the device parameters. However, while this approach is mathematically elegant and can simplify calculations under certain conditions, it constrains the choice of priors and necessitates a comprehensive reformulation of the likelihood function. By contrast, the first approach—placing priors directly on the parameters and updating them through the posterior—offers greater flexibility and allows for more nuanced modeling of the actual dynamics of the physical system. This flexibility is particularly crucial in security-critical applications like QKD, where accurate modeling of all system components is essential to ensure robustness against potential attacks~\cite{scarani2009security}.
    
With this framework in place, we have moved beyond the assumption of identically distributed detection events by treating key parameters as random variables with their own distributions.

\section{Experimental Results}
\label{sec:experimental-results}

This section presents the validation and performance assessment of the proposed probabilistic framework through a series of simulations and comparisons with the decoy-state protocol. First, we test the accuracy of the probabilistic model (both i.i.d. and HMM) in predicting detection events and errors. Next, we evaluate the error modeling and its alignment with simulated results. The third subsection focuses on the secure key generation rate, highlighting improvements over the decoy-state approach. Finally, we explore the advantages of the Bayesian inference framework, demonstrating its robustness under after-pulsing scenarios (via HMM) and noisy conditions, emphasizing the superiority of the fully Bayesian approach.

\subsection{Validation through Simulation}

This section is dedicated to assessing how well the theoretical models developed in this work fit the data generated by simulations. Specifically, we aim to evaluate the extent to which the theoretical model captures the observed behavior under both i.i.d. and Hidden Markov Model (HMM) conditions.

The first subsection focuses on validating the i.i.d. model as formalized in \Cref{theorem:likelihood_iid}, using data generated by the simulation algorithm described in \Cref{alg:simulate_iid}. The second subsection examines the HMM model, as formulated in \Cref{theorem:likelihood_hmm}, with data produced according to \Cref{alg:simulate_hmm}. By examining these models in turn, we determine the degree to which each theoretical framework explains the simulated detection events.

\begin{algorithm}
\caption{Simulation of QKD detection events under the i.i.d assumption}
\label{alg:simulate_iid}
\begin{algorithmic}[1]
\Function{simulate\textsubscript{iid}}{$N, \theta^{}_{A}, \theta^{}_{B}, \theta^{}_{E}$}
    \State $[\Lambda, \alpha, d^{}_{AB}] \gets \theta^{}_{A}$
    \State $[p^{}_{a^{}_{0}}, p^{}_{a^{}_{1}}, p^{}_{c^{}_{0}}, p^{}_{c^{}_{1}}, p^{}_{d^{}_{0}}, p^{}_{d^{}_{1}}, p^{}_{e}] \gets \theta^{}_{B}$
    \State $[d^{}_{AE}, p^{}_{EB}, k, \Delta] \gets \theta^{}_{E}$
    \State $p^{}_{AB} \gets 10^{-\alpha d^{}_{AB}/10}$
    \State $p^{}_{AE} \gets 10^{-\alpha d^{}_{AE}/10}$
    \State $N^{}_{\lambda} \gets |\Lambda|$
    \State $\hat p^{}_{e} \gets \arcsin\left(\sqrt{p^{}_{e}}\right)$
	\State Initialize $D^{}_{0}$, $D^{}_{1}$, $a$, $b$, $x$, $e$ and $e$

    \For{$i = 1$ to $N$}
        \State \makebox[20.5em][l]{$l[i] \sim \text{Multinomial}(1, [1~1~\hdots 1]/N^{}_{\lambda})$}\% Choose random intensities
        \State \makebox[20.5em][l]{$x[i], a[i], b[i] \sim \text{Bernoulli}(\frac{1}{2})$}\% Alice and Bob choose random bits \& bases
        \State \makebox[20.5em][l]{$e[i] \sim \text{Bernoulli}(\Delta)$}\% Eve chooses random pulses to intercept

        \State \makebox[20.5em][l]{$n^{}_{a} \sim \text{Poisson}(\Lambda[l[i]])$}\% Sample n photons given intensity $l[i]$
        \If{$e[i] == \text{false}$}
            \State \makebox[19em][l]{$n^{}_{b} \sim \text{Binomial}(n^{}_{a}, p^{}_{AB})$}\% If Eve does not intercepts, apply fiber $p^{}_{AB}$
        \Else
            \State \makebox[19em][l]{$n^{}_{e} \sim \text{Binomial}(n^{}_{a}, p^{}_{AE})$}\% If Eve intercepts, apply fiber $p^{}_{AE}$,
            \State \makebox[19em][l]{$n^{}_{e} \gets \max(n^{}_{e} - k, 0)$}\% then grab $k$ photons,
            \State \makebox[19em][l]{$n^{}_{b} \sim \text{Binomial}(n^{}_{e}, p^{}_{EB})$}\% then apply fiber $p^{}_{EB}.$
        \EndIf
        \State \makebox[20.5em][l]{$p^{}_{0} \gets \cos\left( \frac{\pi}{2}x[i] - \frac{\pi}{4}(a[i] - b[i]) + \hat p^{}_{e} \right)^2$}\% Compute probability of $D^{}_{0}$
        \State \makebox[20.5em][l]{$n^{}_{0} \sim \text{Binomial}(n^{}_{b}, p^{}_{0})$}\% Photons directed to $D^{}_{0}$
        \State \makebox[20.5em][l]{$n^{}_{1} \gets n^{}_{b} - n^{}_{0}$}\% The remaining are directed to $D^{}_{1}$
        \\
        \For{$j = 0$ to $1$}\hspace{12.85em}\% For each detector,
        	\State \makebox[19em][l]{$d^{}_{j} \sim \text{Bernoulli}(p^{}_{d^{}_{j}})$}\% Produce a random dark click
        	\State \makebox[19em][l]{$s^{}_{j} \sim \text{Binomial}(n^{}_{j},p^{}_{c^{}_{j}})$}\% Sample $s$ photons to be detected
        	\State \makebox[19em][l]{$D^{}_{j}[i] \gets d^{}_{j}\vee \left(s^{}_{j} \ge 1\right)$}\% Click if at least one photon or dark count
        \EndFor	
    \EndFor
    \State \Return $D^{}_{0}, D^{}_{1}, l, a, b, x, e$
\EndFunction
\end{algorithmic}
\end{algorithm}

\begin{algorithm}
\caption{Simulation of QKD detection events under the HMM assumption}
\label{alg:simulate_hmm}
\begin{algorithmic}[1]
\Function{simulate\textsubscript{HMM}}{$N, \theta^{}_{A}, \theta^{}_{B}, \theta^{}_{E}$}
    \State \makebox[23em][l]{$D^{}_{0}, D^{}_{1}, l, a, b, x, e \gets  \text{SIMULATE}_{\text{IID}}\left(N, \theta^{}_{A}, \theta^{}_{B}, \theta^{}_{E}\right)$}\% First run an i.i.d. simulation
    \State \makebox[23em][l]{Initialize $a^{}_{0},a^{}_{1}$}\% After-pulses for detectors $D^{}_{0}$ and $D^{}_{1}$
    \For{$i = 2$ to $N$}
    	\For{$j = 0$ to $1$}
        	\If{$D^{}_{j}[i-1] \wedge \neg a^{}_{j}[i-1]$}\hspace{7.08em}\% If previous is a click \& not an after-pulse
            	\State $a^{}_{j}[i] \sim \text{Bernoulli}(p^{}_{a^{}_{j}})$\hspace{9.2em}\% Set to after-pulse with probability $p^{}_{a^{}_{j}}$
	        \EndIf
	        \State $D^{}_{j}[i] \gets D^{}_{j}[i] \vee a^{}_{j}[i]$\hspace{11.2em}\% Click if a genuine click or  an after-pulse 
		\EndFor
    \EndFor
    \State \Return $D^{}_{0}, D^{}_{1}, l, a, b, x, e$
\EndFunction
\end{algorithmic}
\end{algorithm}

To model realistic QKD conditions, we adopted the Gobby-Yuan-Shields (GYS) configuration~\cite{gobby2004quantum}, providing parameterization suitable for real-world applications and establishing a foundation for later comparison with the decoy-state protocol~\cite{ma2005practical}. The main simulation parameters are presented in \Cref{tab:main_parameters}. Additionally, we introduced variability in detector-specific parameters to test asymmetric scenarios, as shown in \Cref{tab:detector_specific_adjusted}. For the i.i.d. test, the after-pulse probability $p^{}_{a}$ was set to 0. The selected intensities for this experiment correspond to the minimum and maximum values, $\lambda^{}_{\min}$ and $\lambda^{}_{\max}$, as described in \Cref{sec-multiple-intensities}. Each simulation was run 10,000 times with 10,000 pulses per run, ensuring sufficient data for a comprehensive validation of the theoretical predictions.

\begin{table}
    \centering
    \renewcommand{\arraystretch}{1.5}
    \begin{tabular}{l l l}
        \multicolumn{3}{c}{\textbf{Parameter Configuration for the Simulation Experiment}} \\
        
        \toprule
        \multicolumn{3}{c}{$\theta^{}_{A}$ (Alice)} \\
        \midrule
        Parameter & Description & Value \\
        \midrule
        $\Lambda$& Intensity levels & $\left\{\lambda^{}_{\min}, \lambda^{}_{\max}\right\}$ \\
        $\alpha$ & Channel attenuation (dB/km) & 0.21 \\
        $d^{}_{AB}$ & Distance from Alice to Bob (km) & 50 \\
        
        \toprule
        \multicolumn{3}{c}{$\theta^{}_{B}$ (Bob)} \\
        \midrule
        Parameter & Description & Value \\
        \midrule
        $p^{}_{a}$ & After-pulse probability & 0.1 \\
        $p^{}_{c}$ & Detection efficiency & 0.045 \\
        $p^{}_{d}$ & Dark count probability & $1.7 \times 10^{-6}$ \\
        $p^{}_{e}$ & Misalignment probability & 0.033 \\
        
        \toprule
        \multicolumn{3}{c}{$\theta^{}_{E}$ (Eve)} \\
        \midrule
        Parameter & Description & Value \\
        \midrule
        $d^{}_{AE}$ & Distance from Alice to Eve (km) & 10 \\
        $\Delta$ & Fraction of intercepted pulses & 0.2 \\
        $k$ & Photons intercepted per pulse & 3 \\
        $p^{}_{EB}$ & Channel efficiency from Eve to Bob (if Eve intercepts) & optimized \\
        
        \toprule
        \multicolumn{3}{c}{$\theta^{}_{S}$ (Session)} \\
        \midrule
        Parameter & Description & Value \\
        \midrule
        $N$ & Number of pulses per session & 10,000 \\
        $N^{}_{R}$ & Number of simulation runs & 10,000 \\
        
        \bottomrule
    \end{tabular}
    \caption{Parameter configuration for the simulation experiment, grouped by Alice's parameters ($\theta^{}_{A}$), Bob's parameters ($\theta^{}_{B}$), Eve's parameters ($\theta^{}_{E}$), and session parameters ($\theta^{}_{S}$).}
    \label{tab:main_parameters}
\end{table}

\begin{table}
    \centering
    \renewcommand{\arraystretch}{1.5}
    \begin{tabular}{l c c}
        \toprule
        \multicolumn{3}{c}{\textbf{Detector-Specific Parameter Adjustments}} \\
        \midrule
        Parameter & $D^{}_{0}$ & $D^{}_{1}$ \\
        \midrule
        After-pulse probability & $p^{}_{a^{}_{0}} = p^{}_{a} \times 0.9$ & $p^{}_{a^{}_{1}} = p^{}_{a} \times 1.1$ \\
        Detection efficiency & $p^{}_{c^{}_{0}} = p^{}_{c} \times 0.9$ & $p^{}_{c^{}_{1}} = p^{}_{c} \times 1.1$ \\
        Dark count probability & $p^{}_{d^{}_{0}} = p^{}_{d} \times 0.9$ & $p^{}_{d^{}_{1}} = p^{}_{d} \times 1.1$ \\
        \bottomrule
    \end{tabular}
    \caption{Detector-specific values for Bob's parameters: after-pulse probability $p^{}_{a}$, detection efficiency $p^{}_{c}$, and dark count probability $p^{}_{d}$ for detectors $D^{}_{0}$ and $D^{}_{1}$.}
    \label{tab:detector_specific_adjusted}
\end{table}

\subsubsection{Validation of the i.i.d. Model}
\label{sec:validate_iid}
According to \Cref{theorem:likelihood_iid}, the distribution of clicks in the i.i.d. scenario should follow a multinomial distribution with parameters $N$ and $\mathbf{P}(\theta)$. Here, $\mathbf{P}(\theta)$ is a probability vector of length $2 \times 4 \times 2 = 16$, corresponding to two intensity levels ($\lambda^{}_{\min}$ and $\lambda^{}_{\max}$), four possible detection outcomes (00, 01, 10, and 11), and two basis configurations (matching and non-matching).

We can infer the distribution for any specific detection scenario—defined by a particular basis alignment, click pattern, and intensity level—by marginalizing over all remaining variables in $\mathbf{P}(\theta)$. When focusing on a single outcome within a multinomial distribution, the marginal distribution of each individual outcome follows a binomial distribution. More formally,
\[
\begin{aligned}
    \text{If} \quad c^{}_{\phantom{i}} &\sim \text{Multinomial}(n, \mathbf{p}), \\
    \text{then} \quad c^{}_{i} &\sim \text{Binomial}(n, p^{}_{i}),
\end{aligned}
\]
where $i$ denotes the index of the variable of interest, $c^{}_{i}$ represents its count, and $p^{}_{i}$ is the corresponding probability component in the vector $\mathbf{p}$.

Having derived the binomial distribution for each specific case within $\mathbf{P}(\theta)$, we compare the resulting histograms from simulated data against the theoretical binomial distributions inferred from the marginalized multinomial model. Using the parameters of each binomial distribution, we computed the 99\% credible interval to capture the expected range of variation. 

The results are reported in \Cref{fig:validate_iid_Ps}, where each panel represents a unique combination of intensity level, detection outcome, and basis alignment configuration. In each plot, the normalized histogram of simulated data is shown alongside the theoretical probability mass function (PMF) for the binomial distribution, with the 99\% credible interval shaded. This comparison assesses the degree of alignment between theory and data, providing empirical validation of the i.i.d. model under the conditions specified.

\begin{figure}[H]
    \centering
    \includegraphics[width=\textwidth]{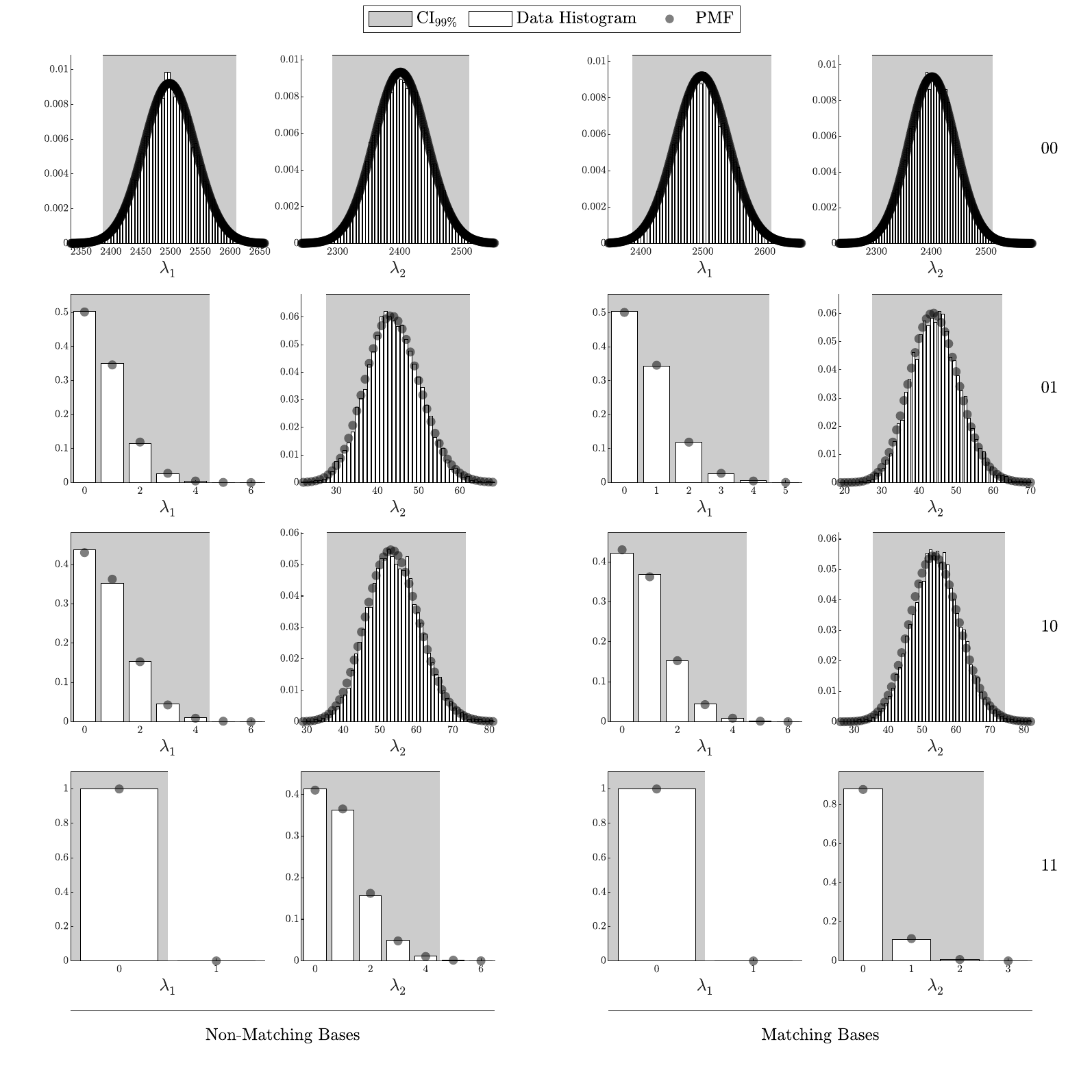}
    \caption{Comparison of simulated detection events (histograms) against theoretical binomial PMFs (solid lines) for various detection outcomes and intensities, with 99\% confidence intervals (shaded area). Each subplot represents a unique combination of intensity and detection outcome (00, 01, 10, 11) across matching and non-matching bases.}
    \label{fig:validate_iid_Ps}
\end{figure}

In addition to validating the general detection probabilities, we conducted an experiment to assess the accuracy of our theoretical model for the distribution of signal and erroneous clicks. Specifically, we tracked the number of gain clicks $G^{}_{m\lambda}$ and erroneous clicks $R^{}_{m\lambda}$ for each basis alignment configuration $m$ (matching or non-matching) and intensity level $\lambda$.

For each combination of basis alignment and intensity, the histogram of simulated counts for gain clicks $G^{}_{m\lambda}$ and erroneous clicks $R^{}_{m\lambda}$ was compared with the theoretical binomial distributions as defined in \Cref{eq-G_m_lambda_E} and \Cref{eq-R_m_lambda_E}.

The results are reported in \Cref{fig:validate_iid_EQs}, where it clearly shows the alignment between the theoretical probability mass functions (PMFs) and the empirical histograms from the simulated data. The 99\% confidence intervals derived from the binomial distributions are shaded in each plot, providing a benchmark for the expected range of counts. Observing that the simulated data closely adheres to the theoretical PMF within these confidence intervals further confirms the robustness of our model in capturing both signal and error probabilities for different basis alignments and intensities.

\begin{figure}
    \centering
    \includegraphics[width=\textwidth]{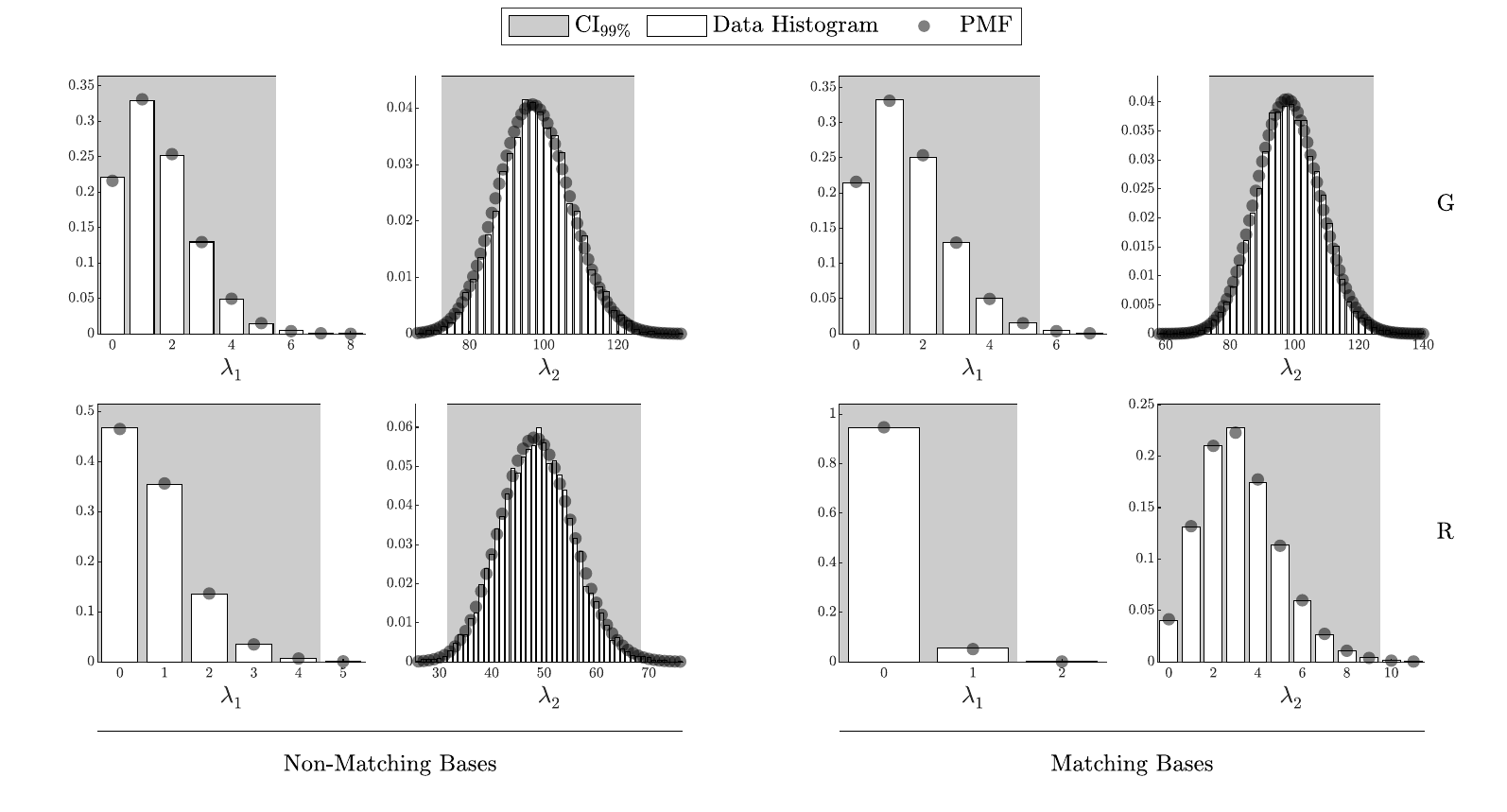}
    \caption{Comparison of simulated gain ($G$) and error ($R$) counts (histograms) against theoretical binomial PMFs (solid lines) for various basis alignments and intensity levels. The shaded areas represent the 99\% confidence intervals for each case.}
    \label{fig:validate_iid_EQs}
\end{figure}

\subsubsection*{Comparison with the Decoy-state Protocol}
As a key difference between the proposed method and the decoy-state protocol, the proposed framework provides the ability to compute the probabilities of individual detection events (00, 01, 10 and 11) for both matching and non-matching bases. In contrast, the decoy-state protocol does not distinguish between these specific detection outcomes. Instead, the decoy-state protocol simplifies the analysis by focusing on aggregated quantities—the gain count ($G$) and error count ($R$)—while treating double-click events as both signal and error, and restricting its analysis to the matching bases ($a = b$).

Due to these distinctions, a figure equivalent to \Cref{fig:validate_iid_Ps} cannot be generated for the decoy-state protocol, as its authors did not provide theoretical predictions for individual click patterns. However, the decoy-state protocol does give testable theoretical predictions for number of signal and error clicks under the matching basis case, which can be evaluated. For this experiment, we adjusted the counting to include double-click events in both ($G$) and ($R$) and restricted the analysis to matching bases  ($a = b$). The results of this comparison are presented in \Cref{fig:validate_iid_decoy}.

\begin{figure}
    \centering
    \includegraphics[width=0.5\textwidth]{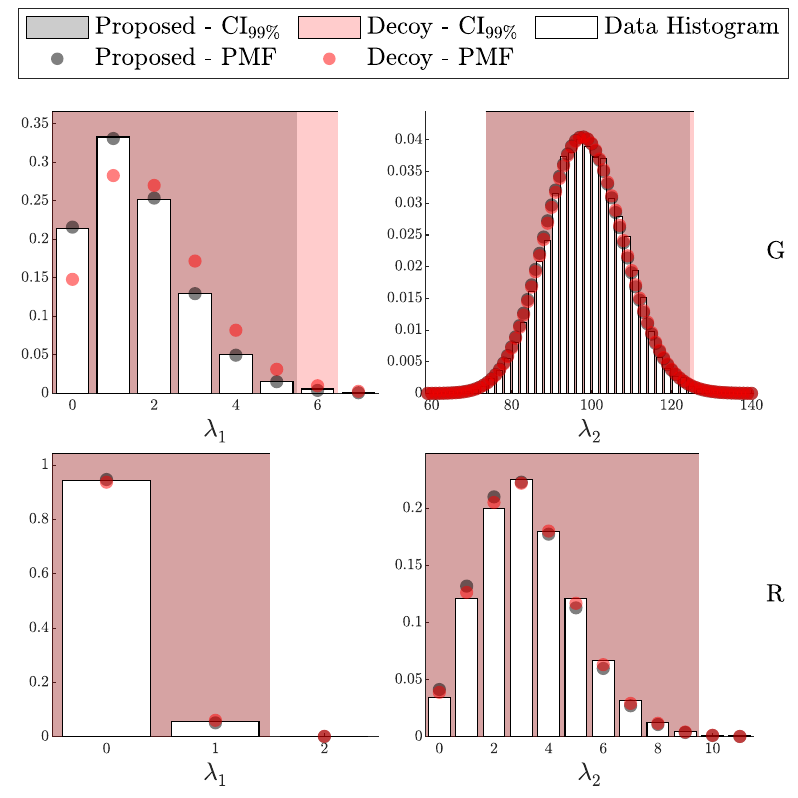}
    \caption{Comparison of simulated gain ($G$) and error ($R$) counts (histograms) against theoretical binomial PMFs (solid lines) for the decoy-state protocol. Double-click events are included in both gain and error counts, and only matching bases are considered, consistent with the assumptions of the decoy-state protocol. The shaded areas represent the 99\% confidence intervals derived from the theoretical binomial distributions.}
    \label{fig:validate_iid_decoy}
\end{figure}

The results in \Cref{fig:validate_iid_decoy} reveal discrepancies between the simulated data and the theoretical predictions of the decoy-state protocol, particularly for the gain counts ($G$) under the low-intensity level $\lambda^{}_{1}$. These deviations highlight the limitations of the decoy-state protocol in accurately modeling the detection process, as it does not account for the finer distinctions between individual click patterns or non-matching bases. While this analysis focuses on the specific configuration of the matching case, later experiments will demonstrate that such deviations become more significant intensity decreases.

By comparison, the proposed framework, as demonstrated in \Cref{fig:validate_iid_Ps} and \Cref{fig:validate_iid_EQs}, consistently aligns closely with theoretical predictions across a broader range of scenarios. This further underscores the robustness of the proposed method in accurately capturing both the individual detection events and the aggregated signal and error counts, providing a more comprehensive and reliable framework for QKD security analysis.

\subsubsection{Validation of the HMM Model}

In this subsection, we validate the Hidden Markov Model (HMM) framework, as formalized in \Cref{theorem:likelihood_hmm}. This validation method we employ here mirrors the i.i.d. approach in \Cref{sec:validate_iid} but incorporates after-pulsing in the simulation by setting $p^{}_{a} = 0.1$ and uses the HMM-based probability vector $\hat{\mathbf{P}}(\theta)$ in the likelihood (10\% variation between the detectors was employed, see \Cref{tab:detector_specific_adjusted}).

For comparison, we also applied the i.i.d. based probability vector $\mathbf{P}(\theta)$ to the same HMM-generated data to assess the extent of deviation when dependencies are present.

\Cref{fig:validate_hmm_Ps,fig:validate_hmm_EQs} illustrate strong alignment between the simulated HMM data and the theoretical distributions derived from $\hat{\mathbf{P}}(\theta)$, supporting the validity of the HMM model under these experimental conditions. In contrast, a clear deviation is shown when the i.i.d. model is applied to the data affected by after-pulsing.

\begin{figure}
    \centering
    \includegraphics[width=\textwidth]{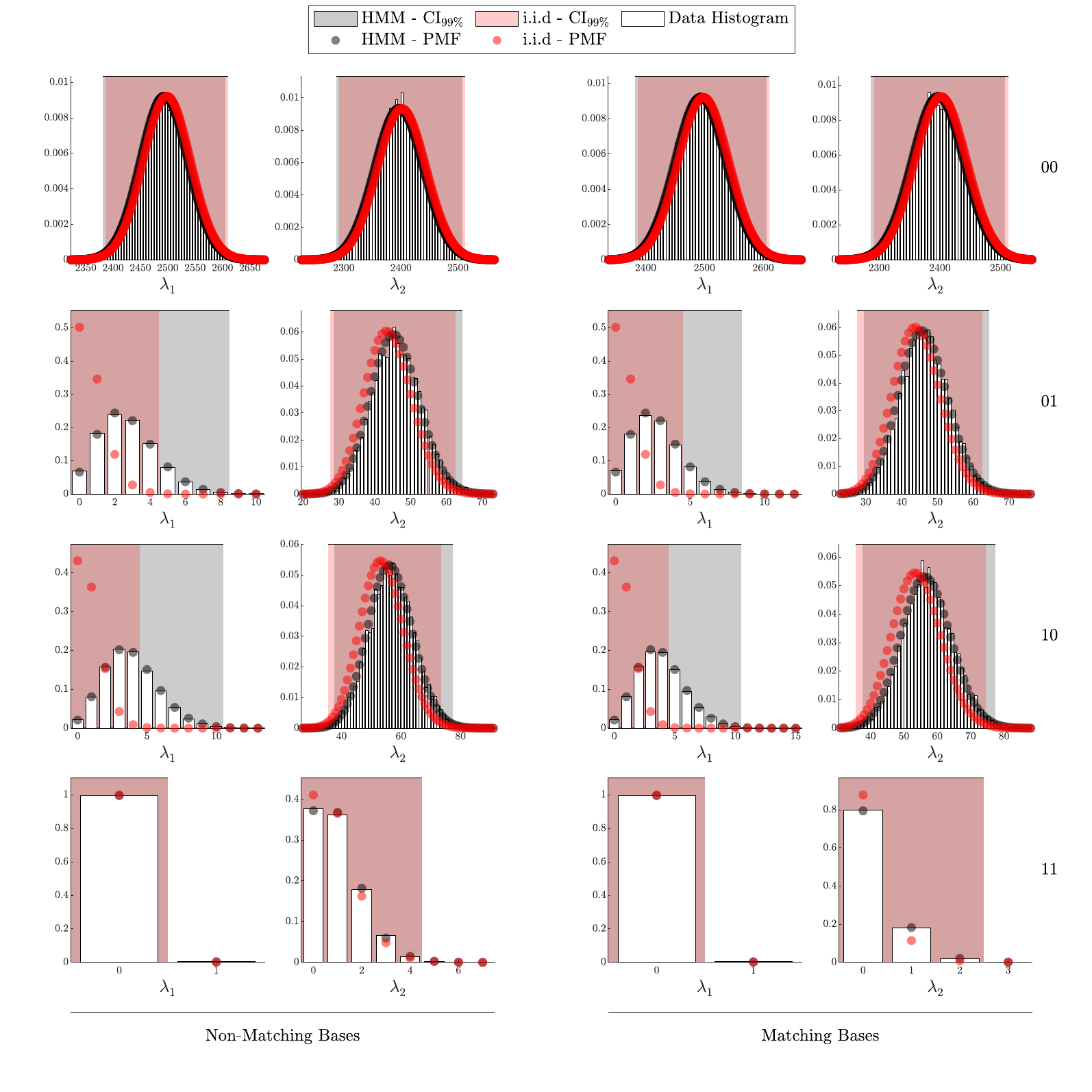}
    \caption{Comparison of simulated detection events (histograms) against theoretical PMFs under the HMM model (solid lines) for various detection outcomes and intensities. The shaded areas represent the 99\% confidence intervals for each case.}
    \label{fig:validate_hmm_Ps}
\end{figure}

\begin{figure}
    \centering
    \includegraphics[width=\textwidth]{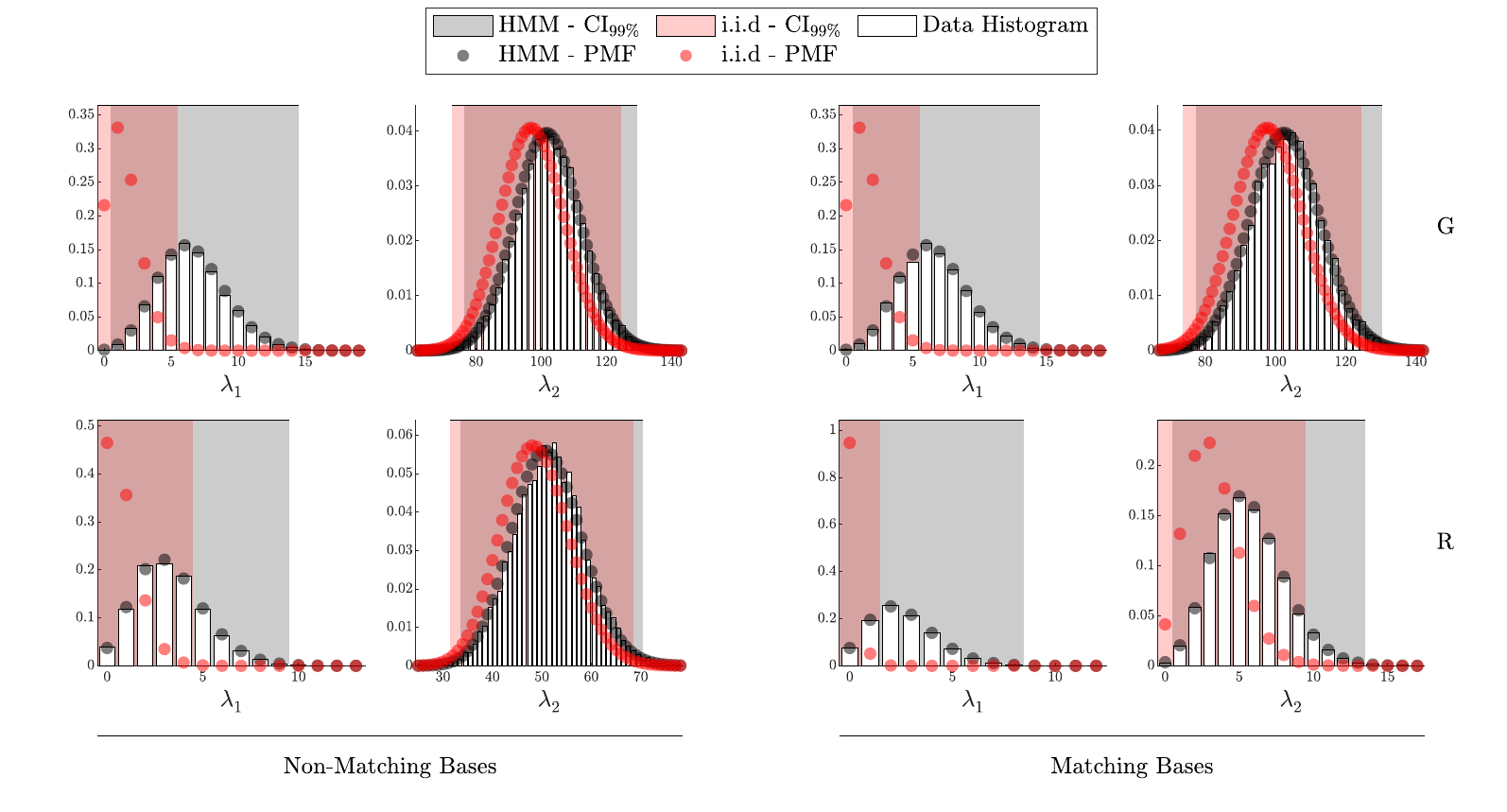}
    \caption{Comparison of simulated gain ($G$) and error ($R$) counts (histograms) under HMM conditions against theoretical PMFs from the HMM model (solid lines) for various basis alignments and intensity levels. Shaded areas indicate the 99\% confidence intervals.}
    \label{fig:validate_hmm_EQs}
\end{figure}

\subsection{Error Rate Estimation}

To validate the error rate calculations derived from our proposed method and compare them to the decoy-state protocol, we analyze the error rate $\delta^{}_{m\lambda}(\theta)$ as defined in \Cref{eq-delta_E_Q}. To ensure comparability with the decoy-state protocol, we consider a scenario with symmetric detectors and define a gain as any pulse where either of the detectors registers a click. By applying \Cref{lem:pair_detectos} and \Cref{lem:laser_detector}, we simplify the calculations for $Q^{}_{m\lambda}(\theta)$ and $EQ^{}_{m\lambda}(\theta)$ under the conditions of matching bases, no eavesdropper intervention, and a signal intensity $\mu$. This yields the following expressions for our model:

\begin{align}
    Q^{}_{\mu}(\theta) &= 1 - (1 - p^{}_{d})^2 e^{-\mu\cdot p^{}_{AB}\cdot p^{}_{c}},\label{eq:Q_corrected} \\
    EQ^{}_{\mu}(\theta) &= 1 - (1 - p^{}_{d}) e^{-\mu\cdot p^{}_{AB}\cdot p^{}_{c}\cdot p^{}_{e}},\label{eq:E_corrected}
\end{align}
whereas the decoy-state protocol, the corresponding calculations are given by:

\begin{align}
    Q^{}_{\mu}(\theta) &= p^{}_{d} + 1 - e^{-\mu\cdot p^{}_{AB}\cdot p^{}_{c}},\label{eq:Q_decoy} \\
    EQ^{}_{\mu}(\theta) &= e^{}_{0}\cdot p^{}_{d} + p^{}_{e} (1 - e^{-\mu\cdot p^{}_{AB}\cdot p^{}_{c}}),\label{eq:E_decoy}
\end{align}
where $e^{}_{0} = \frac{1}{2}$, and we omit the subscript $m = 1$ for brevity. The decoy-state protocol justified their approximation  under the assumption that $p^{}_{d}$ is small, making it a reasonable simplification for most practical scenarios. However, in a subsequent experiment, we will demonstrate that this approximation can lead to significant errors in certain cases.

For the purpose of this experiment, both the proposed method and the decoy-state protocol treat double-click events as both a gain and an error to enable a one-to-one comparison. While necessary for consistency, double-click events inherently provide no useful information and have an error rate of 50\%. The decoy-state protocol, however, assumes an error rate of 100\% for double-click events, even when employing more refined calculations. This assumption contradicts fundamental information-theoretic principles, as a 100\% error rate corresponds to minimum Shannon entropy, implying more information than a 50\% error rate, where entropy is maximized. Consequently, this misrepresentation underestimates the true entropy of the data and leads to an overestimation of the secure-key rate, introducing significant discrepancies in the security analysis.

\subsubsection{Simulation Setup and Results}

To evaluate the performance of the decoy-state protocol and our proposed method in estimating $\delta$, we conducted simulations using the parameter set from \Cref{tab:main_parameters}. To enable a direct comparison with the decoy-state protocol, we configured the detectors symmetrically and set the intensity to $\mu = 0.48$ (the optimal intensity for the GYS configuration as per~\cite{ma2005practical}). We also incorporated after-pulse probabilities of 5\% and 10\% to evaluate their impact on performance. Each session consisted of $10^9$ pulses, and the experiment was repeated over various distances between Alice and Bob, ranging from 0 to 150 km in 5 km increments.

For each configuration, we recorded the number of gain and erroneous clicks, treating double-click events as both a signal and an error to maintain consistency with the decoy-state approach. We then calculated the empirical error rates and compared them to the theoretical 99\% confidence intervals of the approximate Beta distribution described in \Cref{eq-beta_rho}, whose parameters were derived based on the mean and approximate variance from \Cref{lem:rho_E_V}. While the confidence intervals are approximate, the expected value is derived exactly.

The results are presented in \Cref{fig:error_rates}, which shows the error rates as a function of distance for various after-pulse probabilities. The shaded regions represent the 99\% confidence intervals, while the expected values are depicted as solid lines. The figure highlights the differences between the error rate estimates provided by our method and the decoy-state protocol across different scenarios.

\begin{figure}[h]
    \centering
    \includegraphics[width=\textwidth]{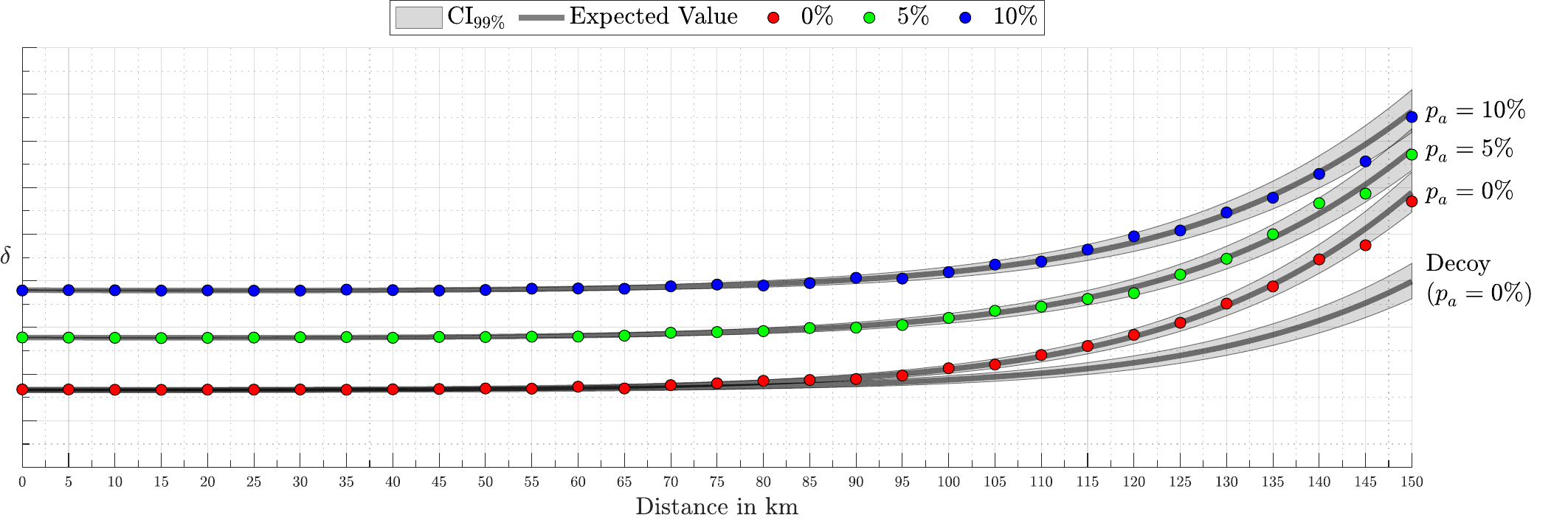}
    \caption{Error rate $\delta$ as a function of distance between Alice and Bob, plotted for different after-pulse probabilities ($p^{}_{a}$ = 0\%, 5\%, and 10\%). The shaded regions represent the 99\% confidence intervals, and the solid lines indicate the expected values. Comparisons are made with the decoy-state protocol and our proposed method.}
    \label{fig:error_rates}
\end{figure}

\subsubsection{Analysis of Results}

The results in \Cref{fig:error_rates} clearly illustrate that the error rate increases with distance for all scenarios, with higher after-pulse probabilities exacerbating the increase. The proposed method consistently provides more accurate error rate estimates, as indicated by the alignment of the empirical error rates with the theoretical confidence intervals. In contrast, the decoy-state protocol's approximation underestimates the error rate, particularly at greater distances and higher after-pulse probabilities.

This discrepancy highlights a significant limitation of the decoy-state protocol's approximation, which assumes a negligible dark count probability $p^{}_{d}$. As shown, this assumption becomes increasingly problematic as distance and after-pulse probability increase, leading to overconfident secure-key rate calculations. Our approach, which accounts for these effects more rigorously, provides a more reliable estimate, emphasizing the importance of using accurate error rate models in QKD implementations.

\subsection{Secure-Key Rate Comparison}

Building on our validation of the theoretical gain and error rates, we now examine the secure-key rates of the proposed protocol in comparison with the decoy-state protocol.

\subsubsection{Experimental Setup}
For this experiment, we implemented the weak+vacuum decoy-state protocol with intensity parameters $\mu = 0.48$, $\nu^{}_{1} = 0.05$, and $\nu^{}_{2} = 0$ to match the setup detailed in~\cite{ma2005practical}. The experiment was conducted using the GYS configuration with parameters specified in \Cref{tab:main_parameters}. Additionally, the detectors were symmetrized to align with the assumptions made in the decoy-state protocol paper. To highlight the advantages of the proposed protocol, we extended the analysis to include three additional intensities: $\mu = 1$, $\mu = 5$, and $\mu = 10$. These intensities were chosen to demonstrate the broader operational range achievable with the proposed method.

\subsubsection{Secure-Key Rate Calculation}
The secure-key rate for the decoy-state protocol is calculated as follows (after a slight change of notation from the original paper and omitting the dependency on $\theta$ for brevity):
\begin{align}
    K^{}_{\text{Decoy}} \ge q \cdot \left\{-Q^{}_{\mu} f\left(\delta^{}_{\mu}\right) H^{}_{2}\left(\delta^{}_{\mu}\right) + Q^{}_{1} \left[1 - H^{}_{2}\left(e^{}_{1}\right)\right]\right\},
\end{align}
where $q$ is the protocol efficiency, $\frac{1}{2}$ for the BB84 protocol, and $f(\delta^{}_{\mu})$ is the error correction efficiency, set to 1.22 (an upper bound for this experiment~\cite{ma2005practical}). The terms are defined as:
\begin{align}
    e^{}_{1} & = \frac{EQ^{}_{\nu^{}_{1}} e^{\nu^{}_{1}} - EQ^{}_{\nu^{}_{2}} e^{\nu^{}_{2}}}{Y^{}_{1} (\nu^{}_{1} - \nu^{}_{2})}, \\
    Q^{}_{1} & = Y^{}_{1} \mu e^{-\mu}, \\
    Y^{}_{1} & = \frac{\mu}{\mu \nu^{}_{1} - \mu \nu^{}_{2} - \nu^{2}_{1} + \nu^{2}_{2}} \left( Q^{}_{\nu^{}_{1}} e^{\nu^{}_{1}} - Q^{}_{\nu^{}_{2}} e^{\nu^{}_{2}} - (\nu^{2}_{1} - \nu^{2}_{2}) \frac{Q^{}_{\mu} e^{\mu} - Y^{}_{0}}{\mu^{2}_{}} \right), \\
    Y^{}_{0} & = \max\left( \frac{\nu^{}_{1} Q^{}_{\nu^{}_{2}} e^{\nu^{}_{2}} - \nu^{}_{2} Q^{}_{\nu^{}_{1}} e^{\nu^{}_{1}}}{\nu^{}_{1} - \nu^{}_{2}}, 0 \right).
\end{align}

Furthermore, we recalculated the secure-key rate for the decoy-state protocol using corrected gain and error probabilities. Specifically, we replaced $EQ^{}_{\mu}$ and $Q^{}_{\mu}$ in \Cref{eq:E_decoy,eq:Q_decoy} with the corrected versions in \Cref{eq:E_corrected,eq:Q_corrected} and proceeded with the secure-key rate calculation as outlined.

For the proposed protocol, the secure-key rate calculation is based on the GLLP formulation, incorporating the signal yield $Q^{}_{\mu}$, the error-correction factor $f(\delta^{}_{\mu})$, and the protocol efficiency $q$:
\begin{align}
    K \ge q \cdot Q^{}_{\mu} \cdot \left\{ -f\left(\delta^{}_{\mu}\right) H^{}_{2}\left(\delta^{}_{\mu}\right) + \left(1 - \Delta\right) \left[ 1 - H^{}_{2}\left(\frac{\delta^{}_{\mu}}{1 - \Delta}\right) \right] \right\}.
\end{align}

The calculations for the proposed method assume $\Delta = 0$, indicating no eavesdropping intervention. This assumption is informed by our previously developed Bayesian framework, which allows for the inference of $\Delta$ from the observed data. Unlike the decoy-state protocol, which relies on an intensity-based approach, the proposed method leverages an observation-based approach, providing greater flexibility in assessing the presence of eavesdropping. For the purpose of this analysis, we present the secure-key rate assuming $\Delta = 0$, with the understanding that as the session length increases, the inference of $\Delta = 0$ becomes increasingly robust. Thus, the reported secure-key rate reflects the asymptotic behavior under this assumption.

The resulting secure-key rate performance is shown in \Cref{fig:key_rates}. The figure includes the secure-key rates for the proposed protocol across various intensities, alongside both the original and corrected versions of the decoy-state protocol for direct comparison.

\begin{figure}[h]
    \centering
    \includegraphics[width=\textwidth]{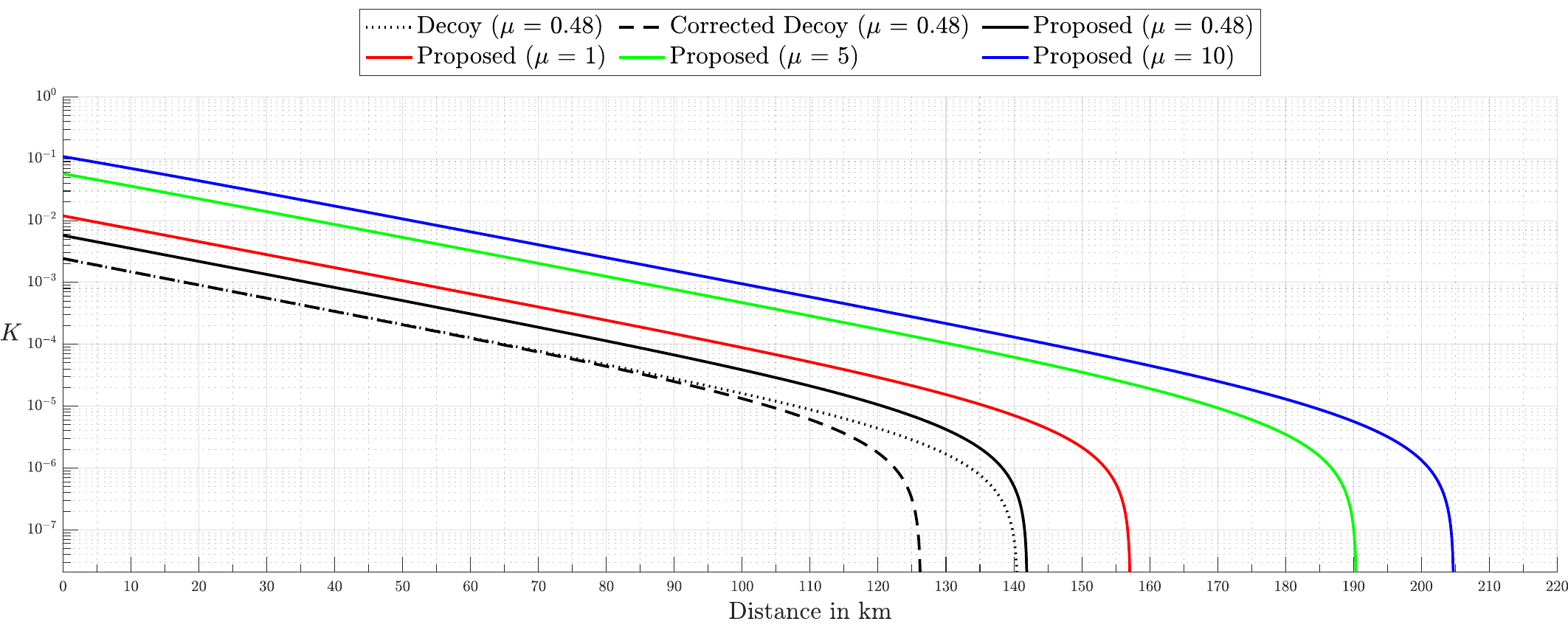}
    \caption{Comparison of secure-key rates as a function of distance for the proposed protocol and the decoy-state protocol. The decoy-state protocol is evaluated using the original error and gain rate calculations (dotted lines) and the corrected calculations (dashed lines). The proposed protocol is shown for intensities $\mu = 0.48$, $\mu = 1$, $\mu = 5$, and $\mu = 10$, illustrating higher key-rates and an extended operational range.}
    \label{fig:key_rates}
\end{figure}

\subsubsection{Analysis of Secure-Key Rates}

\Cref{fig:key_rates} compares the secure-key rates of our proposed protocol against the decoy-state protocol in this asymptotic regime. Under the same intensity setting and across its operational range, our approach achieves a median improvement of about 2.5 times the key generation rate and a 12.44\% increase in distance compared to the decoy-state protocol using the same intensity. Moreover, by increasing the pulse intensity to 10 photons per pulse (a cap chosen to ensure numerical precision in the incomplete gamma function computations), we obtain up to 50 times the key generation rate and a 62.2\% increase in operational distance (approximately 200 km).

In addition to increasing key rate, this approach grants access to operational regimes that were previously inaccessible. Moreover, operating at higher intensities enables the use of less sensitive and more cost-effective detectors.

\subsection{Testing the Bayesian Framework}

In this section, we explore the robustness and accuracy of the Bayesian framework in inferring key system parameters under different modeling assumptions. We conduct two main experiments: one assuming the i.i.d. model and another incorporating the Hidden Markov Model (HMM) to account for after-pulse effects. The aim is to validate the posterior distributions constructed using the Bayesian framework and assess the impact of these modeling choices on the inferred secure-key rates.

\subsubsection{Inference of Eve's Parameters Under the i.i.d. Model}

The first experiment focuses on inferring Eve’s parameters, $\theta^{}_{E} = \{ d^{}_{AE}, p^{}_{EB}, k, \Delta \}$, while assuming an i.i.d. model for photon detection events. All other parameters were kept constant as per \Cref{tab:main_parameters}. The number of intensity levels was set to eight (twice the number of parameters in $\theta^{}_{E}$), evenly spaced between $\lambda^{}_{\min}$ and a capped $\lambda^{}_{\max} = 10$, to ensure numerical precision due to limitations in the software's incomplete gamma function implementation.\footnote{All computations were performed using Matlab R2024b.} The capping avoids potential inaccuracies caused by the function returning identical values for different parameters when the intensity difference exceeds 10, ensuring reliable and precise calculations within the function’s resolution.

We utilized the Shrinking-Rank Slice Sampler (SRSS)~\cite{radford2010cass}, a variant of Covariance-Adaptive Slice Sampling~\cite{thompson2010covariance}. Although the derivation of the gradients is excluded from this paper for brevity, they were computed and are available in our code implementation~\cite{almosallam2024generalized}. The session length was $10^9$ pulses, with $10^5$ samples drawn and a burn-in of $10^3$ iterations. Instead of sampling multiple chains and performing convergence diagnostics, we relied on manual inspection and ensured that our chosen sample size exhibited convergence for this configuration. To minimize the risk of poor starting points, we began with the expected value of the prior, optimized to the maximum a posteriori (MAP) estimate using gradient-based optimization techniques, and used this MAP estimate as our initial sample. This approach allowed us to employ a relatively small burn-in size, as the initialization ensured sampling started from a high-probability region of the posterior.

The posterior distributions for Eve's parameters are shown in \Cref{fig:eve_iid_params}. The histograms, along with kernel density estimates (KDEs) normalized as probability density functions (PDFs), illustrate the inferred distributions. The empirical cumulative distribution functions (CDFs) from the KDEs were used to highlight the 99\% confidence intervals, with the true parameter values indicated for comparison.

Given that the secure-key rate $K$ is a one-to-one function of these parameters, we transformed the posterior distributions into distributions of secure-key rates (grouped by intensity levels). Unlike the decoy-state protocol, which primarily uses the signal intensity ($\mu$) for key generation, our approach ensures that no intensity is wasted. As long as $K > 0$ for a given intensity level, it contributes to the secure-key generation, enabling efficient utilization of all pulses and maximizing the extracted key rate. The transformed secure-key rate distributions are depicted in \Cref{fig:eve_iid_Ks}.

\begin{figure}[h!]
    \centering
    \includegraphics[width=\textwidth]{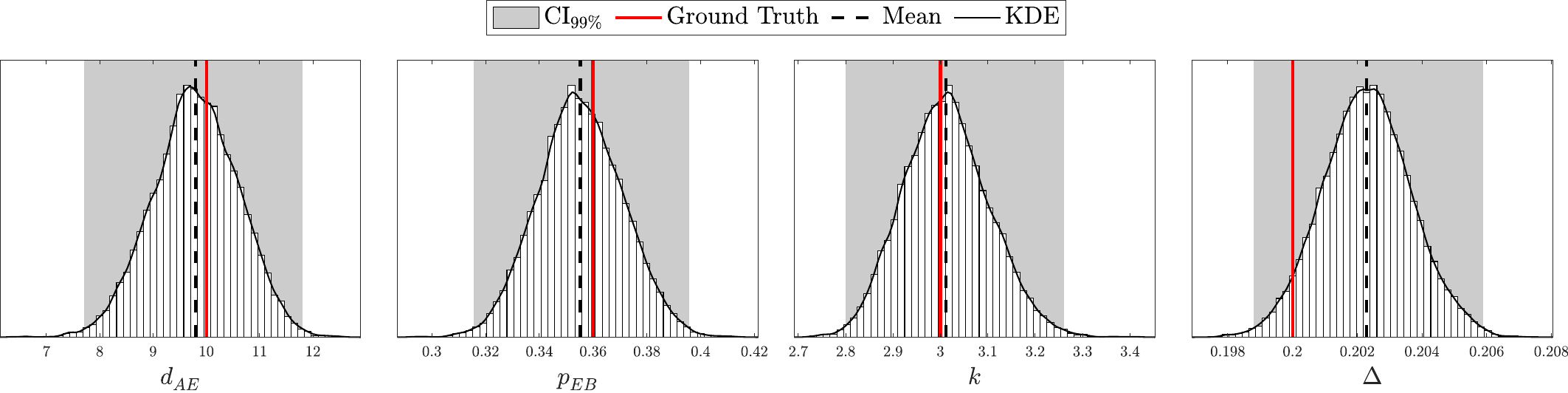}
    \caption{Posterior distributions of Eve’s parameters $\theta^{}_{E}$. Data was simulated using \Cref{alg:simulate_hmm}. The plots include sample histograms, ground truth values (red lines), empirical means (dashed lines), kernel density estimates (solid black lines), and 99\% confidence intervals (shaded regions).}
    \label{fig:eve_iid_params}
\end{figure}

\begin{figure}[h!]
    \centering
    \includegraphics[width=\textwidth]{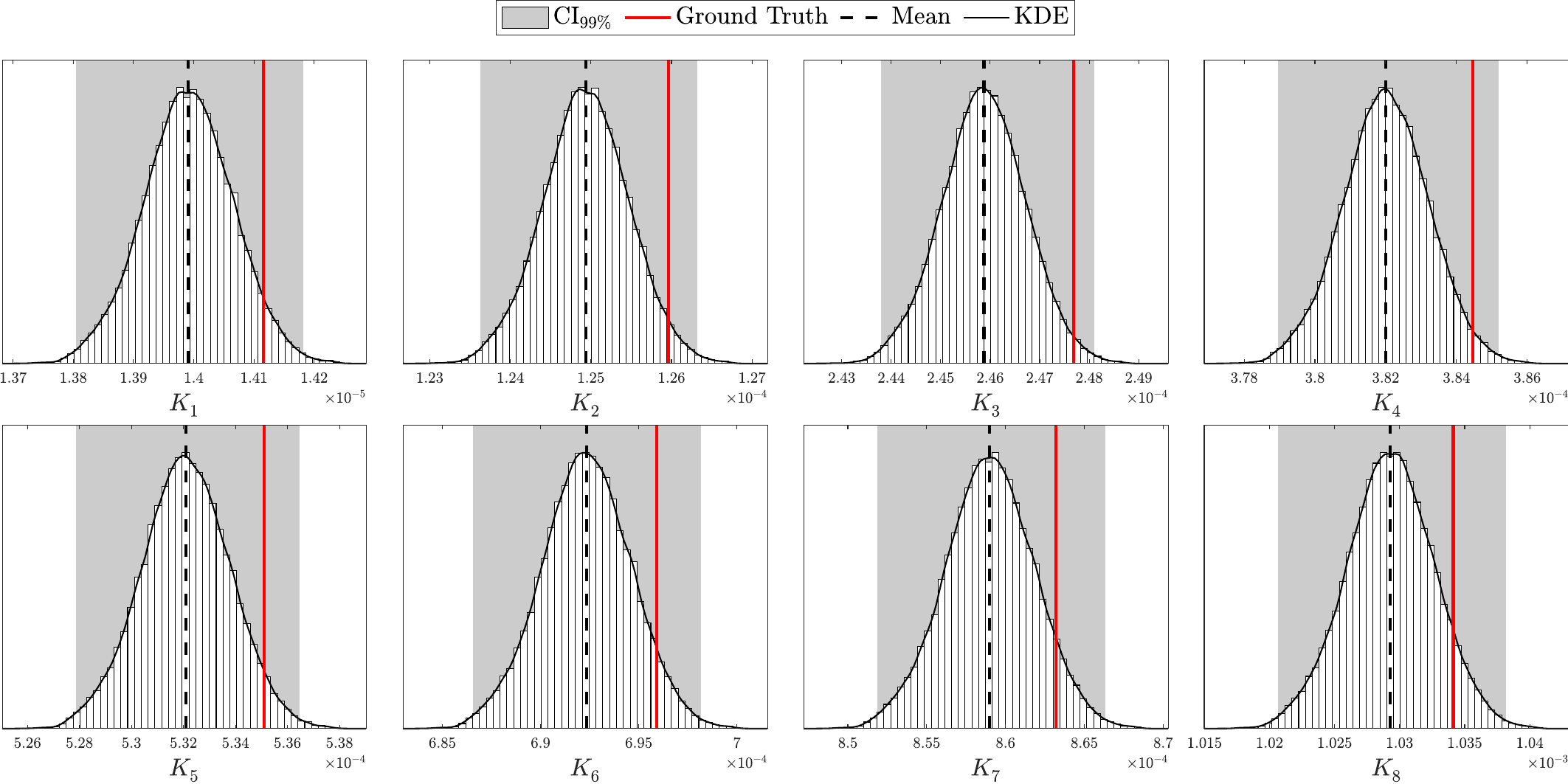}
    \caption{Posterior distributions of the secure-key rate for each intensity level under the i.i.d. model (data generated using \Cref{alg:simulate_iid}). The plots show sample histograms, ground truths (red lines), empirical means (dashed lines), kernel density estimates (solid black lines), and 99\% confidence intervals (shaded regions).}
    \label{fig:eve_iid_Ks}
\end{figure}

\subsubsection*{Analysis of the Inference Results}

The results in \Cref{fig:eve_iid_params} indicate a high degree of accuracy in inferring Eve’s parameters. The ground truth values lie well within the 99\% confidence intervals and close to the empirical mean, demonstrating that the Bayesian framework can reliably capture the underlying parameter distributions. Additionally, the kernel density estimates provide smooth approximations of the posteriors, validating the sampling method’s efficiency. The corresponding secure-key rate distributions in \Cref{fig:eve_iid_Ks} show that for all intensity levels, the key rates remain consistently above zero, confirming that the proposed method can effectively leverage the full range of intensities for key generation.

\subsubsection{Inference of Eve's Parameters Under the HMM Model}

The second experiment incorporates after-pulsing effects into the simulation and uses the Hidden Markov Model (HMM) to capture the resulting temporal correlations. In this setup, the after-pulse probability $p^{}_{a}$ was set to 0.1, with a 10\% variation applied between the two detectors, consistent with the parameters used in \Cref{tab:detector_specific_adjusted}. The experimental configuration and sampling approach remained the same as in the i.i.d. case.

\paragraph{Results for the HMM Model}

The posterior distributions for Eve’s parameters under the HMM model are shown in \Cref{fig:eve_hmm_params}. As in the i.i.d. case, the histograms represent sampled values, the solid black lines correspond to kernel density estimates (KDEs), and the shaded regions denote the 99\% confidence intervals. The true parameter values are indicated by red vertical lines for reference.

When incorporating after-pulsing effects, the results demonstrate a high level of accuracy in parameter inference, with the ground truth values well-contained within the confidence intervals. This highlights the robustness of the Bayesian framework under the more complex after-pulsing scenario. The inferred distributions show slightly broader confidence intervals compared to the i.i.d. case, reflecting the increased uncertainty introduced by temporal correlations. Despite this, the model effectively captures the true parameter values.

\begin{figure}[h!]
    \centering
    \includegraphics[width=\textwidth]{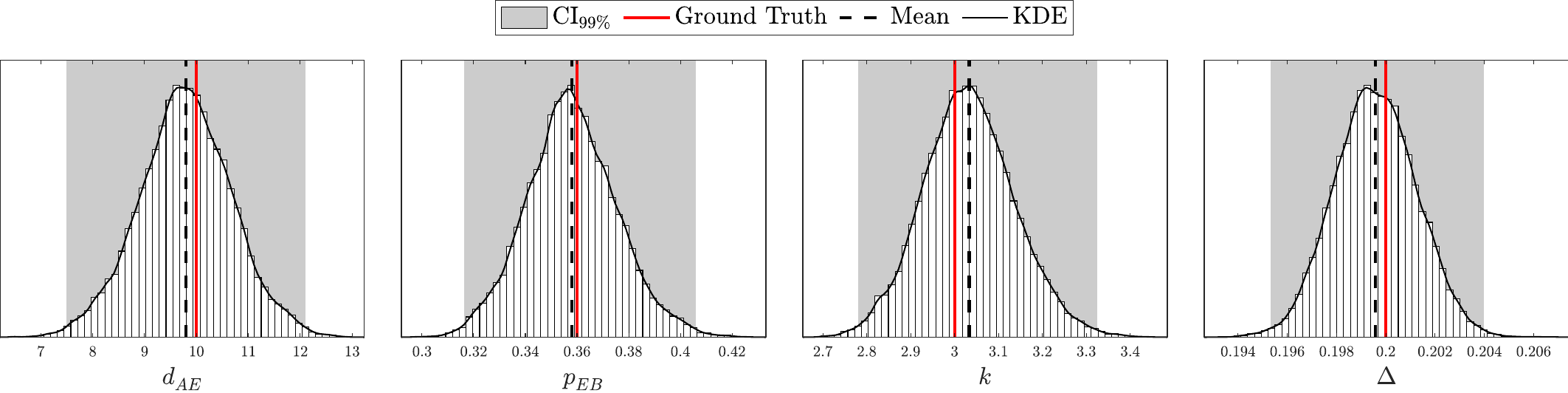}
    \caption{Posterior distributions of Eve’s parameters under the HMM. model, with after-pulsing present (data generated using \Cref{alg:simulate_hmm}). The plots show sample histograms, ground truths (red lines), empirical means (dashed lines), kernel density estimates (solid black lines), and 99\% confidence intervals (shaded regions).}
    \label{fig:eve_hmm_params}
\end{figure}

\paragraph{Comparison with the i.i.d. Model}

To assess the impact of neglecting after-pulsing, we performed the same experiment under the incorrect i.i.d. assumption while using data generated with \Cref{alg:simulate_hmm}, which incorporates after-pulsing effects. The posterior distributions inferred using the i.i.d. model are shown in \Cref{fig:eve_hmm_params_with_iid}. The results highlight severe discrepancies between the inferred parameters and their true values, accompanied by unwarranted confidence in these erroneous estimates. Most importantly, $\Delta$—a critical parameter for secure-key rate calculations—is significantly underestimated, with a mean value of 0.0149 compared to the ground truth of 0.2. This underestimation of $\Delta$ is particularly costly, as it leads to overestimated secure-key generation rates, thereby compromising the protocol's security. The high confidence intervals around these incorrect predictions further underscore the failure of the i.i.d. model to capture the temporal correlations introduced by after-pulsing.

\begin{figure}[h!]
    \centering
    \includegraphics[width=\textwidth]{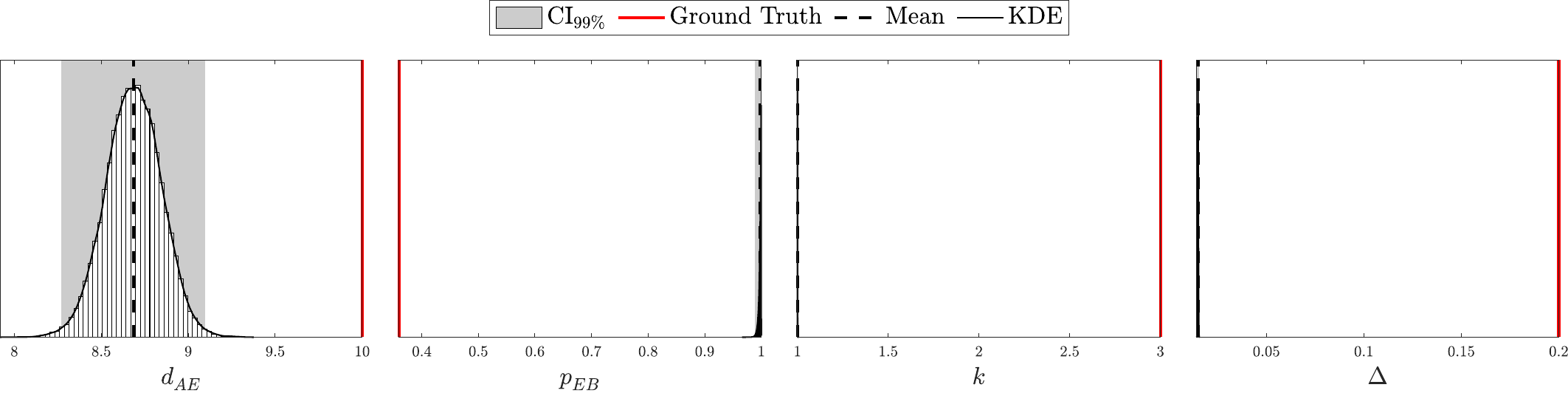}
    \caption{Posterior distributions of Eve’s parameters under the i.i.d. model, with after-pulsing present (data generated using \Cref{alg:simulate_hmm}). The plots show sample histograms, ground truths (red lines), empirical means (dashed lines), kernel density estimates (solid black lines), and 99\% confidence intervals (shaded regions).}
    \label{fig:eve_hmm_params_with_iid}
\end{figure}

\paragraph{Secure-Key Rate Analysis}

We also transformed the inferred posterior distributions into secure-key rate distributions for both the HMM model and the mismatched i.i.d. model. The results are depicted in \Cref{fig:eve_hmm_Ks} and \Cref{fig:eve_hmm_Ks_with_iid}, respectively. Under the HMM model, the secure-key rates for the first four intensity levels are effectively zero. This behavior differs from the case without after-pulsing, where these intensities still contributed to the key rate. After-pulsing increases the error rate, which, in turn, suppresses secure-key generation for these lower intensities. However, for higher intensity levels, the HMM model effectively leverages as much information as possible, resulting in significantly larger secure-key rates with tight confidence intervals closely centered around the ground truth values.

In contrast, the i.i.d. model applied to data generated with after-pulsing results in highly inaccurate secure-key rate estimates. The key rates were consistently and significantly overestimated. Moreover, the model exhibited unwarranted confidence in these highly inaccurate results.

\begin{figure}[h!]
    \centering
    \includegraphics[width=\textwidth]{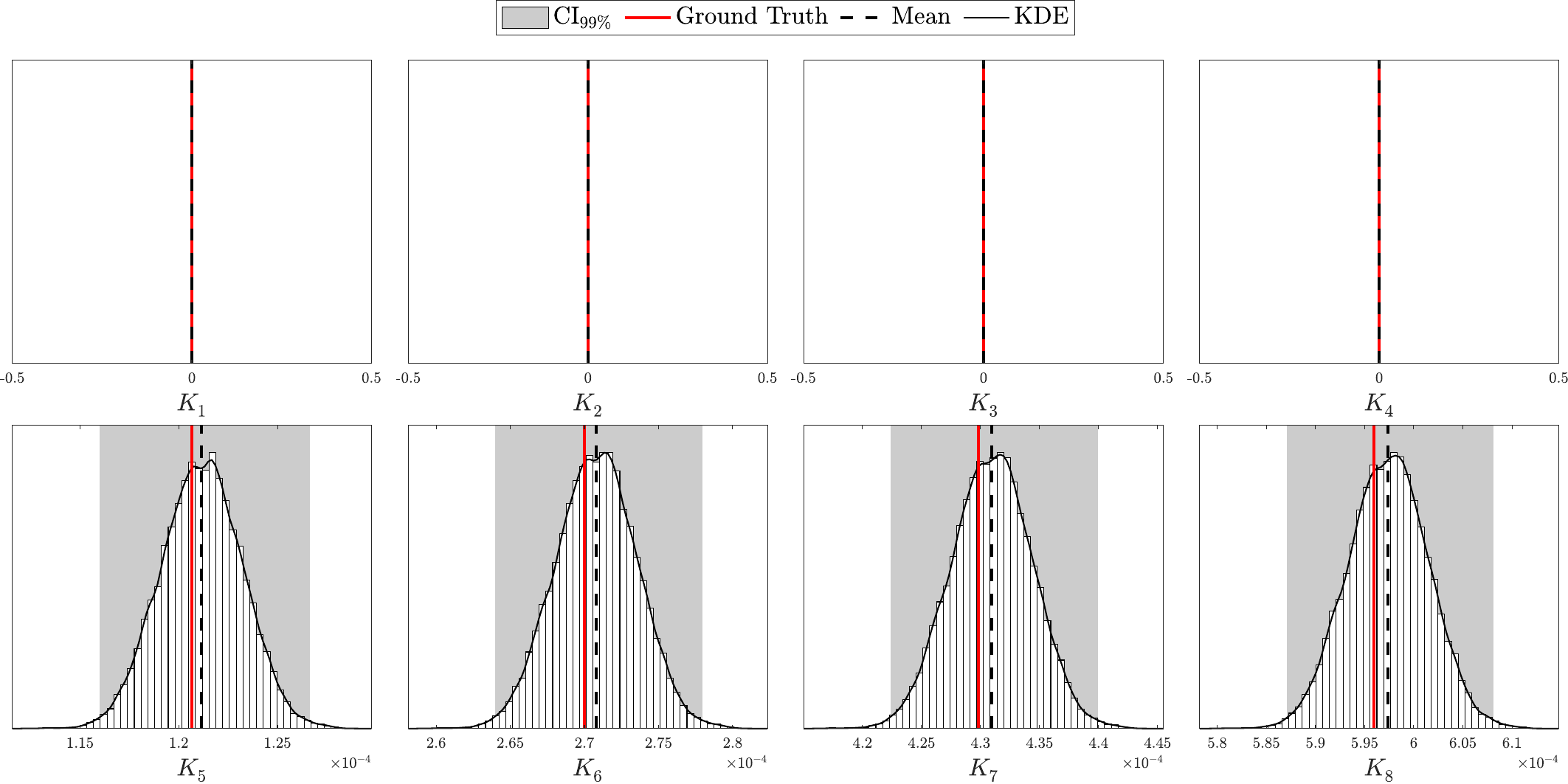}
    \caption{Posterior distributions of the secure-key rate for each intensity level under the HMM model, with after-pulsing present (data generated using \Cref{alg:simulate_hmm}). The plots show sample histograms, ground truths (red lines), empirical means (dashed lines), kernel density estimates (solid black lines), and 99\% confidence intervals (shaded regions).}
    \label{fig:eve_hmm_Ks}
\end{figure}

\begin{figure}[h!]
    \centering
    \includegraphics[width=\textwidth]{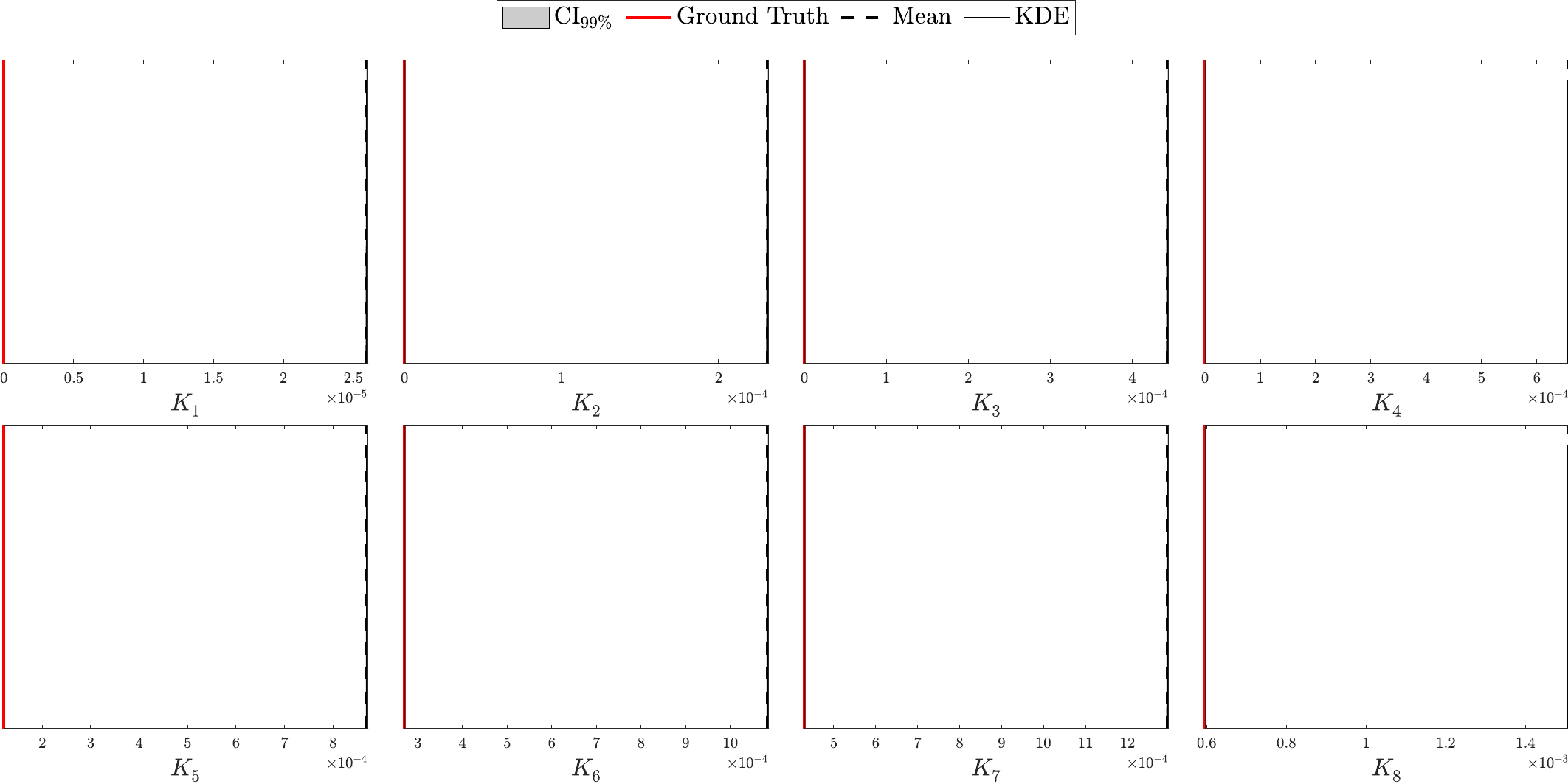}
    \caption{Posterior distributions of the secure-key rate for each intensity level under the i.i.d. model, with after-pulsing present (data generated using \Cref{alg:simulate_hmm}). The plots show sample histograms, ground truths (red lines), empirical means (dashed lines), kernel density estimates (solid black lines), and 99\% confidence intervals (shaded regions).}
    \label{fig:eve_hmm_Ks_with_iid}
\end{figure}

\subsubsection{Testing the Fully Bayesian Model}

Building on the previous evaluations, we next tested the proposed approach under a fully Bayesian framework, treating all system parameters as random variables with appropriate priors. The setup follows the same structure as the HMM experiment, but incorporates variability in all parameter values.

\paragraph{Experimental Setup}

Instead of using fixed parameter values as in \Cref{tab:main_parameters}, we introduced randomness to the system parameters by sampling them from distributions with 5\% standard deviation around their reported means. For example, if the reported value of $\alpha$ is 0.21, we sampled from a gamma distribution with parameters chosen to match the reported mean of $0.21$ and standard deviation of $0.042$. For parameters that are bounded, such as probabilities, we sampled from beta distributions with similarly matching means and standard deviations. The sampled values were assumed fixed for the duration of the session. Priors were chosen to reflect realistic uncertainties: beta distributions for bounded parameters and gamma distributions for semi-bounded ones. The priors were designed to have the same means as the reported parameters but with standard deviations doubled to 10\%, ensuring a reasonable level of variability. The simulation was run for $10^9$ pulses, with the sampling and inference conducted using the same Shrinking-Rank Slice Sampler (SRSS) methodology as in previous experiments. The inferred posterior distributions for the fully Bayesian model are shown in \Cref{fig:full_params}, while the secure-key rates derived from these posteriors are presented in \Cref{fig:full_Ks}.

\begin{figure}[h!]
    \centering
    \includegraphics[width=\textwidth]{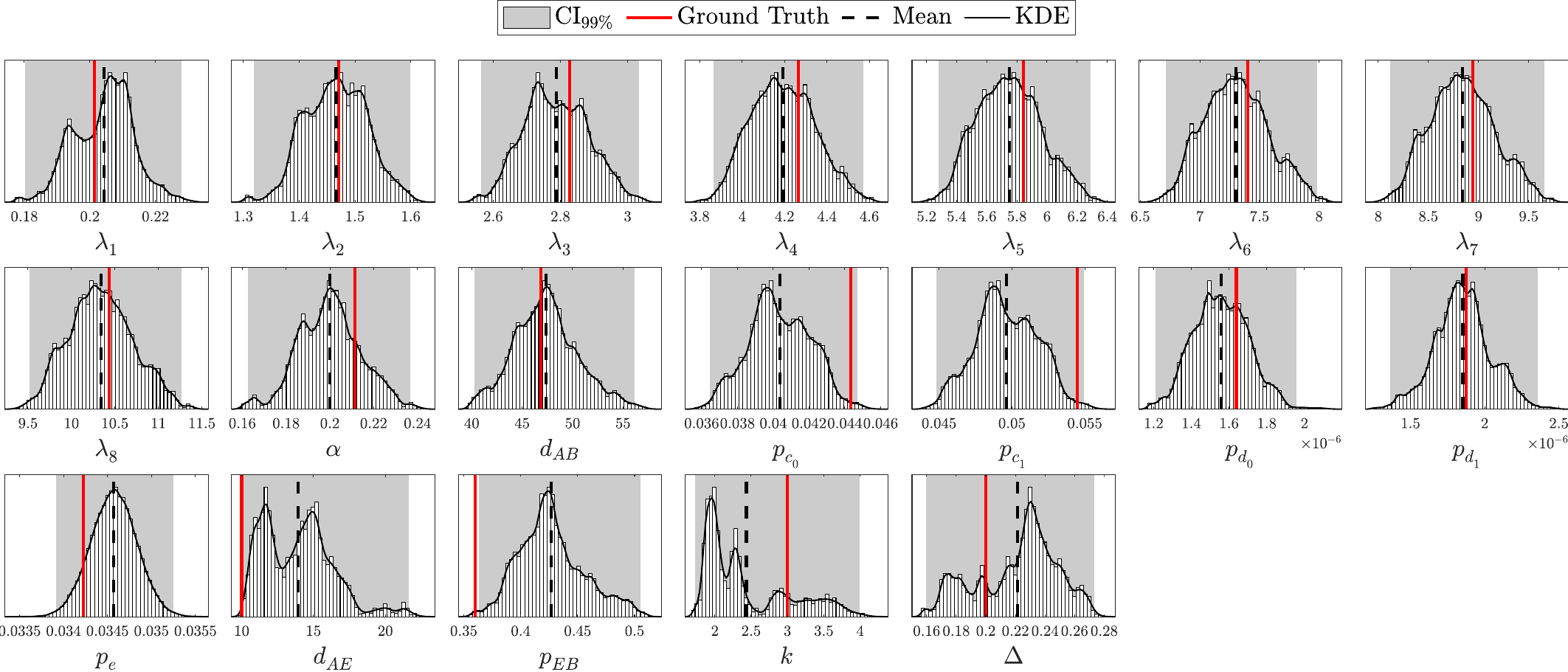}
    \caption{Posterior distributions of all parameters, treating all parameters as random variables, under the i.i.d. model (data generated using \Cref{alg:simulate_iid}). The plots show sample histograms, ground truths (red lines), empirical means (dashed lines), kernel density estimates (solid black lines), and 99\% confidence intervals (shaded regions).}
    \label{fig:full_params}
\end{figure}

\begin{figure}[h!]
    \centering
    \includegraphics[width=\textwidth]{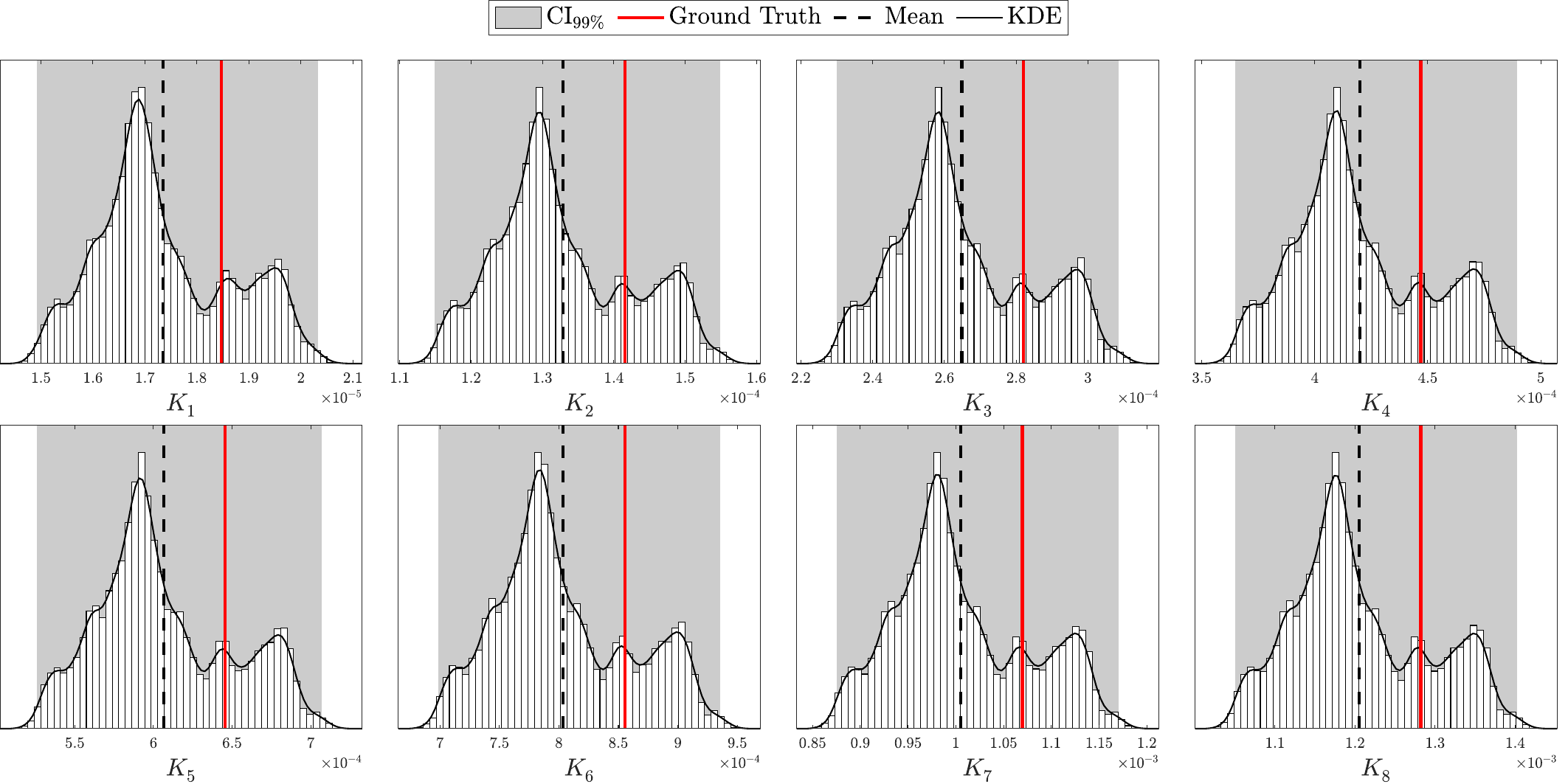}
    \caption{Posterior distributions of secure the secure-key rates for each intensity level, treating all parameters as random variables, under the i.i.d. model (data generated using \Cref{alg:simulate_iid}). The plots show sample histograms, ground truths (red lines), empirical means (dashed lines), kernel density estimates (solid black lines), and 99\% confidence intervals (shaded regions).}
    \label{fig:full_Ks}
\end{figure}

\paragraph{Testing the Fixed-Parameter Assumption}

To investigate the impact of incorrectly assuming fixed parameters, we repeated the analysis using data generated from the same randomly sampled parameters but assumed that these parameters were fixed at their expected values during inference. This assumption mirrors the earlier experiments with mismatched models, such as assuming independence when after-pulsing is present. The results for parameter inference under this fixed assumption are depicted in \Cref{fig:full_params_with_fixed}, and the corresponding secure-key rate distributions are shown in \Cref{fig:full_Ks_with_fixed}.

\begin{figure}[h!]
    \centering
    \includegraphics[width=\textwidth]{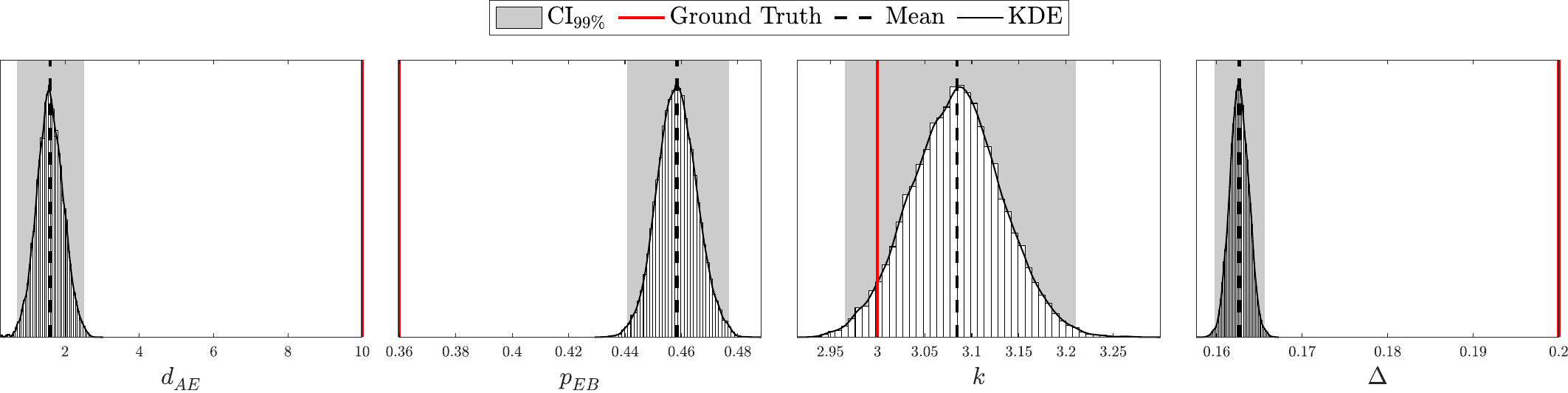}
    \caption{Posterior distributions of Eve's parameters under the i.i.d. model, assuming only the parameters in $\theta^{}_{E}$ are treated as random variables, while the others remain fixed (data generated using \Cref{alg:simulate_iid}). The plots include sample histograms, ground truth values (red lines), empirical means (dashed lines), kernel density estimates (solid black lines), and 99\% confidence intervals (shaded regions).}
    \label{fig:full_params_with_fixed}
\end{figure}

\begin{figure}[h!]
    \centering
    \includegraphics[width=\textwidth]{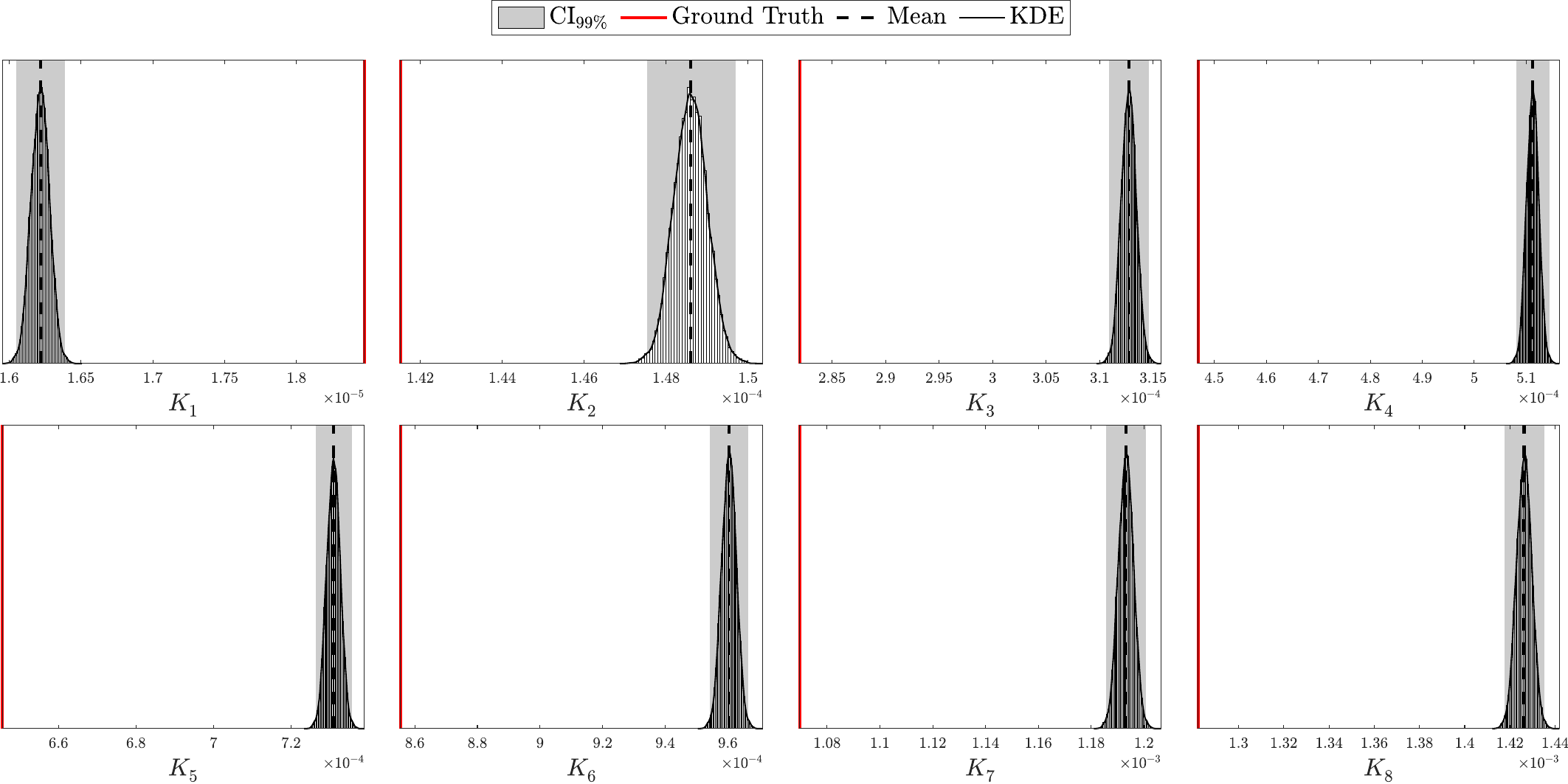}
    \caption{Posterior distributions of the secure-key rate for each intensity level under the i.i.d. model, assuming only the parameters in $\theta^{}_{E}$ are treated as random variables, while the others remain fixed (data generated using \Cref{alg:simulate_iid}). The plots include sample histograms, ground truth values (red lines), empirical means (dashed lines), kernel density estimates (solid black lines), and 99\% confidence intervals (shaded regions).}
    \label{fig:full_Ks_with_fixed}
\end{figure}

\paragraph{Analysis of Results}

The results in \Cref{fig:full_params} and \Cref{fig:full_Ks} demonstrate the effectiveness of the fully Bayesian model in accurately capturing the true distributions of both system parameters and secure-key rates. The posterior distributions align well with the true values, and the confidence intervals are tight, indicating high reliability in the inferred values. Importantly, the secure-key rate distributions remain above zero across all intensity levels, reflecting the model's capacity to utilize the full range of available data for key generation.

In contrast, \Cref{fig:full_params_with_fixed} and \Cref{fig:full_Ks_with_fixed} highlight the substantial inaccuracies that arise when assuming fixed parameters during inference. The posterior distributions show significant deviations from the true parameter values, and the secure-key rates are mostly severely overestimated. This underscores the risks of neglecting parameter variability in QKD systems, as such oversimplifications can lead to erroneous security assessments.

These findings highlight the robustness of the fully Bayesian framework in handling parameter uncertainties, providing a reliable foundation for analyzing real-world QKD systems where parameter variability is inevitable.

\section{Time Complexity Analysis}
\label{sec:time_complexity}

In this section, we analyze the computational complexity of evaluating the posterior for both the i.i.d. and Hidden Markov Model (HMM) frameworks. The vector of counts $C$ is assumed to be precomputed and available. Evaluating the posterior involves computing the likelihood function, prior distribution, and Jacobian determinant, as well as their derivatives with respect to the parameters.

\subsection{Computational Complexity for the i.i.d. Model}

The likelihood function for the i.i.d. model is given by:

\begin{align}
	\mathcal{L}(C \mid N, \theta) = \operatorname{Multinomial}\left(C\mid N, \mathbf{P}(\theta) \right),
\end{align}
where $\mathbf{P}(\theta)$ is the vector of detection probabilities computed based on the model parameters $\theta$ as in \Cref{eq:P_iid}.

\subsubsection{Function Evaluation}
The computational cost for evaluating $\mathbf{P}(\theta)$ involves calculating detection probabilities for all possible detection outcomes (00, 01, 10, 11) and matching/non-matching cases. The length of $\mathbf{P}(\theta)$ is therefore $8 N^{}_{\lambda}$. Thus, the time complexity for computing the likelihood in the i.i.d. case is:

\begin{align}
	T^{}_{\mathcal{L}} = \mathcal{O}(N^{}_{\lambda}).
\end{align}

Evaluating the prior distribution $\mathcal{P}(\theta \mid \theta^{}_{P})$ and the Jacobian determinant involves computations for each of the $N^{}_{\theta}$ parameters:

\begin{align}
	T^{}_{\mathcal{P}} = \mathcal{O}(N^{}_{\theta}), \quad T^{}_{\mathcal{J}} = \mathcal{O}(N^{}_{\theta}).
\end{align}

\subsubsection{Derivative Evaluations}
Computing the derivatives of the likelihood function with respect to each parameter involves calculating the derivatives of $\mathbf{P}(\theta)$ for each detection probability. This results in $N^{}_{\theta} \times 8 N^{}_{\lambda}$ derivative computations. The time complexity for computing the derivatives of the likelihood is therefore:

\begin{align}
	T^{}_{\partial\mathcal{L}} = \mathcal{O}(N^{}_{\theta} N^{}_{\lambda}).
\end{align}

The cost for evaluating the derivative of the prior and the Jacobian is linear in $N^{}_{\theta}$:
\begin{align}
	T^{}_{\partial \mathcal{P}} = \mathcal{O}(N^{}_{\theta}), \quad T^{}_{\partial \mathcal{J}} = \mathcal{O}(N^{}_{\theta}).
\end{align}
\subsubsection{Total Computational Cost}

Let $T^{}_{f}$ and $T^{}_{\partial f}$ represent the time complexity for evaluating the posterior and its derivatives, respectively. For the i.i.d. model, these complexities are:

\begin{align}
	T^{}_{f} & = T^{}_{\mathcal{L}} + T^{}_{\mathcal{P}} + T^{}_{\mathcal{J}} = \mathcal{O}(N^{}_{\lambda} + 2N^{}_{\theta}),\\
T^{}_{\partial f} & = T^{}_{\partial\mathcal{L}} + T^{}_{\partial\mathcal{P}} + T^{}_{\partial\mathcal{J}} = \mathcal{O}(N^{}_{\theta} N^{}_{\lambda}).
\end{align}

\subsection{Computational Complexity for the HMM Model}

In the HMM model, detection events are dependent due to after-pulsing effects. The likelihood function involves computing the stationary distribution of a Markov chain represented by the transition matrix $\mathbf{T}(\theta)$, with number of states $N^{}_{T} = 2|S|N^{}_{\lambda}$, then projecting it using the emission matrix $\mathbf{E}$. Since the matrix $\mathbf{T}(\theta)$ is constructed from the base transition matrix $\mathbf{T}(p, q)$ in \Cref{eq:T_p_q}, the density of the full matrix $\mathbf{T}(\theta)$ will be the same. Thus, from \Cref{eq:T_p_q} and \Cref{eq:emission}, we have the following density factors:

\begin{align}
	\Omega^{}_{\mathbf{T}} = \frac{49}{|S|^2} \approx 60.5\%, \quad \Omega^{}_{\mathbf{E}} = \frac{9}{|S||O|} = 25\%.
\end{align}

\subsubsection{Function Evaluation}
The likelihood function involves computing $\hat{\mathbf{P}}(\theta)$ in \Cref{eq:P_hmm}, which has three main steps, (1) constructing the transition matrix $\mathbf{T}(\theta)$, (2) computing the stationary distribution $\bar{\mathbf{v}}(\theta)$, and (3) projecting the stationary distribution onto the observable probabilities using the emission matrix $\mathbf{E}$.

\paragraph{Transition Matrix}
Constructing $\mathbf{T}(\theta)$ has a complexity proportional to the number of non-zero elements:

\begin{align}
	T^{}_{\mathbf{T}(\theta)} = \mathcal{O}(\Omega^{}_{\mathbf{T}} N^{2}_{T}) = \mathcal{O}(N^{2}_{\lambda}).
\end{align}

\paragraph{Stationary Distribution}
The stationary distribution is computed using an iterative eigenvalue solver, requiring $N^{}_{I}$ iterations. The complexity is therefore:

\begin{align}
	T^{}_{\bar{\mathbf{v}}(\theta)} = \mathcal{O}(N^{}_{I} \Omega^{}_{\mathbf{T}} N^{2}_{T}) = \mathcal{O}(N^{}_{I} N^{2}_{\lambda}).
\end{align}

\paragraph{Emission Projection}
Projecting $\bar{\mathbf{v}}(\theta)$ has a complexity:

\begin{align}
	T^{}_{\mathbf{E}^{\top} \bar{\mathbf{V}}(\theta)} = \mathcal{O}(\Omega^{}_{\mathbf{E}}|O||S|N^{}_{\lambda}) = \mathcal{O}(N^{}_{\lambda}).
\end{align}

The total time complexity for evaluating the likelihood is:

\begin{align}
	T^{}_{\hat{\mathcal{L}}} = T^{}_{\mathbf{T}(\theta)} + T^{}_{\bar{\mathbf{v}}(\theta)} + T^{}_{\mathbf{E}^{\top} \bar{\mathbf{V}}(\theta)} = O(N^{}_{I} N^{2}_{\lambda}).
\end{align}

\subsubsection{Derivative Evaluations}
Computing the derivatives of the likelihood involves taking the derivatives of the above steps with respect to each parameter, which have the same time complexities

\begin{align}
T^{}_{\partial \mathbf{T}(\theta) / \partial \theta^{}_{i}} & = T^{}_{\mathbf{T}(\theta)},\\
T^{}_{\partial \bar{\mathbf{v}}(\theta) / \partial \theta^{}_{i}} & = T^{}_{\bar{\mathbf{v}}(\theta)},\\
T^{}_{\mathbf{E}^{\top} \partial \bar{\mathbf{V}}(\theta) / \partial \theta^{}_{i}} & = T^{}_{\mathbf{E}^{\top} \bar{\mathbf{V}}(\theta)}.	
\end{align}

The total derivative complexity is:

\begin{align}
	T^{}_{\partial\hat{\mathcal{L}}} = N^{}_{\theta} \left( T^{}_{\partial \mathbf{T}(\theta) / \partial \theta^{}_{i}} + T^{}_{\partial \bar{\mathbf{v}}(\theta) / \partial \theta^{}_{i}} + T^{}_{\mathbf{E}^{\top} \partial \bar{\mathbf{V}}(\theta) / \partial \theta^{}_{i}} \right) = \mathcal{O}(N^{}_{\theta} N^{}_{I} N^{2}_{\lambda}).
\end{align}

\subsubsection{Total Computational Cost}
Let $T^{}_{\hat{f}}$ and $T^{}_{\partial \hat{f}}$ represent the time complexity for evaluating the HMM posterior and its derivatives, respectively. For the i.i.d. model, these complexities are:

\begin{align}
	T^{}_{\hat{f}} & = T^{}_{\hat{\mathcal{L}}} + T^{}_{\mathcal{P}} + T^{}_{\mathcal{J}} = \mathcal{O}(N^{}_{I} N^{2}_{\lambda} + 2N^{}_{\theta}),
\end{align}

\begin{align}
	T^{}_{\partial \hat{f}} & = T^{}_{\partial\hat{\mathcal{L}}} + T^{}_{\partial\mathcal{P}} + T^{}_{\partial\mathcal{J}}  = \mathcal{O}(N^{}_{\theta} N^{}_{I} N^{2}_{\lambda}).
\end{align}

\subsection{Analysis of Time Complexities}
\label{sec:complexity_discussion}

We summarize the computational complexities for both the i.i.d. and HMM models in Table~\ref{tab:complexity} (assuming $N^{}_{\lambda}> 2 N^{}_{\theta}$ as we do in practice). The i.i.d. model exhibits linear scaling with the number of intensity levels $N^{}_{\lambda}$ and the number of parameters $N^{}_{\theta}$ for function evaluation, and scales linearly with their product for derivative computation. In contrast, the HMM model shows a quadratic dependence on $N^{}_{T}$, the number of states in the Markov chain, which itself is proportional to $N^{}_{\lambda}$.

\begin{table}[h!]
    \centering
    \renewcommand{\arraystretch}{1.5}
    \begin{tabular}{l c c}
        \hline
        \textbf{Model} & \textbf{Function Evaluation} & \textbf{Derivative Evaluation} \\ \hline
        i.i.d. & $\mathcal{O}(N^{}_{\lambda})$ & $\mathcal{O}(N^{}_{\theta} N^{}_{\lambda})$ \\ \hline
        HMM & $\mathcal{O}(N^{}_{I}  N^{2}_{\lambda})$ & $\mathcal{O}(N^{}_{\theta} N^{}_{I}  N^{2}_{\lambda})$ \\ \hline
    \end{tabular}
    \caption{Summary of computational complexities for evaluating the posterior and its derivatives under the i.i.d. and HMM models (assuming $N{\lambda}>2N^{}_{\theta}$).}
    \label{tab:complexity}
\end{table}

Analyzing these results, we observe that the HMM model incurs significantly higher computational costs compared to the i.i.d. model. The quadratic dependence on $N^{}_{\lambda}$ in the HMM model arises from the need to compute the stationary distribution its derivatives. Despite the higher computational complexity, the HMM model captures the dependencies introduced by after-pulsing effects and other temporal correlations in the detection events, which are neglected in the i.i.d. framework.

While the HMM model is computationally more expensive than the i.i.d. model, it enables accurate analysis of QKD systems with detector dependencies and temporal correlations, such as after-pulsing. This approach allows us to assess whether after-pulsing significantly impacts error rates and evaluate trade-offs, such as increasing the detector dead time to lower the key rate or accepting additional post-processing time to maintain a higher key rate. Such flexibility ensures tailored solutions based on specific application requirements.

In addition, the added computational effort unlocks the ability to analyze scenarios that traditional methods, like the decoy-state protocol, cannot handle. This framework provides a practical means to push the limits of secure key distribution, such as operating at higher intensities for longer distances. By overcoming fundamental limitations of existing protocols, this approach prioritizes expanding the boundaries of what is possible in secure key distribution. Once these new limits are achieved, further optimizations can focus on improving efficiency, ensuring that both capability and performance evolve together.

\section{Conclusion}
\label{sec:conclusion}

In this work, we have presented a comprehensive probabilistic modeling framework for quantum key distribution (QKD) protocols, enabling accurate assessment of QKD performance under various conditions. By incorporating detailed physical processes such as asymmetric detectors, after-pulsing effects, and variable parameters, our approach allows for a deeper understanding of the factors influencing QKD efficacy.

The proposed framework not only provides precise estimations but also offers the flexibility to simplify certain assumptions when appropriate. For instance, by considering symmetric detectors or neglecting after-pulsing effects, one can accurately assess whether specific parameters can be considered negligible in certain scenarios. This adaptability is crucial for tailoring the model to different practical situations and optimizing QKD implementations accordingly.

Importantly, our analysis has revealed that some of the common assumptions made in decoy-state protocols may not hold in all cases. Specifically, we demonstrated that under certain conditions—such as long-distance communication or equipment imperfections—these assumptions can lead to inaccuracies in the estimated secure key rate and overall system performance. By highlighting these discrepancies, our work underscores the necessity of using more detailed probabilistic models to ensure the reliability and security of QKD systems.

While our work is grounded in theoretical modeling, it is not purely theoretical. In particular scenarios—especially those involving moderate distances and well-calibrated equipment—the proposed methods are both applicable and practical. We acknowledge that in some cases, such as very long distances or with poor and noisy equipment, the approach may require large sample sizes or extended processing times, potentially limiting immediate applicability. Nonetheless, by making these advanced modeling techniques available, we have paved the way for exploring the feasibility of long-distance QKD under more challenging conditions.

Our accurate simulations demonstrate the practical potential of the proposed methods, showcasing that long-distance QKD is possible in some scenarios. The theoretical tools developed here enable a more probabilistic treatment of QKD, as we have successfully constructed the underlying distribution governing the system's behavior. This advancement opens new avenues for applying probabilistic techniques to QKD analysis and design. Furthermore, while more comprehensive attack models have been proposed (e.g., \cite{lim2013concise}), our work specifically targets the generalized PNS attack model. This focus allows for a rigorous evaluation of system performance under a well-defined threat model and lays the groundwork for extending inference-based approaches to even broader attack scenarios in future work.

For future work, there is significant scope in optimizing protocol parameters. One promising direction is to determine the optimal number of intensity levels and their specific values, as well as establishing the minimum session length required to guarantee a desired variance in key parameters like the proportion of intercepted pulses by Eve, $\Delta$. Additionally, implementing the protocol in a physical system would further validate our theoretical models and simulations, bridging the gap between theory and real-world applications.

In conclusion, this work represents a significant contribution to the modeling and analysis of QKD systems. By providing a detailed probabilistic framework, we have enabled more accurate assessments of QKD performance and highlighted the limitations of certain simplifying assumptions commonly used in decoy-state protocols. This foundation lays the groundwork for future developments that will further enhance the practicality and reliability of QKD technologies, making long-distance quantum communication more attainable.

\FloatBarrier
\bibliographystyle{naturemag}
\bibliography{full}

\end{document}